\numberwithin{equation}{section}
\newtheorem{Theorem}{Theorem}[section]
\newtheorem{Corollary}[Theorem]{Corollary}
\newtheorem{Lemma}[Theorem]{Lemma}
\newtheorem{Proposition}[Theorem]{Proposition}
 { \theoremstyle{definition}
\newtheorem{Definition}[Theorem]{Definition}
\newtheorem{Example}[Theorem]{Example}
\newtheorem{Remark}[Theorem]{Remark} }
\newcommand{\Dleft}{[\hspace{-1.5pt}[}
\newcommand{\Dright}{]\hspace{-1.5pt}]}
\newcommand{\SN}[1]{\Dleft #1 \Dright}
\newcommand{\Id}{\mathbbmss{1}}
\newcommand{\InHom}{\mbox{$\underline{\Hom}$}}
\newcommand{\rmi}{ \textnormal{i}}
\newcommand{\rmd}{\textnormal{d}}
\newcommand{\N}{\mathbb{N}}
\newcommand{\op}[1]{\!\!\mathop{\rm ~#1}\nolimits}
\newcommand{\cY}{\mathcal{Y}}
\newcommand{\rmo}{\textnormal{op}}
\newcommand{\catname}[1]{\textnormal{\texttt{#1}}}
\font\black=cmbx10 \font\sblack=cmbx7 \font\ssblack=cmbx5 \font\blackital=cmmib10 \skewchar\blackital='177
\font\sblackital=cmmib7 \skewchar\sblackital='177 \font\ssblackital=cmmib5 \skewchar\ssblackital='177
\font\sanss=cmss10 \font\ssanss=cmss8 
\font\sssanss=cmss8 scaled 600 \font\blackboard=msbm10 \font\sblackboard=msbm7 \font\ssblackboard=msbm5
\font\caligr=eusm10 \font\scaligr=eusm7 \font\sscaligr=eusm5  \font\fraktur=eufm10
\font\sfraktur=eufm7 \font\ssfraktur=eufm5 
\font\bsymb=cmsy10 scaled\magstep2
\def\all#1{\setbox0=\hbox{\lower1.5pt\hbox{\bsymb
 \char"38}}\setbox1=\hbox{$_{#1}$} \box0\lower2pt\box1}
\def\exi#1{\setbox0=\hbox{\lower1.5pt\hbox{\bsymb \char"39}}
 \setbox1=\hbox{$_{#1}$} \box0\lower2pt\box1}
\def\tx#1{{\fam0\relax#1}}
\def\sss#1{{\fam\ssfam\relax#1}}
\def\hpb#1{\setbox0=\hbox{${#1}$}
 \copy0 \kern-\wd0 \kern.2pt \box0}
\def\vpb#1{\setbox0=\hbox{${#1}$}
 \copy0 \kern-\wd0 \raise.08pt \box0}
\def\pmb#1{\setbox0\hbox{${#1}$} \copy0 \kern-\wd0 \kern.2pt \box0}
\def\pmbb#1{\setbox0\hbox{${#1}$} \copy0 \kern-\wd0
 \kern.2pt \copy0 \kern-\wd0 \kern.2pt \box0}
\def\pmbbb#1{\setbox0\hbox{${#1}$} \copy0 \kern-\wd0
 \kern.2pt \copy0 \kern-\wd0 \kern.2pt
 \copy0 \kern-\wd0 \kern.2pt \box0}
\def\pmxb#1{\setbox0\hbox{${#1}$} \copy0 \kern-\wd0
 \kern.2pt \copy0 \kern-\wd0 \kern.2pt
 \copy0 \kern-\wd0 \kern.2pt \copy0 \kern-\wd0 \kern.2pt \box0}
\def\pmxbb#1{\setbox0\hbox{${#1}$} \copy0 \kern-\wd0 \kern.2pt
 \copy0 \kern-\wd0 \kern.2pt
 \copy0 \kern-\wd0 \kern.2pt \copy0 \kern-\wd0 \kern.2pt
 \copy0 \kern-\wd0 \kern.2pt \box0}
\mathchardef\za="710B 
\mathchardef\zb="710C 
\mathchardef\zg="710D 
\mathchardef\zd="710E 
\mathchardef\zve="710F 
\mathchardef\zz="7110 
\mathchardef\zh="7111 
\mathchardef\zvy="7112 
\mathchardef\zi="7113 
\mathchardef\zk="7114 
\mathchardef\zl="7115 
\mathchardef\zm="7116 
\mathchardef\zn="7117 
\mathchardef\zx="7118 
\mathchardef\zp="7119 
\mathchardef\zr="711A 
\mathchardef\zs="711B 
\mathchardef\zt="711C 
\mathchardef\zu="711D 
\mathchardef\zvf="711E 
\mathchardef\zq="711F 
\mathchardef\zc="7120 
\mathchardef\zw="7121 
\mathchardef\ze="7122 
\mathchardef\zy="7123 
\mathchardef\zf="7124 
\mathchardef\zvr="7125 
\mathchardef\zvs="7126 
\mathchardef\zf="7127 
\mathchardef\zG="7000 
\mathchardef\zD="7001 
\mathchardef\zY="7002 
\mathchardef\zL="7003 
\mathchardef\zX="7004 
\mathchardef\zP="7005 
\mathchardef\zS="7006 
\mathchardef\zU="7007 
\mathchardef\zF="7008 
\mathchardef\zW="700A 
\mathchardef\zC="7009 
\newcommand{\R}{{\mathbb R}}
\newcommand{\C}{{\mathbb C}}
\newcommand{\Z}{{\mathbb Z}}
\newcommand{\s}{{\textstyle *}}
\newcommand{\A}{{\mathcal A}}
\def\Hom{\textsf{Hom}}
\def\cD{{\mathcal{D}}}
\def\ul{\underline}
\def\la{\langle}
\def\sT{{\sss T}}
\def\st{{\sss t}}
\def\xi{\tx{i}}
\def\cM{\cal M}
\def\cD{\cal D}
\def\s*{{\scriptstyle *}}
\def\cO{\mathcal{O}}
\def\cF{\mathcal{F}}
\def\cM{\mathcal{M}}
\def\ul{\underline}
\newcommand{\0}{\otimes}
\newcommand{\id}{\op{id}}
\begin{document}
\allowdisplaybreaks

\newcommand{\arXivNumber}{1906.09834}

\renewcommand{\PaperNumber}{002}

\FirstPageHeading

\ShortArticleName{The Schwarz--Voronov Embedding of ${\mathbb Z}_{2}^{n}$-Manifolds}

\ArticleName{The Schwarz--Voronov Embedding of $\boldsymbol{{\mathbb Z}_{2}^{n}}$-Manifolds}

\Author{Andrew James BRUCE, Eduardo IBARGUENGOYTIA and Norbert PONCIN}
\AuthorNameForHeading{A.J.~Bruce, E.~Ibarguengoytia and N.~Poncin}
\Address{Mathematics Research Unit, University of Luxembourg, Maison du Nombre 6,\\ avenue de la Fonte, L-4364 Esch-sur-Alzette, Luxembourg}
\Email{\href{mailto:andrew.bruce@uni.lu}{andrew.bruce@uni.lu}, \href{mailto:eduardo.ibarguengoytia@uni.lu}{eduardo.ibarguengoytia@uni.lu}, \href{mailto:norbert.poncin@uni.lu}{norbert.poncin@uni.lu}}

\ArticleDates{Received July 10, 2019, in final form December 30, 2019; Published online January 08, 2020}

\Abstract{Informally, ${\mathbb Z}_2^n$-manifolds are `manifolds' with ${\mathbb Z}_2^n$-graded coordinates and a sign rule determined by the standard scalar product of their ${\mathbb Z}_2^n$-degrees. Such manifolds can be understood in a sheaf-theoretic framework, as supermanifolds can, but with significant differences, in particular in integration theory. In this paper, we reformulate the notion of a~${\mathbb Z}_2^n$-manifold within a categorical framework via the functor of points. We show that it is sufficient to consider ${\mathbb Z}_2^n$-points, i.e., trivial ${\mathbb Z}_2^n$-manifolds for which the reduced manifold is just a single point, as `probes' when employing the functor of points. This allows us to construct a fully faithful restricted Yoneda embedding of the category of ${\mathbb Z}_2^n$-manifolds into a~subcategory of contravariant functors from the category of ${\mathbb Z}_2^n$-points to a category of Fr\'echet manifolds over algebras. We refer to this embedding as the \emph{Schwarz--Voronov embedding}. We further prove that the category of ${\mathbb Z}_2^n$-manifolds is equivalent to the full subcategory of locally trivial functors in the preceding subcategory.}

\Keywords{supergeometry; superalgebra; ringed spaces; higher grading; functor of points}

\Classification{58C50; 58D1; 14A22}

\section{Introduction}
Various notions of \emph{graded geometry} play an important r\^ole in mathematical physics and can often provide further insight into classical geometric constructions. For example, supermanifolds, as pioneered by Berezin and collaborators, are essential in describing quasi-classical systems with both bosonic and fermionic degrees of freedom. Very loosely, supermanifolds are `manifolds' for which the structure sheaf is $\Z_2$-graded. Such geometries are of fundamental importance in perturbative string theory, supergravity, and the BV-formalism, for example. While the theory of supermanifolds is firmly rooted in theoretical physics, it has since become a respectable area of mathematical research. Indeed, supermanifolds allow for an economical description of Lie algebroids, Courant algebroids as well as various related structures, many of which are of direct interest to physics. We will not elaborate any further and urge the reader to consult the ever-expanding literature.

Interestingly, {\it $\Z_{2}^{n}$-gradings} ($\Z_2^n=\Z_2^{\times n}$, $n \geq 2$) can be found in the theory of parastatistics, see for example \cite{Druhl:1970,Green:1953,Greenberg:1965,Yang:2001}, behind an alternative approach to supersymmetry \cite{Tolstoy:2013}, in relation to the symmetries of the L\'{e}vy-Lebond equation \cite{Aizawa:2016}, and behind the theory of mixed symmetry tensors \cite{Bruce:2018a}. Generalizations of the super Schr\"{o}dinger algebra (see \cite{Aizawa:2017}) and the super Poincar\'{e} algebra (see~\cite{Bruce:2019}) have also appeared in the literature. That said, it is unknown if these `higher gradings' are of the same importance in fundamental physics as $\Z_2$-gradings. It must also be remarked that the quaternions and more general Clifford algebras can be understood as $\Z_2^n$-graded $\Z_2^n$-commutative (see below) algebras \cite{Albuquerque:1999, Albuquerque:2002}. Thus, one may expect $\Z_{2}^{n}$-gradings to be important in studying Clifford algebras and modules, though the implications for classical and quantum field theory remain as of yet unexplored. It should be further mentioned that any `sign rule' can be understood in terms of a $\Z_{2}^{n}$-grading (see \cite{Covolo:2016}). A natural question here is to what extent can {\it $\Z_2^n$-graded geometry} be developed.

\looseness=-1 A locally ringed space approach to {\it $\Z_{2}^{n}$-manifolds} has been constructed in a series of papers by Bruce, Covolo, Grabowski, Kwok, Ovsienko \& Poncin \cite{Bruce:2018a, Bruce:2018b, Covolo:2016,Covolo:2016a,Covolo:2016b,Covolo:2016c,COP, Poncin:2016}. It includes the $\Z_2^n$-differential-calculus, the $\Z_2^n$-Berezinian, as well as a low-dimensional $\Z_2^n$-integration-theory. Integration on $\Z_2^n$-manifolds turns out to be fundamentally different from integration on $\Z_2^1$-manifolds (i.e., supermanifolds) and is currently being constructed in full gene\-ra\-lity by authors of the present paper. The novel aspect of integration on $\Z_2^n$-manifolds is integration with respect to the non-zero degree even parameters (for some preliminary results see~\cite{Poncin:2016}).

Loosely, $\Z_{2}^{n}$-manifolds are `manifolds' for which the structure sheaf has a $\Z_{2}^{n}$-grading and the commutation rule for the local coordinates comes from the standard scalar product of their $\Z_2^n$-degrees. This is not just a trivial or straightforward generalization of the notion of a supermanifold as one has to deal with formal coordinates that anticommute with other formal coordinates, but are themselves \emph{not} nilpotent. Due to the presence of formal variables that are not nilpotent, formal power series are used rather than polynomials (for standard supermanifolds all functions are polynomial in the Grassmann odd variables). The use of formal power series is unavoidable in order to have a well-defined local theory (see \cite{Covolo:2016}), and a well-defined differential calculus (see \cite{Covolo:2016b}). Heuristically, one can view supermanifolds as `mild' noncommutative geo\-metries: the noncommutativity is seen simply as anticommutativity of the odd coordinates. In a~similar vein, one can view $\Z_2^n$-manifolds ($n >1$) as examples of `mild' nonsupercommutative geometries: the sign rule involved is not determined by the coordinates being even or odd, i.e., by their total degree, but by their $\Z_2^n$-degree.

The idea of understanding supermanifolds, i.e., $\Z_2^1$-manifolds, as `Grassmann algebra valued manifolds' can be traced back to the pioneering work of Berezin \cite{Berezin:1987}. An informal understanding along these lines has continuously been employed in physics, where one chooses a `large enough' Grassmann algebra to capture the aspects to the theory needed. This informal understanding leads to the DeWitt--Rogers approach to supermanifolds which seemed to avoid the theory of locally ringed spaces altogether. However, arbitrariness in the choice of the underlying Grassmann algebra is somewhat displeasing. Furthermore, developing the mathematical consistency of DeWitt--Rogers supermanifolds takes one back to the sheaf-theoretic approach of Berezin \& Leites: for a comparison of these approaches, the reader can consult Rogers \cite{Rogers:2007} or Schmitt \cite{Schmitt:1997}. From a physics perspective, there seems no compelling reason to think that there is any physical significance to the choice of underlying Grassmann algebra. To quote Schmitt \cite{Schmitt:1997}: ``However, no one has ever measured a Grassmann number, everyone measures real numbers". The solution here is, following Schwarz \& Voronov \cite{Schwarz:1982,Schwarz:1984,Voronov:1984}, not to fix the underlying Grassmann algebra, but rather understand supermanifolds as functors from the category of finite-dimensional Grassmann algebras to, in the first instance, the category of sets. For a given, but arbitrary, Grassmann algebra $\Lambda$, one speaks of the set of $\Lambda$-points of a supermanifold. It is well known that the set of $\Lambda$-points of a given supermanifold comes with the further structure of a $\Lambda_0$-smooth manifold. That is we, in fact, do not only have a set, but also the structure of a finite-dimensional manifold whose tangent spaces are $\Lambda_0$-modules. Moreover, thinking of supermanifolds as functors, not all natural transformations between the $\Lambda$-points correspond to genuine supermanifold morphisms, only those that respect the $\Lambda_0$-smooth structure do. A~similar approach is used by Molotkov~\cite{Molotkov:1986}, who defines Banach supermanifolds roughly speaking as specific functors from the category of finite-dimensional Grassmann algebras to the category of smooth Banach manifolds of a particular type. The classical roots of these ideas go back to Weil~\cite{Weil:1953} who considered the $A$-points of a~mani\-fold as the set of maps from the algebra of smooth functions on the mani\-fold to a~specified finite-dimensional commutative local algebra $A$. Today one refers to \emph{Weil functors} and these have long been utilised in the theory of jet structures over manifolds, see for example~\cite{Kolar:1993}.

\looseness=1 In this paper, we study {\it Grothendieck's functor of points} \cite{Grothendieck:1973a} of a $\Z_2^n$-manifold $M$, which is a contravariant functor $M(-)$ from the category of $\Z_2^n$-manifolds to the category of sets, and restrict it to the category of \emph{$\Z_2^n$-points}, i.e., trivial $\Z_2^n$-manifolds $\R^{0|\ul q}$ that have no degree zero coordinates. More precisely, we consider the restricted \emph{Yoneda functor} $M \mapsto M(-)$ from the category of $\Z_2^n$-manifolds to the category of contravariant functors from $\Z_2^n$-points to sets. Dual to $\Z_2^n$-points $\R^{0|\ul q}$ are what we will call \emph{$\Z_2^n$-Grassmann algebras} $\zL$ (see Definition~\ref{def:Z2nGrassmann}). The aim of this paper is to carefully prove and generalise the main results of Schwarz \& Voronov \cite{Schwarz:1984,Voronov:1984} to the `higher graded' setting. In particular, we show that $\Z_2^n$-points $\R^{0|\ul q}\simeq \zL$ are actually sufficient to act as `probes' when employing the functor of points (see Theorem \ref{thm:generatingset}). However, not all natural transformations $\zh_\zL\colon M(\zL)\to N(\zL)$ (where $\zL$ is a variable) between the sets $M(\zL)$, $N(\zL)$ of $\Lambda$-points correspond to morphisms $\phi\colon M\to N$ of the underlying $\Z_2^n$-manifolds. By carefully analysing the image of the functor of points, we prove that the set $M(\zL)$ of $\Lambda$-points of a $\Z_2^n$-manifold $M$ comes with the extra structure of a Fr\'echet $\Lambda_{0}$-manifold (see Theorem~\ref{thm:ManStruct}; by $\Lambda_{0}$ we mean the subalgebra of degree zero elements of the $\Z_2^n$-Grassmann algebra $\zL$). Note that we are not trying to define infinite-dimensional $\Z_2^n$-manifolds, yet infinite-dimensional manifolds, specifically Fr\'{e}chet manifolds, are fundamental to our paper. Moreover, we show that natural transformations $\zh_\zL$ between sets of $\Lambda$-points arise from morphisms $\phi$ of $\Z_2^n$-manifolds if and only if they respect the Fr\'{e}chet $\Lambda_{0}$-manifold structures (see Proposition~\ref{prop:NatMorp}). By restricting accordingly the natural transformations allowed, we get a {\it full and faithful embedding} of the category of $\Z^n_2$-manifolds into the category of contravariant functors from the category of $\Z_2^n$-points to the category of nuclear Fr\'{e}chet manifolds over nuclear Fr\'echet algebras. This embedding we refer to as the \emph{Schwarz--Voronov embedding} (see Definition~\ref{def:SVMembedding}). We finally study representability of such contravariant functors and prove that the category of $\Z_2^n$-manifolds is equivalent to the full subcategory of locally trivial functors in the just depicted subcategory of contravariant functors from $\Z_2^n$-points to nuclear Fr\'echet manifolds (see Theorem~\ref{trm:ReprCon}).

\textbf{Methodology:} As $\Z_2^n$-manifolds have well defined local models, we work with $\Z_2^n$-domains and then `globalize' the results to general $\Z_2^n$-manifolds. We modify the approach of Schwarz \& Voronov \cite{Schwarz:1984,Voronov:1984} and draw on Balduzzi, Carmeli \& Fioresi \cite{Balduzzi:2010,Balduzzi:2013} and Konechny \& Schwarz \cite{Konechny:1998,Konechny:2000}, making all changes necessary to encompass $\Z_2^n$-manifolds. Let us mention that Balduzzi, Carmeli \& Fioresi study functors from the category of super Weil algebras and not that of Grassmann algebras. However, if we truly want to build a restricted Yoneda embedding, the source category of the functors of points must be a category of algebras that is opposite to some category of supermanifolds~-- and super Weil algebras are not the algebras of functions of some class of supermanifolds (unless they are Grassmann algebras). Moreover, the idea behind our restriction of the Yoneda embedding is `the smaller the class of test algebras, the better'~-- which points again to Grassmann algebras as being the somewhat privileged objects. The most striking difference between supermanifolds and $\Z_2^n$-manifolds ($n>1$) is that we are forced, due to the presence of non-zero degree even coordinates, to work with (infinite-dimensional) Fr\'{e}chet spaces, algebras and manifolds. Interestingly, nuclearity of the values $M(\zL)$ of the functor of points of a $\Z_2^n$-manifold $M$, i.e., nuclearity of the local models of the Fr\'echet $\zL_0$-manifolds~$M(\zL)$ or of their tangent spaces, does not play a r\^{o}le in the proofs of the statements in this paper. More precisely, the functor of points $M(-)$ has values $M(\zL)$ that {\it are} nuclear Fr\'echet $\zL_0$-manifolds. Conversely, a functor $\cF(-)$ whose values $\cF(\zL)$ are Fr\'echet $\zL_0$-manifolds {\it and} which is representable, has nuclear values (nuclearity is encrypted in the representability condition (see Theorem \ref{trm:ReprCon})). Although nuclearity of the tangent spaces of the mani\-folds~$M(\zL)$ is not explicitly used throughout this work, we do {\it not} at all claim that nuclearity is not of importance in the theory of $\Z_2^n$-manifolds. For instance, the function sheaf of a $\Z_2^n$-manifold {\it is} a~nuclear Fr\'echet sheaf of $\Z_2^n$-graded $\Z_2^n$-commutative algebras -- a fact that is crucial for product $\Z_2^n$-manifolds and $\Z_2^n$-Lie groups \cite{Bruce:2018b}.

\textbf{Applications:} The functor of points has been used informally in physics as from the very beginning. It is actually of importance in situations where there is no good notion of point (see also Section~\ref{FoP}), for instance in algebraic geometry and in super- and $\Z_2^n$-geometry. Construc\-ting a~set-valued functor and showing that it is representable as a locally ringed space, e.g., a~scheme or a $\Z_2^n$-manifold, is often easier than building that scheme or manifold directly. Functors that are not representable can be interpreted as generalised schemes or generalised $\Z_2^n$-manifolds. Further, the category of functors is better behaved than the corresponding category of supermanifolds or of other types of spaces. Also homotopical algebraic geometry \cite{TVI, TVII}, as well as its generalisation that goes under the name of homotopical algebraic $\cD$-geometry (where~$\cD$ refers to differential operators) \cite{BPP:KTR, BPP:HAC}, are fully based on the functor of points approach. Finally, the functor of points turns out to be an indispensable tool when it comes to the investigation of $\Z_2^n$-Lie groups and their actions on $\Z_2^n$-manifolds, of geometric $\Z_2^n$-vector bundles \dots. These concepts are explored in upcoming texts that are currently being written down.

 \textbf{Arrangement:} In Section \ref{sec:Z2nGeometry}, we review the basic tenets of $\Z_2^n$-geometry and the theory of $\Z_2^n$-manifolds. The bulk of this paper is to be found in Section \ref{sec:Z2nPtsSVMemb}. We rely on two appendices: in Appendix \ref{appx:GenSet} we recall the notion of a generating set of a category, and in Appendix \ref{appx:AManifolds} we review indispensable concepts from the theory of Fr\'{e}chet spaces, algebras and manifolds.

\section[Rudiments of $\Z_{2}^{n}$-graded geometry]{Rudiments of $\boldsymbol{\Z_{2}^{n}}$-graded geometry}\label{sec:Z2nGeometry}

\subsection[The category of $\Z_{2}^{n}$-manifolds]{The category of $\boldsymbol{\Z_{2}^{n}}$-manifolds}
The locally ringed space approach to $\Z_{2}^{n}$-manifolds is presented in a series of papers \cite{Covolo:2016,Covolo:2016a, Covolo:2016b, Covolo:2016c,COP, Poncin:2016} by Covolo, Grabowski, Kwok, Ovsienko, and Poncin. We will draw upon these works heavily and not present proofs of any formal statements.

\begin{Definition}A \emph{locally} $\Z_{2}^{n}$-\emph{ringed space}, $n \in \mathbb{N}$, is a pair $X := (|X|, \mathcal{O}_{X} )$, where $|X|$ is a~second-countable Hausdorff space, and $\mathcal{O}_{X}$ is a~sheaf of $\Z_{2}^{n}$-graded $\Z_{2}^{n}$-commutative associative unital $\mathbb{R}$-algebras, such that the stalks $\mathcal{O}_{p}$, $p \in |X|$, are local rings.
\end{Definition}

In this context, $\Z_{2}^{n}$-commutative means that any two sections $a, b \in \mathcal{O}_{X}(|U|)$, $|U| \subset |X|$ open, of homogeneous degrees $\deg(a) = \underline{a} \in \Z_{2}^{n}$ and $\deg(b) = \underline{b} \in \Z_{2}^{n}$ commute according to the sign rule
\begin{gather*} ab = (-1)^{\langle \underline{a}, \underline{b}\rangle} ba,\end{gather*}
where $\langle - , -\rangle$ is the standard scalar product on $\Z_{2}^{n}$. We will say that a section $a$ is \emph{even} or \emph{odd} if $\langle \underline{a} , \underline{a} \rangle \in \Z_{2}$ is $0$ or $1$.

Just as in standard supergeometry, which we recover for $n=1$, a locally $\Z_2^n$-ringed space is a $\Z_2^n$-manifold if it is locally isomorphic to a specific local model. Given the central r\^ole of (finite-dimensional) Grassmann algebras in the theory of supermanifolds, we consider here $\Z_2^n$-Grassmann algebras.

\begin{Remark}In the following, we order the elements in $\Z_{2}^{n}$ \emph{lexicographically}, and refer to this ordering as the \emph{standard ordering}. For example, we thus get
\begin{gather*} \Z_{2}^{2} = \{ (0,0), (0,1), (1,0), (1,1)\} .\end{gather*} \end{Remark}

\begin{Definition}\label{def:Z2nGrassmann}A {\it $\Z_2^n$-Grassmann algebra} $\Lambda^{ \underline{q}} := \R[[\zx]]$ is the $\Z_2^n$-graded $\Z_2^n$-commutative associative unital $\R$-algebra of all formal power series with coefficients in $\R$ generated by homogeneous parameters $\zx^{\alpha}$ subject to the commutation relation
\begin{gather*} \zx^\alpha \zx^\beta = (-1)^{\langle \underline{\alpha} , \underline{\beta} \rangle } \zx^{\beta} \zx^\alpha,\end{gather*}
where $\ul\za:=\deg(\zx^\alpha) \in \Z_2^n\setminus \underline{0}$, $\ul{0}=(0,\ldots,0)$. The tuple $\underline{q} = (q_{1}, q_{2}, \dots , q_{N})$, $N = 2^{n}-1$, provides the number $q_i$ of generators $\zx^\za$, which have the $i$-th degree in $\Z_2^n\setminus\ul{0}$ (endowed with its standard order).

A {\it morphism of $ \Z_2^n$-Grassmann algebras}, $\psi^{*} \colon \Lambda^{\underline{q}}\rightarrow \Lambda^{\underline{p}}$, is a map of $\R$-algebras that preserves the $\Z_2^n$-grading and the units.

We denote the category of $\Z_2^n$-Grassmann algebras and corresponding morphisms by $\Z_2^n\catname{GrAlg}$.
\end{Definition}
\begin{Example}For $n=0$, we simply get $\R$ considered as an algebra over itself.
\end{Example}

\begin{Example}If $n=1$, we recover the classical concept of Grassmann algebra with the standard supercommutation rule for generators. In this case, all formal power series truncate to polynomials. In particular, the Grassmann algebra generated by a single odd generator is isomorphic to the algebra of dual numbers.
\end{Example}

\begin{Example}The $\Z_2^2$-Grassmann algebra $\Lambda^{(1,1,1)}$ is described by three generators
\begin{gather*}\big (\underbrace{\zx}_{(0,1)}, \underbrace{\theta}_{(1,0)}, \underbrace{z}_{(1,1)} \big),\end{gather*}
where we have indicated the $\Z_2^2$-degree. Note that $\zx \theta = \theta \zx$, while $\zx^2 = 0$ and $\theta^2 =0$. Moreover, $\zx z = {-} z \zx$ and $\theta z = {-} z\theta$, while $z$ is not nilpotent. A general (inhomogeneous) element of $\Lambda^{(1,1,1)}$ is then of the form
\begin{gather*}f(\zx, \theta, z) = f_z(z) + \zx f_\zx(z) + \theta f_\theta (z) + \zx \theta f_{\zx\theta}(z),\end{gather*}
where $f_z(z)$, $f_\zx(z)$, $f_\theta(z)$ and $f_{\zx\theta}(z)$ are formal power series in~$z$. As a subalgebra we can consider~$\Lambda^{(1,1,0)}$, whose generators are $\zx$ and $\theta$. A general element of this subalgebra is a~polynomial in these generators.
\end{Example}

Within any $\Z_2^n$-Grassmann algebra $\Lambda := \Lambda^{\underline{q}}$, we have the ideal generated by the generators of $\zL$, which we will denote as $\mathring{\Lambda}$. In particular we have the decomposition
\begin{gather*}\Lambda = \R \oplus \mathring{\Lambda} ,\end{gather*}
which will be used later on. Moreover, the set of degree $0$ elements, $\Lambda_{0} \subset \Lambda$, is a commutative associative unital $\R$-algebra.

Very informally, a $\Z_{2}^{n}$-manifold is a smooth manifold whose structure sheaf has been `deformed' to now include the generators of a $\Z_2^n$-Grassmann algebra.

\begin{Definition}\label{LRSDefi}A (smooth) $\Z_{2}^{n}$-\emph{manifold} of dimension $p |\underline{q}$ is a locally $\Z_{2}^{n}$-ringed space $ M := \left(|M|, \cO_M \right)$, which is locally isomorphic to the locally $\Z_{2}^{n}$-ringed space $\mathbb{R}^{p |\underline{q}} := \left( \mathbb{R}^{p}, C^{\infty}_{\mathbb{R}^{p}}[[\zx]] \right)$. Local sections of $\cO_M$ are thus formal power series in the $\Z_{2}^{n}$-graded variables $\zx$ with smooth coefficients,
\begin{gather*} \cO_M(|U|) \simeq C^{\infty}_{\R^p}(|U|)[[\zx]] := \bigg\{ \sum_{\alpha \in \mathbb{N}^{\sum_i\! q_i}} \zx^{\alpha}f_{\alpha} \colon f_{\alpha} \in C_{\R^p}^{\infty}(|U|)\bigg \},\end{gather*}
for `small enough' open subsets $|U|\subset |M|$. A $\Z_2^n$-\emph{morphism}, i.e., a morphism between two $\Z_{2}^{n}$-mani\-folds, say $M$ and $N$, is a morphism of $\Z_{2}^{n}$-ringed spaces, that is, a pair $\phi = (|\phi|, \phi^{*}) \colon \allowbreak (|M|, \cO_M) \rightarrow (|N|, \cO_N)$ consisting of a continuous map $|\phi|\colon |M| \rightarrow |N|$ and a sheaf morphism $\phi^{*} \colon \cO_N \rightarrow |\zvf|_*\cO_M$, i.e., a family of $\Z_2^n$-graded unital $\R$-algebra morphisms $\phi^*_{|V|}\colon \cO_N(|V|) \rightarrow \cO_M(|\phi|^{-1}(|V|))$ ($|V| \subset |N|$ open), which commute with restrictions. We will refer to the global sections of the structure sheaf $\cO_M$ as \emph{functions} on $M$ and denote them as $C^{\infty}(M) := \cO_{M}(|M|)$.
\end{Definition}

\begin{Example}[the local model]\label{exp:SuperDom}
The locally $\Z_{2}^{n}$-ringed space $\mathcal{U}^{p|\underline{q}} := \big(\mathcal{U}^p , C^\infty_{\mathcal{U}^p}[[\zx]] \big)$, where $\mathcal{U}^p \subset \R^p$ is open, is naturally a $\Z_2^n$-manifold~-- we refer to such $\Z_2^n$-manifolds as \emph{$\Z_2^n$-domains} of dimension $p|\underline{q}$. We can employ (natural) coordinates $(x^a, \zx^\alpha)$ on any $\Z_2^n$-domain, where the $x^a$ form a coordinate system on $\mathcal{U}^p$ and the $\zx^\alpha$ are formal coordinates.
\end{Example}

Canonically associated to any $\Z_{2}^{n}$-graded algebra $\mathcal{A}$ is the homogeneous ideal $J$ of $\mathcal{A}$ generated by
all homogeneous elements of $\mathcal{A}$ having nonzero degree. If $f\colon \mathcal{A} \rightarrow \mathcal{A}^{\prime}$ is a morphism of $\Z_{2}^{n}$-graded algebras, then $f (J_{\mathcal{A}} ) \subset J_{\mathcal{A}^{\prime}}$. The $J$-adic topology plays a fundamental r\^ole in the theory of $\Z_{2}^{n}$-manifolds. In particular, these notions can be `sheafified'. That is, for any $\Z_{2}^{n}$-manifold~$M$, there exists an ideal sheaf~$\mathcal{J}_M$, defined by
$\mathcal{J}(|U| ) = \la f\in \cO_M(|U|)\colon \deg(f)\neq 0 \rangle$. The $\mathcal{J}_M$-adic topology on $\cO_M$ can then be defined in the obvious way.

Many of the standard results from the theory of supermanifolds pass over to $\Z_{2}^{n}$-manifolds. For example, the topological space $|M|$ comes with the structure of a smooth manifold of dimension $p$ and the continuous base map of any $\Z_2^n$-morphism is actually smooth. Further, for any $\Z_2^n$-manifold~$M$, there exists a short exact sequence of sheaves of $\Z_2^n$-graded $\Z_2^n$-commutative associative $\R$-algebras
\begin{gather*}0\longrightarrow\ker\zve\longrightarrow\cO_M\stackrel{\zve}{\longrightarrow}C^\infty_{|M|}\longrightarrow 0,\end{gather*}
 such that $\ker \zve=\mathcal{J}_M$.

The immediate problem with $\Z_{2}^{n}$-manifolds is that $\mathcal{J}_M$ is \emph{not} nilpotent -- for supermanifolds the ideal sheaf is nilpotent and this is a fundamental property that makes the theory of supermanifolds so well-behaved. However, this loss of nilpotency is compensated by Hausdorff completeness of $\cO_M$ with respect to the $\mathcal{J}_M$-adic topology.

\begin{Proposition} Let $M$ be a $\Z_{2}^{n}$-manifold. Then $\cO_{M}$ is $\mathcal{J}_M$-adically Hausdorff complete as a~sheaf of $\Z_{2}^{n}$-commutative associative unital $\R$-algebras, i.e., the morphism
\begin{gather*}\cO_M \rightarrow \lim_{\leftarrow k} \cO_M / \mathcal{J}_M^{k} ,\end{gather*}
naturally induced by the filtration of $\cO_{M}$ by the powers of $\mathcal{J}_M$, is an isomorphism.
\end{Proposition}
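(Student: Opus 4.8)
The plan is to reduce the statement to the local model and there recognise it as the tautological presentation of a formal power series as the inverse limit of its truncations. Observe first that an inverse limit of sheaves is again a sheaf and is computed section-wise (the sheaf axioms are themselves limits, and limits commute with limits), so $\lim_{\leftarrow k}\cO_M/\mathcal{J}_M^k$ is a genuine sheaf and the displayed arrow is a morphism of sheaves of $\Z_2^n$-graded algebras. A morphism of sheaves is an isomorphism precisely when it induces bijections on the sections over each member of a basis of the topology of $|M|$ (whence isomorphisms on all stalks); since $M$ is by definition locally isomorphic to a $\Z_2^n$-domain and the sub-domains of a chart form such a basis, it suffices to treat an open $|U|\subset\mathcal{U}^p$ equipped with coordinates $(x^a,\zx^\alpha)$, so that $\cO_M(|U|)=C^\infty(|U|)[[\zx]]$.

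First I would pin down the ideal powers on sections. The generators $\zx^\alpha$ are finite in number and of nonzero degree, so $\mathcal{J}_M(|U|)=(\zx)$ is the augmentation ideal of series with vanishing order-zero part, and $\mathcal{J}_M^k(|U|)=(\zx)^k$ consists of exactly those series all of whose monomials have total order at least $k$ in the $\zx^\alpha$ (finiteness of the generating set lets one rewrite any series of order $\geq k$ as a finite $\cO_M(|U|)$-combination of the finitely many monomials of order $k$); equivalently, $s\in\mathcal{J}_M^k(|U|)$ iff all its coefficient functions of order $<k$ vanish identically, a manifestly local condition. Next comes the only genuinely sheaf-theoretic point: that passing to sections commutes with the quotient, i.e.\ $(\cO_M/\mathcal{J}_M^k)(|U|)=\cO_M(|U|)/\mathcal{J}_M^k(|U|)$. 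I would establish this by noting that the presheaf quotient $|V|\mapsto\cO_M(|V|)/\mathcal{J}_M^k(|V|)$, restricted to the basis of sub-domains of $|U|$, is naturally isomorphic to $C^\infty(|V|)\otimes_\R A_k$, where $A_k=\R[[\zx]]/(\zx)^k$ is the finite-dimensional truncated $\Z_2^n$-Grassmann algebra; reading off coefficients, this is a finite power of $C^\infty_{|M|}$, hence already a sheaf on that basis, so its sheafification $\cO_M/\mathcal{J}_M^k$ adds no new sections over $|U|$.

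With these identifications the conclusion is immediate and essentially tautological. The transition maps of the system $\big(C^\infty(|U|)\otimes A_k\big)_k$ are the evident truncations, a compatible family is a sequence whose order-$m$ homogeneous part stabilises once $k>m$, and assembling the stable homogeneous parts yields a unique element of $C^\infty(|U|)[[\zx]]=\cO_M(|U|)$; this exhibits $\cO_M(|U|)\to\lim_{\leftarrow k}(\cO_M/\mathcal{J}_M^k)(|U|)$ as a bijection. Surjectivity is this reconstruction (completeness) and injectivity is the Hausdorff property $\bigcap_k\mathcal{J}_M^k(|U|)=\bigcap_k(\zx)^k=0$, immediate since a series whose order-$<k$ coefficients vanish for every $k$ is zero. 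I expect the main obstacle to be precisely the middle step: controlling the interaction between taking global sections, forming the sheaf quotient by $\mathcal{J}_M^k$, and forming the inverse limit. The first and third are formal, but the identification $(\cO_M/\mathcal{J}_M^k)(|U|)=\cO_M(|U|)/\mathcal{J}_M^k(|U|)$ genuinely uses the special local structure of $\cO_M$ (its recognition on domains, after quotienting by $\mathcal{J}_M^k$, as a finite-rank free $C^\infty_{|M|}$-module). Once this is in hand, the non-nilpotency of $\mathcal{J}_M$, which is what distinguishes the $\Z_2^n$-case from the super case, is harmless because it is compensated exactly by the completeness of formal power series.
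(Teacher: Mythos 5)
The paper itself offers no proof of this Proposition: it is one of the statements recalled from the foundational series of papers of Covolo, Grabowski, Kwok, Ovsienko and Poncin (see the remark at the start of Section~\ref{sec:Z2nGeometry} that proofs of such formal statements are not reproduced), so there is no in-paper argument to compare against. Judged on its own, and against the proof in the cited literature, your argument is correct and follows essentially the expected route: reduce to the local model via a basis of coordinate domains, identify $\mathcal{J}_M^k$ on such a domain with the series of $\zx$-order at least $k$ (using finiteness of the generating set to rewrite, and locality of the vanishing of low-order coefficients to see this is already a sheaf), and then recover $C^\infty(|U|)[[\zx]]$ as the inverse limit of its truncations. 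You also correctly isolate, and correctly resolve, the one step where something could go wrong: the identification $(\cO_M/\mathcal{J}_M^k)(|U|)=\cO_M(|U|)/\mathcal{J}_M^k(|U|)$, which is not formal (quotient sheaves are sheafifications of presheaf quotients) but holds here because on a chart domain the presheaf quotient is the sheaf of sections of a trivial finite-rank $C^\infty$-module, namely $C^\infty_{|U|}\otimes_\R\bigl(\R[[\zx]]/(\zx)^k\bigr)$, so sheafification adds nothing over basis opens; together with the section-wise computation of inverse limits of sheaves, this makes the final bijection (injectivity $=\bigcap_k(\zx)^k=0$, surjectivity $=$ reassembly of compatible truncations) exactly the tautology you describe. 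No gaps.
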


The presence of formal power series in the coordinate rings of $\Z_{2}^{n}$-manifolds forces one to rely on the Hausdorff-completeness of the $\mathcal{J}$-adic topology. This completeness replaces the standard fact that supermanifold functions of Grassmann odd variables are always polynomials~-- a result that is often used in extending results from smooth manifolds to supermanifolds.

What makes $\Z_{2}^{n}$-manifolds a very workable form of noncommutative geometry is the fact that we have well-defined local models. Much like the theory of manifolds, one can construct global geometric concepts via the gluing of local geometric concepts. That is, we can consider a $\Z_{2}^{n}$-manifold as being covered by $\Z_2^n$-domains together with specified gluing information, i.e., coordinate transformations. Moreover, we have the \emph{chart theorem} \cite[Theorem~7.10]{Covolo:2016} that says that $\Z_2^n$-morphisms from a $\Z_2^n$-manifold $(|M|,\cO_M)$ to a $\Z_2^n$-domain $(\mathcal{U}^p,C^\infty_{\mathcal{U}^p}[[\zx]])$, are completely described by the pullbacks of the coordinates $(x^a,\zx^\za)$. In other words, to define a $\Z_2^n$-morphism valued in a $\Z_2^n$-domain, we only need to provide total sections $(s^a,s^\za)\in\cO_M(|M|)$ of the source structure sheaf, whose degrees coincide with those of the target coordinates $(x^a,\zx^\za)$. Let us stress the condition $(\ldots,\zve s^a,\ldots)(|M|)\subset \mathcal{U}^p$, which is often understood in the literature.

\newcommand{\Ci}{C^\infty}\newcommand{\cU}{\mathcal{U}}\newcommand{\cV}{\mathcal{V}}

A few words about the atlas definition of a $\Z_2^n$-manifold are necessary. Let $p|\ul q$ be as above. A~$p|\ul q$-\emph{chart} (or $p|\ul q$-coordinate-system) over a (second-countable Hausdorff) smooth manifold $|M|$ is a $\Z_2^n$-domain \begin{gather*} \mathcal{U}^{p|\ul q}=\big(\cU^p,\Ci_{\cU^p}[[\zx]]\big) ,\end{gather*} together with a diffeomorphism $|\psi|\colon |U|\to \cU^p$, where $|U|$ is an open subset of $|M|$. Given two $p|\ul q$-charts
\begin{gather}\label{Charts}\big(\mathcal{U}^{p|\ul q}_\za,|\psi_\za|\big)\qquad\text{and}\qquad \big(\mathcal{U}^{p|\ul q}_\zb,|\psi_\zb|\big)\end{gather}
 over $|M|$, we set $V_{\alpha \beta} : = |\psi_\alpha|(|U_{\alpha \beta}|)$ and $V_{ \beta \alpha} : = |\psi_\beta|(|U_{\alpha \beta}|)$, where $|U_{\alpha \beta}| := |U_{\alpha}|\cap |U_{\beta}| $. We then denote by $|\psi_{\zb\za}|$ the diffeomorphism
\begin{gather}\label{Diffeo}|\psi_{\zb\za}|:=|\psi_\zb|\circ|\psi_\za|^{-1}\colon \ V_{\za\zb} \to V_{\zb\za} .\end{gather}
Whereas in classical differential geometry the coordinate transformations are completely defined by the coordinate systems, in $\Z_2^n$-geometry, they have to be specified separately. A \emph{coordinate transformation} between two charts, say the ones of~\eqref{Charts}, is an isomorphism of $\Z_2^n$-manifolds \begin{gather}\label{CoordTrans}\psi_{\zb\za}=(|\psi_{\zb\za}|,\psi^*_{\zb\za})\colon \ \mathcal{ U}^{p|\ul q}_{\za}|_{V_{\za\zb}}\to \mathcal{ U}^{p|\ul q}_{\zb}|_{V_{\zb\za}} ,\end{gather} where the source and target are the open $\Z_2^n$-submanifolds \begin{gather*} \cU^{p|\ul q}_\za|_{V_{\za\zb}}=\big(V_{\za\zb}, \Ci_{V_{\za\zb}}[[\zx]]\big)\end{gather*} (note that the underlying diffeomorphism is~\eqref{Diffeo}). A $p|\ul q$-\emph{atlas} over $|M|$ is a covering $\big(\mathcal{ U}^{p|\ul q}_\za,\!|\psi_\za|\big)_\za$ by charts together with a coordinate transformation~\eqref{CoordTrans} for each pair of charts, such that the usual {cocycle} condition $\psi_{\zb\zg}\psi_{\zg\za}=\psi_{\zb\za}$ holds (appropriate restrictions are understood).

\begin{Definition}\label{AtlasDefi} A (smooth) \emph{$\Z_2^n$-manifold} of dimension $p|\ul q$ is a (second-countable Hausdorff) smooth manifold $|M|$ together with a preferred $p|\ul q$-atlas over it.\end{Definition}

As in standard supergeometry, the Definitions~\ref{LRSDefi} and~\ref{AtlasDefi} are equivalent \cite{Leites:2011}. For instance, if $M=(|M|,\cO_M)$ is a $\Z_2^n$-manifold of dimension $p|\ul q$ in the sense of Definition~\ref{LRSDefi}, there are $\Z_2^n$-isomorphisms (isomorphisms of $\Z_2^n$-manifolds) \begin{gather*} h_\za=(|h_\za|,h^*_\za)\colon \ U_\za=(|U_\za|,\cO_M|_{|U_\za|})\to \cU_\za^{p|\ul q}=\big(\cU^p_\za,\Ci_{\R^p}|_{\cU^p_\za}[[\zx]]\big) ,\end{gather*} such that $(|U_\za|)_\za$ is an open cover of $|M|$. For any two indices $\za,\zb$, the restriction $h_\za|_{U_{\za\zb}}$ of $h_\za$ to the open $\Z_2^n$-submanifold $U_{\za\zb}=(|U_{\za\zb}|,\cO_M|_{|U_{\za\zb}|})$, $|U_{\za\zb}|=|U_\za|\cap |U_\zb|$, is a $\Z_2^n$-isomorphism between $U_{\za\zb}$ and
\begin{gather*} \cU_\za^{p|\ul q}|_{V_{\za\zb}}=\big(V_{\za\zb},\Ci_{\R^p}|_{V_{\za\zb}}[[\zx]]\big),\qquad V_{\za\zb}=|h_\za|(|U_{\za\zb}|) .\end{gather*} Therefore, the composite \begin{gather*} 
\psi_{\zb\za}=h_\zb|_{U_{\zb\za}}h_\za|_{U_{\za\zb}}^{-1}\end{gather*} is a $\Z_2^n$-isomorphism \begin{gather*} \psi_{\zb\za}\colon \ \cU_\za^{p|\ul q}|_{V_{\za\zb}}\to \cU_\zb^{p|\ul q}|_{V_{\zb\za}} ,\end{gather*} such that the cocycle condition is satisfied.

\newcommand{\Zn}{\Z_2^n}
As a matter of some formality, $\Z_{2}^{n}$-manifolds and their morphisms form a category. The category of $\Z_{2}^{n}$-manifolds we will denote as $\Z_{2}^{n}\catname{Man}$. We remark this category is locally small. Moreover, as shown in \cite[Theorem~19]{Bruce:2018b}, the category of $\Z_{2}^{n}$-manifolds admits (finite) products. More precisely, let $M_i$, $i\in\{1,2\}$, be $\Zn$-manifolds. Then there exists a $\Zn$-manifold $M_1 \times M_2$ and $\Zn$-morphisms $\pi_i\colon M_1 \times M_2 \to M_i$ (with underlying smooth manifold $|M_1\times M_2|=|M_1|\times |M_2|$ and with underlying smooth morphisms $|\pi_i|\colon |M_1|\times|M_2|\to|M_i|$ given by the canonical projections), such that for any $\Zn$-manifold $N$ and $\Zn$-morphisms $f_i\colon N \to M_i$, there exists a~unique morphism $h\colon N \to M_1 \times M_2$ making the obvious diagram commute. It follows that, if $\zvf\in\Hom_{\Zn\tt Man}(M,M')$ and $\psi\in\Hom_{\Zn\tt Man}(N,N')$, there is a unique morphism $\zvf\times\psi\in\Hom_{\Zn\tt Man}(M\times N,M'\times N')$.

\begin{Remark}It is known that an analogue of the Batchelor--Gaw\c{e}dzki theorem holds in the category of (real) $\Z_{2}^{n}$-manifolds, see \cite[Theorem~3.2]{Covolo:2016a}. That is, any $\Z_2^n$-manifold is noncanonically isomorphic to a $\Z_2^n\setminus \{\underline{0}\}$-graded vector bundle over a smooth manifold. While this result is quite remarkable, we will not exploit it at all in this paper.
\end{Remark}

\subsection{The functor of points}\label{FoP}

Similar to what happens in classical supergeometry, a $\Zn$-manifold $M$ is not fully described by its topological points in $|M|$. To remedy this defect, we broaden the notion of `point', as was suggested by Grothendieck in the context of algebraic geometry.

More precisely, set $V=\{z\in\C^n\colon P(z)=0\}\in{\tt Aff}$, where $P$ denotes a polynomial in~$n$ indeterminates with complex coefficients and $\tt Aff$ denotes the category of affine varieties. Grothendieck insisted on solving the equation $P(z)=0$ not only in $\C^n$, but in~$A^n$, for any algebra $A$ in the category $\tt CA$ of commutative (associative unital) algebras (over~$\C$). This leads to an arrow
\begin{gather*} \op{Sol}_P\colon \ {\tt CA}\ni A\mapsto\op{Sol}_P(A)=\big\{a\in A^n\colon P(a)=0\big\}\in{\tt Set},\end{gather*} which turns out to be a functor \begin{gather*} \op{Sol}_P\simeq \Hom_{\tt CA}(\C[V],-)\in[{\tt CA},{\tt Set}] ,\end{gather*} where $\C[V]$ is the algebra of polynomial functions of $V$. The dual of this functor, whose value $\op{Sol}_P(A)$ is the set of $A$-points of~$V$, is the functor
\begin{gather*} \Hom_{\tt Aff}(-,V)\in \big[{\tt Aff}^{\op{op}},{\tt Set}\big] ,\end{gather*} whose value $\Hom_{\tt Aff}(W,V)$ is the set of $W$-points of~$V$.

The latter functor can be considered not only in $\tt Aff$, but in any locally small category, in particular in $\Zn{\tt Man}$. We thus obtain a covariant functor (functor in $\bullet$) \begin{gather}\label{YonEmb}\bullet(-)=\Hom(-,\bullet)\colon \ \Zn{\tt Man}\ni M\mapsto M(-)=\Hom_{\Zn\tt Man}(-,M)\in \big[{\tt \Zn{\tt Man}^{\op{op}}, Set}\big] .\end{gather} As suggested above, the contravariant functor $\Hom(-,M)$ (we omit the subscript $\Zn{\tt Man}$) (functor in $-$) is referred to as the {\it functor of points} of $M$. If $S\in\Zn{\tt Man}$, an {\it $S$-point} of $M$ is just a morphism $\pi_S\in\Hom(S,M)$. One may regard an $S$-point of $M$ as a `family of points of $M$ parameterised by the points of~$S$'. The functor $\bullet(-)$ is known as the {\it Yoneda embedding}. For any underlying locally small category $\tt C$ (here ${\tt C}=\Zn{\tt Man}$), the functor $\bullet(-)$ is {\it fully faithful}, what means that, for any $M,N\in\Zn{\tt Man}$, the map
\begin{gather*} \bullet_{M,N}(-)\colon \ \Hom(M,N)\ni \phi\mapsto \Hom(-,\phi)\in\op{Nat}(\Hom(-,M),\Hom(-,N))\end{gather*} is bijective (here $\op{Nat}$ denotes the set of natural transformations). It can be checked that the correspondence $\bullet_{M,N}(-)$ is natural in~$M$ and in~$N$. Moreover, any fully faithful functor is automatically injective up to isomorphism on objects: $M(-)\simeq N(-)$ implies $M\simeq N$. Of course, the functor $\bullet(-)$ is not surjective up to isomorphism on objects, i.e., not every functor $X\in[\Zn{\tt Man}^{\op{op}},{\tt Set}]$ is isomorphic to a functor of the type $M(-)$. However, if such $M$ does exist, it is, due to the mentioned injectivity, unique up to isomorphism and it is called {\it `the' representing $\Zn$-manifold} of~$X$. Further, if $X,Y\in[\Zn{\tt Man}^{\op{op}},{\tt Set}]$ are two representable functors, represented by $M$, $N$ respectively, a morphism or natural transformation between them, provides, due to the mentioned bijectivity, a {\it unique morphism between the representing $\Zn$-manifolds}~$M$ and~$N$. It follows that, instead of studying the category $\Zn{\tt Man}$, we can just as well focus on the functor category $[{\tt \Zn{\tt Man}^{\op{op}}, Set}]$ (which has better properties, in particular it has all limits and colimits).
A \emph{generalized $\Z_2^n$-manifold} is an object in the functor category $[\Z_2^n\catname{Man}^{\textnormal{op}}, \catname{Set}]$ and morphisms of such objects are natural transformations. The category $[\Zn{\tt Man}^{\op{op}},\tt Set]$ of generalised $\Zn$-manifolds has finite products. Indeed, if $F,G$ are two generalized manifolds, we define the functor $F\times G$, given on objects $S$, by $(F\times G)(S)=F(S)\times G(S)$, and on morphisms $\Psi\colon S\to T$, by\looseness=-1
\begin{gather*} (F\times G)(\Psi)=F(\Psi)\times G(\Psi)\colon \ F(T)\times G(T)\to F(S)\times G(S) .\end{gather*}
It is easily seen that $F\times G$ respects compositions and identities. Further, there are canonical natural transformations $\zh_1\colon F\times G\to F$ and $\zh_2\colon F\times G\to G$. If now $(H,\za_1,\za_2)$ is another functor with natural transformations from it to~$F$ and~$G$, respectively, it is straightforwardly checked that there exists a unique natural transformation $\zb\colon H\to F\times G$, such that $\za_i=\zh_i\circ \zb$.
One passes from the category of $\Z_2^n$-manifolds to the larger category of generalised $\Z_2^n$-manifolds in order to understand, for example, the \emph{internal Hom} objects. In particular, there always exists a generalised $\Z_2^n$-manifold such that the so-called \emph{adjunction formula} holds
\begin{gather*}
\InHom_{\Z_2^n\catname{Man}}(M,N)(-) := \Hom_{\Z_2^n\catname{Man}}(- \times M,N) .
\end{gather*}
This internal Hom functor is defined on $\phi \in \Hom_{\Z_2^n\catname{Man}}(P, S)$ by
\begin{align*}
 \InHom_{\Z_2^n\catname{Man}}(M,N)(\phi) \colon \ \InHom_{\Z_2^n\catname{Man}}(M,N)(S)& \longrightarrow \InHom_{\Z_2^n\catname{Man}}(M,N)(P) ,\\
 \Psi_{S} &\longmapsto \Psi_{S}\circ (\phi\times \Id_{M}) .
\end{align*}
In general, a mapping $\Z_2^n$-manifold $\InHom_{\Z_2^n\catname{Man}}(M,N)$ will not be representable. We will refer to `elements' of a mapping $\Z_2^n$-manifold as \emph{maps} reserving morphisms for the categorical morphisms of $\Z_2^n$-manifolds.

Composition of maps between $\Zn$-manifolds is naturally defined as a natural transformation
\begin{gather*}
\underline{\circ} \colon \ \InHom(M,N) \times \InHom(N,L) \longrightarrow \InHom(M,L) ,
\end{gather*}
defined, for any $S \in \Z_2^n\catname{Man}$, by
\begin{align*}
 \Hom(S\times M, N) \times \Hom(S\times N,L)& \longrightarrow \Hom(S\times M, L), \\
 (\Psi_{S}, \Phi_{S})& \longmapsto (\Phi \underline{\circ} \Psi)_{S}:= \Phi_{S} \circ (\Id_{S}\times \Psi_{S})\circ (\Delta\times \Id_{M}) ,
\end{align*}
where $\Delta\colon S\longrightarrow S\times S$ is the diagonal of $S$ and $\Id_S\colon S\longrightarrow S$ is its identity.

Similarly to the cases of smooth manifolds and supermanifolds, morphisms between $\Z_2^n$-manifolds are completely determined by the corresponding maps between the global functions. We remark that this is not, in general, true for complex (super)manifolds. More carefully, we have the following proposition that was proved in \cite[Theorem~3.7]{Bruce:2018b}.

\begin{Proposition}\label{Global-to-local} Let $M = (|M|, \cO_M)$ and $N = (|N|, \cO_N)$ be $\Z_2^n$-manifolds. Then the natural map
\begin{gather*}\emph{\Hom}_{\Z_2^n\catname{Man}}\big(M,N \big) \longrightarrow \emph{\Hom}_{\Z_2^n\catname{Alg}}\big(\cO(|N|), \cO(|M|) \big) ,\end{gather*}
where $\tt\Z_2^nAlg$ denotes the category of $\Z_2^n$-graded $\Z_2^n$-commutative associative unital $\R$-algebras, is a bijection.
\end{Proposition}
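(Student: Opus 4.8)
The map in question sends a $\Zn$-morphism $\phi=(|\phi|,\phi^*)$ to the $\R$-algebra morphism $\phi^*_{|N|}\colon \cO_N(|N|)\to\cO_M(|M|)$ on global sections; it is manifestly well defined and natural, so the whole content is bijectivity. My plan is to reduce to the case in which $N$ is a $\Z_2^n$-domain by means of the chart theorem \cite[Theorem~7.10]{Covolo:2016} together with a partition of unity on $|N|$, to extract the base map from the reduced (degree-zero quotient) data by the classical smooth theory, and to treat the formal graded part by invoking the $\mathcal{J}_M$-adic Hausdorff completeness of $\cO_M$ established above.

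First, the base map. Any morphism $\psi\colon\cO_N(|N|)\to\cO_M(|M|)$ in $\Z_2^n\catname{Alg}$ preserves degrees, hence carries the ideal $\mathcal{J}_N(|N|)$ of nonzero-degree elements into $\mathcal{J}_M(|M|)$, and therefore descends through the surjection $\zve$ of the exact sequence $0\to\mathcal{J}\to\cO\stackrel{\zve}{\to}\Ci\to 0$ to an $\R$-algebra morphism $\bar\psi\colon \Ci(|N|)\to\Ci(|M|)$. By the classical fact that algebra morphisms between global smooth functions are exactly the pullbacks along smooth maps, there is a unique smooth $|\phi|\colon|M|\to|N|$ with $\bar\psi=|\phi|^*$. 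This fixes the base map, and already shows for injectivity that two morphisms inducing the same $\psi$ share the same $|\phi|$.

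Next, the domain case. Suppose $N=\cU^{p|\ul q}$ with global coordinates $(y^b,\eta^\beta)$. Put $s^b:=\psi(y^b)$ and $s^\beta:=\psi(\eta^\beta)$; their degrees match those of the target coordinates, and since $\zve s^b=|\phi|^*y^b$ takes values in $\cU^p$, the base-map condition is met. The chart theorem then produces a unique $\Zn$-morphism $\phi$ whose coordinate pullbacks are these $s^b,s^\beta$. It remains to check $\phi^*_{|N|}=\psi$ on every global section $\sum_\alpha \eta^\alpha f_\alpha(y)$. Both $\phi^*_{|N|}$ and $\psi$ are degree-preserving, hence $\mathcal{J}$-adically continuous, algebra morphisms into the Hausdorff-complete algebra $\cO_M(|M|)$; thus it suffices to verify agreement on the generators $\eta^\beta$ (true by construction) and on the degree-zero smooth functions $f(y)$. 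This last point is the crux, and is precisely where completeness is essential: both maps are multiplicative and agree on the coordinate functions $y^b$, so feeding a Hadamard expansion of $f$ into them shows that $\psi(f(y))-\phi^*_{|N|}(f(y))$ lies in $\mathcal{J}_M^{k}$ for every $k$, whence $\bigcap_k\mathcal{J}_M^{k}=0$ forces it to vanish. This is the step that genuinely departs from the super case, since the even nonzero-degree parameters are not nilpotent and one cannot simply truncate.

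Finally, globalization. For general $N$, cover $|N|$ by chart domains $|V_i|$ with coordinates $(y_i,\eta_i)$ and choose open $|V_i'|$ with $\overline{|V_i'|}$ compact in $|V_i|$ together with bump functions $\rho_i\equiv 1$ on $|V_i'|$. The products $\rho_i y_i^b$ and $\rho_i\eta_i^\beta$ are global sections of $\cO_N$, so $\psi$ applies to them; on $|\phi|^{-1}(|V_i'|)$, where $\rho_i\circ|\phi|\equiv 1$, they determine coordinate pullbacks and hence, by the domain case, a morphism $\phi_i$ into the domain over $|V_i'|$. Because all the $\phi_i$ are manufactured from the single $\psi$, their coordinate pullbacks agree on overlaps, so chart-theorem uniqueness lets them glue to a global $\Zn$-morphism $\phi$; restricting to the cover and using the sheaf separation axiom yields $\phi^*_{|N|}=\psi$, giving surjectivity, while the same recovery of coordinate pullbacks from $\psi$ yields injectivity in general. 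The main obstacle throughout is the domain-case identity on smooth functions, i.e.\ controlling the formal power series in the non-nilpotent graded parameters through $\mathcal{J}_M$-adic completeness; the base-map extraction and the gluing are comparatively routine.
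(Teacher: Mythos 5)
The paper does not actually prove this proposition: it quotes it from \cite[Theorem~3.7]{Bruce:2018b}, and only recalls how the base map is recovered, namely via the homeomorphism $|M|\simeq\op{Spm}(\cO(|M|))$, $m\mapsto\ker\zve_m$, so that $|\phi|(m)$ is the point of $|N|$ corresponding to $\ker(\zve_m\circ\psi)$. Your alternative extraction of $|\phi|$ -- descending $\psi$ through $\zve$ to a morphism $\Ci(|N|)\to\Ci(|M|)$ (legitimate, since $\ker\zve_{|N|}=\mathcal{J}_N(|N|)$ is generated by nonzero-degree sections and $\psi$ preserves degrees) and then invoking Milnor's exercise -- is an equivalent, equally standard route. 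Likewise, your reduction of the domain case to agreement on the $\eta^\beta$ and on the degree-zero functions $f(y)$, via degree-preservation $\Rightarrow$ $\mathcal{J}$-adic continuity, density of polynomial sections, and Hausdorff-completeness of the target, is sound.

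The genuine gap sits exactly at the step you call the crux. Feeding a Hadamard/Taylor expansion of $f$ at a \emph{fixed} base point $z_0$ into the two morphisms gives $\psi(f)-\phi^*(f)=\sum_{|\gamma|=k+1}(s-z_0)^\gamma\bigl(\psi(h_\gamma)-\phi^*(h_\gamma)\bigr)$, where $s^b=\psi(y^b)=\phi^*(y^b)$. The factors $s^b-z_0^b$ are \emph{not} in $\mathcal{J}_M(|M|)$: their reduced parts $|\phi|^*y^b-z_0^b$ are smooth functions vanishing only on the fibre $|\phi|^{-1}(z_0)$. So this expansion does not place the difference in $\mathcal{J}_M^{\,k}$; it only places its germ at a point $m$ with $|\phi|(m)=z_0$ in $\mathfrak{m}_m^{\,k+1}$, the power of the maximal ideal of the stalk. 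Since $\bigcap_k\mathfrak{m}_m^{\,k}\neq 0$ (flat germs), the Hausdorff-completeness of the $\mathcal{J}$-adic topology cannot close the argument; the correct finish is pointwise: working in a chart of $M$, each $\xi$-coefficient of the difference is a smooth function which the expansion (with $k$ exceeding the length of the $\xi$-monomial, and $z_0=|\phi|(m)$ varying with the point $m$) shows to vanish at every point, whence the difference is zero. This pointwise coefficient comparison is the same mechanism as in the classical super case; what completeness buys is only the convergence of the Taylor-series formula defining $\phi^*$, not the uniqueness. A second, smaller gap: your globalization tacitly assumes that $\psi$ is \emph{local}, i.e., that $\psi$ followed by restriction to $|\phi|^{-1}(|V|)$ factors through restriction to $|V|$. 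Without this you can neither compare $\phi_i$ and $\phi_j$ on overlaps (their defining pullbacks live in different coordinate systems $y_i$, $y_j$) nor conclude $\phi^*_{|N|}=\psi$. In the cited proof this locality is precisely what the localization theorem $\cO(|U|)\simeq\cO(|M|)\cdot S_U^{-1}$ of \cite[Proposition~3.5]{Bruce:2018b} supplies; bump functions alone do not.
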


It is worth recalling how a morphism $\psi\in\Hom_{\Z_2^n\catname{Alg}}\big(\cO(|N|), \cO(|M|))$ defines a continuous base map $|\phi|\colon |M|\to|N|$. We denote by $\zve_m\in\Hom_{\Z_2^n\catname{Alg}}\big(\cO(|M|),\R)$, $m\in|M|$, the morphism
\begin{gather*} \zve_m\colon \ \cO(|M|)\ni f\mapsto (\zve_{|M|}f)(m)\in\R ,\end{gather*} and by $\op{Spm}(\cO(|M|))$ the maximal spectrum of the algebra $\cO(|M|)$. The map \begin{gather*}\flat\colon \ |M|\ni m\mapsto \ker\zve_m\in \op{Spm}(\cO(|M|))\end{gather*}
 is a homeomorphism, both, when the target space is endowed with its Zariski topology and when it is endowed with its Gel'fand topology. The continuous map $|\phi|\colon |M|\to |N|$ that is induced by the morphism $\psi$ is now given by \begin{gather*} |\phi|\colon \ |M|\simeq \op{Spm}(\cO(|M|))\ni m\simeq \ker\zve_m\mapsto \ker(\zve_m\circ\psi)\simeq n\in \op{Spm}(\cO(|N|))\simeq |N| .\end{gather*}

The fact that the functor $\Hom_{\Z_2^n\tt Man}(S,-)$ respects limits and in particular products directly implies that
\begin{gather}\label{eqn:CartProdSpoints}
\big( M \times N \big)(S) \simeq M(S) \times N(S).
\end{gather}
The latter result is essential in dealing with $\Z_2^n$-Lie groups. A (super) Lie group can be defined as a group object in the category of smooth (super)manifolds. This leads us to the following definition.

\begin{Definition}\label{def:Z2nLieGroup}
A \emph{$\Z_2^n$-Lie group} is a group object in the category of $\Z_2^n$-manifolds.
\end{Definition}

A convenient fact here is that, if $G$ is a $\Z_2^n$-Lie group, then the set $G(S)$ is a group (see (\ref{eqn:CartProdSpoints})). In other words, $G(-)$ is a functor from $\Z_2^n\catname{Man}^{\rmo} \rightarrow \catname{Grp}$.

\begin{Remark}We leave details and examples of $\Z_2^n$-Lie groups for future publications. However, we will remark at this point that the idea of ``colour supergroup manifolds" has already appeared in the physics literature, albeit without a proper mathematical definition (see \cite{Aizawa:2018,Aizawa:2017, Rittenburg:1978a, Rittenburg:1978b}, for example). Another approach to $\Z_2^n$-Lie groups is via a generalisation of Harish-Chandra pairs (see~\cite{Mohammadi:2017} for work in this direction).
\end{Remark}

\section[$\Z_2^n$-points and the functor of points]{$\boldsymbol{\Z_2^n}$-points and the functor of points}\label{sec:Z2nPtsSVMemb}

In view of \eqref{YonEmb}, we need to `probe' a given $\Z_{2}^{n}$-manifold $M\simeq M(-)$ with \emph{all} $ \Z_{2}^{n}$-manifolds. We will show that this is however not the case, since, much like for the category of supermanifolds, we have a rather convenient generating set that we can employ, namely the set of $\Z_2^n$-points.

\subsection[The category of $\Z_2^n$-points]{The category of $\boldsymbol{\Z_2^n}$-points}

\begin{Definition}\label{def:superpoints}
A \emph{$\Z_2^n$-point} is a $\Z_2^n$-manifold $\R^{0|\ul{m}}$ with vanishing ordinary dimension. We denote by $\Z_2^n\tt Pts$ the full subcategory of $\Z_2^n\tt Man$, whose collection of objects is the (countable) set of $\Z_2^n$-points.
\end{Definition}

Morphisms $\zvf\colon \R^{0|\ul{m}}\to\R^{0|\ul{n}}$ of $\Z_2^n$-points are exactly morphisms $\phi^*\colon \Lambda^{\ul{n}} \rightarrow\Lambda^{\ul{m}}$ of $\Z_2^n$-Grassmann algebras:

\begin{Proposition}\label{prop:PtsGrAlg}There is an isomorphism of categories
\begin{gather*} \Z_2^n\catname{Pts}\simeq\Z_2^n\catname{GrAlg}^{\textnormal{op}} .\end{gather*}
\end{Proposition}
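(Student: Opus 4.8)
The plan is to exhibit an explicit functor $F\colon\Z_2^n\catname{Pts}\to\Z_2^n\catname{GrAlg}^{\textnormal{op}}$ and then verify directly that it is a bijection on objects and on every hom-set and preserves composition and identities, which is exactly what an isomorphism of categories demands. On objects I would send a $\Z_2^n$-point $\R^{0|\ul m}$ to its algebra of global functions. Since the reduced space is a single point, $C^\infty(\R^0)=\R$, and therefore $C^\infty(\R^{0|\ul m})=C^\infty_{\R^0}[[\zx]](\R^0)=\R[[\zx]]=\Lambda^{\ul m}$ in the notation of Definition~\ref{def:Z2nGrassmann}. Thus I set $F(\R^{0|\ul m}):=\Lambda^{\ul m}$. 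Both the $\Z_2^n$-points and the $\Z_2^n$-Grassmann algebras are indexed without redundancy by the tuples $\ul m=(m_1,\ldots,m_N)$ with $N=2^n-1$, so this assignment is a genuine bijection on the classes of objects, and not merely a bijection up to isomorphism, which is what we need for an isomorphism (rather than an equivalence) of categories.

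On morphisms I would send $\zvf=(|\zvf|,\zvf^*)\colon\R^{0|\ul m}\to\R^{0|\ul n}$ to its pullback $\zvf^*$. Because the reduced spaces are single points the base map $|\zvf|$ is forced, so the entire morphism is carried by $\zvf^*\colon C^\infty(\R^{0|\ul n})\to C^\infty(\R^{0|\ul m})$, that is by $\zvf^*\colon\Lambda^{\ul n}\to\Lambda^{\ul m}$. The contravariance of taking global sections matches the passage to the opposite category, so $F$ is a genuine covariant functor into $\Z_2^n\catname{GrAlg}^{\textnormal{op}}$; preservation of identities is immediate, and the identity $(\psi\circ\zvf)^*=\zvf^*\circ\psi^*$ in $\Z_2^n\catname{GrAlg}$ is precisely composition in $\Z_2^n\catname{GrAlg}^{\textnormal{op}}$. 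The decisive input for bijectivity on hom-sets is Proposition~\ref{Global-to-local}, applied with $M=\R^{0|\ul m}$ and $N=\R^{0|\ul n}$: it yields a bijection $\Hom_{\Z_2^n\catname{Man}}(\R^{0|\ul m},\R^{0|\ul n})\simeq\Hom_{\Z_2^n\catname{Alg}}(\Lambda^{\ul n},\Lambda^{\ul m})$. Since $\Z_2^n\catname{Pts}$ is a \emph{full} subcategory of $\Z_2^n\catname{Man}$ (Definition~\ref{def:superpoints}), the left-hand side equals $\Hom_{\Z_2^n\catname{Pts}}(\R^{0|\ul m},\R^{0|\ul n})$, while the right-hand side is $\Hom_{\Z_2^n\catname{GrAlg}^{\textnormal{op}}}(\Lambda^{\ul m},\Lambda^{\ul n})$ after reversing arrows.

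The single point that genuinely requires care — and which I expect to be the only real obstacle — is the identification of the hom-sets of $\Z_2^n\catname{Alg}$ with those of $\Z_2^n\catname{GrAlg}$. By Definition~\ref{def:Z2nGrassmann} a morphism of $\Z_2^n$-Grassmann algebras is a unital $\R$-algebra map preserving the $\Z_2^n$-grading, which is exactly a $\Z_2^n\catname{Alg}$-morphism between two objects that happen to be Grassmann algebras; in other words $\Z_2^n\catname{GrAlg}$ is the full subcategory of $\Z_2^n\catname{Alg}$ on the Grassmann-algebra objects, so no discrepancy can arise. I would stress here that $\Z_2^n$-commutativity is a property of the \emph{objects} and places no additional constraint on morphisms, and that any graded unital algebra map automatically respects the defining commutation relations. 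The formal power series in $\Lambda^{\ul m}$ arising from the non-nilpotent even generators cause no difficulty, since the attendant well-definedness and completeness have already been absorbed into Proposition~\ref{Global-to-local}, which is valid for arbitrary $\Z_2^n$-manifolds. Once these identifications are in place, $F$ is bijective on objects and on each hom-set and strictly functorial, hence an isomorphism of categories.
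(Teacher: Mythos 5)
Your proposal is correct and follows the same route the paper takes implicitly: the paper states Proposition~\ref{prop:PtsGrAlg} without a formal proof, justifying it only by the preceding remark that morphisms of $\Z_2^n$-points are exactly morphisms of $\Z_2^n$-Grassmann algebras, which is precisely the content you establish via the global-sections functor and Proposition~\ref{Global-to-local}. Your additional care about strict bijectivity on objects (isomorphism versus mere equivalence of categories) and about $\Z_2^n\catname{GrAlg}$ being a full subcategory of $\Z_2^n\catname{Alg}$ correctly fills in the details the paper leaves unstated.
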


We can think of $\Z_2^n$-points as \emph{formal thickenings} of an ordinary point by the non-zero degree generators. The simplest $\Z_2^n$-point is the one with trivial formal thickening, $\R^{0|\ul{0}} := \big(\R^0, \R \big)$:
\begin{Proposition}
The $\Z_2^n$-point $\R^{0|\ul{0}}=\R^0$ is a terminal object in both, $\Z_2^n\catname{Man}$ and $\Z_2^n\catname{Pts}$.
\end{Proposition}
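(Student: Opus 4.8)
The plan is to verify the universal property of a terminal object: for every $\Z_2^n$-manifold $M$ (and in particular every $\Z_2^n$-point), there exists a \emph{unique} $\Z_2^n$-morphism $M\to\R^{0|\ul{0}}$. Since $\Z_2^n\catname{Pts}$ is a full subcategory of $\Z_2^n\catname{Man}$ containing $\R^{0|\ul{0}}$, establishing the terminality in the larger category $\Z_2^n\catname{Man}$ immediately yields it in $\Z_2^n\catname{Pts}$ as well. So I would focus on the statement for $\Z_2^n\catname{Man}$ and remark that the restriction is automatic.

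First I would pass to the algebraic side via Proposition~\ref{Global-to-local}, which tells us that
\begin{gather*}
\Hom_{\Z_2^n\catname{Man}}\big(M,\R^{0|\ul{0}}\big)\simeq\Hom_{\Z_2^n\catname{Alg}}\big(\cO\big(\big|\R^{0|\ul{0}}\big|\big),\cO(|M|)\big).
\end{gather*}
Now $\R^{0|\ul{0}}=\big(\R^0,\R\big)$ has global functions $\cO\big(\big|\R^{0|\ul{0}}\big|\big)=\R$, regarded as the $\Z_2^n$-graded (concentrated in degree $\ul{0}$) $\Z_2^n$-commutative associative unital $\R$-algebra. Hence the task reduces to showing that $\R$ is an \emph{initial object} in the category $\Z_2^n\catname{Alg}$. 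This is the routine fact that for any unital $\R$-algebra $A$ there is exactly one unital $\R$-algebra morphism $\R\to A$, namely $\lambda\mapsto\lambda\cdot 1_A$; the requirement that the morphism preserve the unit forces $1\mapsto 1_A$, and $\R$-linearity then determines the map on all of $\R$, while grading-preservation is automatic since $\R$ sits in degree $\ul{0}$ and the unit of $A$ is a degree-$\ul{0}$ element. Dualizing, $\R^{0|\ul{0}}$ is terminal.

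The one point that needs genuine care, rather than pure category theory, is the base-map component: a priori a $\Z_2^n$-morphism is a pair $(|\phi|,\phi^*)$, and one must check that the unique algebra morphism $\psi\colon\R\to\cO(|M|)$ produced above really does correspond to a bona fide $\Z_2^n$-morphism with the expected continuous base map $|\phi|\colon|M|\to\R^0$. Here I would invoke the explicit reconstruction of the base map recalled just before the proposition: $|\phi|$ sends $m\simeq\ker\zve_m$ to $\ker(\zve_m\circ\psi)\in\op{Spm}(\R)$. Since $\R^0$ is a single point, $|\phi|$ is forced to be the unique constant map, which is trivially continuous, so there is no obstruction and no freedom here either. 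The main (and only) thing to be vigilant about is that the bijection of Proposition~\ref{Global-to-local} already packages together both the continuous base map and the sheaf morphism, so once we know $\R$ is initial in $\Z_2^n\catname{Alg}$ the uniqueness of the full geometric morphism $M\to\R^{0|\ul{0}}$ is guaranteed without any separate argument. I would close by noting that terminality in $\Z_2^n\catname{Pts}$ follows since the subcategory is full and contains $\R^{0|\ul{0}}$.
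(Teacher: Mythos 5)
Your proposal is correct and follows essentially the same route as the paper: the paper's one-line proof likewise passes through the bijection of Proposition~\ref{Global-to-local} and observes that the unique unital $\R$-algebra morphism $\R\to\cO_M(|M|)$ is $r\mapsto r\cdot\Id_M$, i.e., that $\R$ is initial in $\Z_2^n\catname{Alg}$. Your additional remarks (the base map being forced, and terminality in the full subcategory $\Z_2^n\catname{Pts}$ being automatic) are accurate elaborations of points the paper leaves implicit.
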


\begin{proof}The unique morphism $M \longrightarrow \R^{0|\ul{0}}$ corresponds to the morphism $\R\ni r\cdot 1\mapsto r \cdot \Id_M\in\cO_M(|M|)$, where $\Id_M$ is the unit function.\end{proof}

\begin{Proposition}\label{prop:DirSet}
The object set $\textnormal{Ob}(\Z_2^n\catname{Pts}) \simeq \textnormal{Ob}(\Z_2^n\catname{GrAlg})$ is a directed set.
\end{Proposition}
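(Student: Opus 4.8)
The plan is to exhibit an explicit preorder on the object set and then verify directedness by a componentwise-maximum construction. Recall that a directed set is a preordered set in which every pair of elements admits a common upper bound. By Proposition~\ref{prop:PtsGrAlg} it suffices to work with $\textnormal{Ob}(\Z_2^n\catname{GrAlg})$, whose elements are the $\Z_2^n$-Grassmann algebras $\Lambda^{\ul q}$ indexed by tuples $\ul q = (q_1,\dots,q_N) \in \N^N$, with $N = 2^n-1$. First I would introduce the relation
\begin{gather*} \Lambda^{\ul p} \preceq \Lambda^{\ul q} \quad :\Longleftrightarrow\quad p_i \le q_i \ \text{for all } i \in \{1,\dots,N\} ,\end{gather*}
namely the componentwise order on the indexing tuples. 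Reflexivity and transitivity are inherited directly from the usual order on $\N$, so $\preceq$ is a preorder (in fact a partial order).

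Next I would supply the categorical content that makes this the natural choice. Whenever $\ul p \preceq \ul q$, one may select, for each degree $i$, a subset of $p_i$ among the $q_i$ generators of $\Lambda^{\ul q}$ of that degree; the graded subalgebra they generate is canonically isomorphic to $\Lambda^{\ul p}$, and the ensuing inclusion
\begin{gather*} \iota_{\ul p \ul q}\colon \ \Lambda^{\ul p} \hookrightarrow \Lambda^{\ul q}\end{gather*}
is a morphism of $\Z_2^n$-Grassmann algebras in the sense of Definition~\ref{def:Z2nGrassmann}, since it sends each generator to a generator of the same $\Z_2^n$-degree and hence respects both the grading and the unit. Dually, $\iota_{\ul p \ul q}$ corresponds to a morphism $\R^{0|\ul q} \to \R^{0|\ul p}$ of $\Z_2^n$-points. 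Thus $\preceq$ records precisely the subalgebra relation among the $\Lambda^{\ul q}$.

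Finally, for directedness, given two tuples $\ul p, \ul q$ I would set $\ul r := \big(\max(p_i,q_i)\big)_{i=1}^N$. Then $\ul p \preceq \ul r$ and $\ul q \preceq \ul r$, realized concretely by the two inclusions $\Lambda^{\ul p} \hookrightarrow \Lambda^{\ul r} \hookleftarrow \Lambda^{\ul q}$, so $\Lambda^{\ul r}$ is a common upper bound and the object set is directed. The only genuine point of care is the choice of preorder itself: the naive ``there exists a morphism $\Lambda^{\ul p}\to\Lambda^{\ul q}$'' relation is useless here, because the augmentation followed by the unit $\Lambda^{\ul p}\to\R\hookrightarrow\Lambda^{\ul q}$ (using the decomposition $\Lambda=\R\oplus\mathring{\Lambda}$) is always available and would collapse the preorder to the indiscrete one. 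Restricting to the embeddings $\iota_{\ul p \ul q}$, equivalently using the componentwise order, is what yields a meaningful directed structure, after which the maximum construction settles the claim.
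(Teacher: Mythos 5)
Your proof is correct and follows essentially the same route as the paper: you define the componentwise preorder $\Lambda^{\ul p}\preceq\Lambda^{\ul q}\Leftrightarrow p_i\le q_i$ and obtain a common upper bound by taking componentwise maxima, exactly as in the paper's proof. Your additional remarks on realizing the order via generator-preserving inclusions $\Lambda^{\ul p}\hookrightarrow\Lambda^{\ul q}$ are a pleasant (and correct) elaboration, but they do not alter the argument.
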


\begin{proof}Given any $\ul{m} = (m_1, m_2, \dots, m_N)$ and $\ul{n} = (n_1, n_2, \dots, n_N)$, we write $\zL^{\ul{m}} \leq \zL^{\ul{n}}$ if and only if $m_i \leq n_i$, for all~$i$. This preorder makes the non-empty set of $\Z_2^n$-Grassmann algebras into a directed set, since, any $\Lambda^{\ul{m}}$ and $\Lambda^{\ul{n}}$ admit $\zL^{\ul{p}}$, where $p_i=\sup\{m_i,n_i\}$, as upper bound.
\end{proof}

 We will need the following functional analytic result in later sections of this paper. See De\-fi\-ni\-tions~\ref{Appx:FSpace} and~\ref{Appx:FAlgebra} for the notion of Fr\'{e}chet space and Fr\'{e}chet algebra, respectively.

\begin{Proposition}\label{Prop:GAlgFrec}
The algebra of functions of any $\Z_2^n$-point is a $\Z_2^n$-graded $\Z_2^n$-commutative nuclear Fr\'{e}chet algebra.
\end{Proposition}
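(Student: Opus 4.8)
The plan is to recognize the function algebra $C^\infty(\R^{0|\ul m}) = \Lambda^{\ul m}$ as a countable projective limit of finite-dimensional algebras and to read off all four assertions from that description. Two of them are essentially free: by Definition~\ref{def:Z2nGrassmann} the algebra $\Lambda := \Lambda^{\ul m}$ is by construction a $\Z_2^n$-graded $\Z_2^n$-commutative associative unital $\R$-algebra, so only the analytic statements --- that $\Lambda$ is a Fr\'echet algebra and that it is nuclear --- require work.

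First I would set up the filtration by the powers $\mathring{\Lambda}^{k}$ of the ideal $\mathring{\Lambda}$ generated by the generators $\zx^\alpha$. Since there are only finitely many generators, and the nilpotent (odd) ones contribute exponents in $\{0,1\}$ while the non-nilpotent (even, nonzero degree) ones contribute exponents in $\N$, the monomials of total order strictly less than $k$ span a finite-dimensional space; hence each quotient $\Lambda/\mathring{\Lambda}^{k}$ is a finite-dimensional $\Z_2^n$-graded $\Z_2^n$-commutative unital $\R$-algebra. By the very definition of formal power series (equivalently, by the $\mathcal{J}$-adic Hausdorff-completeness proposition above, specialised to a $\Z_2^n$-point), the canonical map $\Lambda \to \lim_{\leftarrow k}\Lambda/\mathring{\Lambda}^{k}$ is an isomorphism of algebras, and $\bigcap_k \mathring{\Lambda}^{k} = 0$.

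Next I would put the projective-limit topology on $\Lambda$. On each finite-dimensional quotient $\Lambda/\mathring{\Lambda}^{k}$ fix a submultiplicative norm (every finite-dimensional unital algebra carries one, e.g.\ the operator norm of the injective left regular representation), pull it back to a seminorm $\|\cdot\|_k$ on $\Lambda$, and replace the family by $\tilde p_N := \max_{k\le N}\|\cdot\|_k$ to make it increasing while keeping submultiplicativity. The countable increasing family $(\tilde p_N)_N$ of submultiplicative seminorms separates points because $\bigcap_k \mathring{\Lambda}^{k} = 0$, so $\Lambda$ is metrizable and locally (multiplicatively) convex; completeness follows since a projective limit of complete (here finite-dimensional, hence Banach) spaces along a countable directed set is complete. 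Thus $\Lambda$ is a Fr\'echet space in the sense of Definition~\ref{Appx:FSpace}, and submultiplicativity of the $\tilde p_N$ yields jointly continuous multiplication, making $\Lambda$ a Fr\'echet algebra in the sense of Definition~\ref{Appx:FAlgebra}.

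Finally, nuclearity, which is the only genuinely nontrivial step. The crucial observation is that $\Lambda$ is a countable projective limit of finite-dimensional spaces: finite-dimensional spaces are nuclear, and nuclearity is stable under countable projective limits (a projective limit is a closed subspace of a product, and both subspaces and products of nuclear spaces are nuclear), whence $\Lambda$ is nuclear. Concretely, when at least one even nonzero-degree generator is present the underlying topological vector space of $\Lambda$ is isomorphic to the space $\R^{\N}$ of all real sequences with its product topology --- the standard prototype of a nuclear Fr\'echet space --- while when no even generator is present $\Lambda$ is finite-dimensional and the claim is trivial. The main obstacle is therefore not the algebraic bookkeeping but invoking the permanence properties of nuclear spaces correctly; everything else reduces to the elementary identification of $\Lambda$ with the inverse limit of its truncations.
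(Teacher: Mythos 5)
Your proof is correct, but it takes a genuinely different route from the paper's. The paper disposes of the proposition in one line, by citing \cite[Theorem~14]{Bruce:2018}: the structure sheaf of \emph{any} $\Z_2^n$-manifold is a nuclear Fr\'echet sheaf of $\Z_2^n$-graded $\Z_2^n$-commutative algebras, and a $\Z_2^n$-point is a special case. You instead exploit what is special about $\Z_2^n$-points: since the base is a single point, the truncations $\Lambda/\mathring{\Lambda}^{k}$ are finite-dimensional, so $\Lambda\simeq\lim_{\leftarrow k}\Lambda/\mathring{\Lambda}^{k}$ is a countable projective limit of finite-dimensional algebras; the Fr\'echet-algebra structure then comes from pulled-back submultiplicative norms (your maximum trick does preserve submultiplicativity, and the kernel of the $k$-th pulled-back seminorm is exactly $\mathring{\Lambda}^{k}$, so separation follows from $\bigcap_{k}\mathring{\Lambda}^{k}=0$), and nuclearity comes from permanence of nuclearity under products and subspaces, or concretely from the identification of $\Lambda$ with $\R^{\N}$ in the infinite-dimensional case. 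Each approach buys something: the paper's citation covers all $\Z_2^n$-manifolds at once, a setting where your finite-dimensionality argument is unavailable (there the truncations are only of finite rank over $C^{\infty}$, and nuclearity of $C^{\infty}$ itself must enter), while your argument is self-contained and elementary, and yields as by-products a submultiplicative defining family of seminorms and the explicit topological model $\R^{\N}$. One small addition you should make: the proposition concerns the function algebra with its canonical topology (coefficient-wise convergence, as in the formal power series example of Appendix~\ref{appx:AManifolds} and in \cite{Bruce:2018}), so you should note that your projective-limit topology coincides with it --- immediate, since each coefficient of a series factors through a finite truncation and, conversely, each truncation map is coefficient-wise continuous.
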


The proposition is a special case of the fact that the structure sheaf of \emph{any} $\Z_2^n$-manifold is a~nuclear Fr\'{e}chet sheaf of $\Z_2^n$-graded $\Z_2^n$-commutative algebras \cite[Theorem~14]{Bruce:2018}.

Moreover, as a direct consequence of \cite[Theorem~19, Definition~13]{Bruce:2018b}, we observe that the category of $\Z_2^n$-points admits all finite categorical products; in particular: $\R^{0|\ul{m}}\times \R^{0|\ul{n}} \simeq \R^{0|\ul{m}+\ul{n}}$.
By restricting attention to elements of degree $0 \in \Z_2^n$, we get the following corollary. See Definition~\ref{Appex:FModule} for the concept of Fr\'{e}chet module.

\begin{Corollary}\label{Coro:Lambda0Frec}
The set $\Lambda_0$ of degree $0$ elements of an arbitrary $\Z_2^n$-Grassmann algebra $\zL$ is a~commutative nuclear Fr\'{e}chet algebra. Moreover, the algebra $\Lambda$ can canonically be considered as a Fr\'{e}chet $\Lambda_{0}$-module.
\end{Corollary}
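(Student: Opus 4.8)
The plan is to deduce everything from Proposition~\ref{Prop:GAlgFrec} together with the standard permanence properties of Fr\'echet and nuclear spaces under passage to closed subspaces. By Proposition~\ref{prop:PtsGrAlg} the $\Z_2^n$-Grassmann algebra $\Lambda = \Lambda^{\underline q}$ is the algebra of functions of the $\Z_2^n$-point $\R^{0|\underline q}$, so Proposition~\ref{Prop:GAlgFrec} already tells us that $\Lambda$ is a $\Z_2^n$-graded $\Z_2^n$-commutative nuclear Fr\'echet algebra. It therefore suffices to transfer these properties to the degree-zero part and to exhibit the module structure.

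First I would record the purely algebraic facts. Since $\langle \underline 0, \underline 0\rangle = 0$, any two elements $a, b \in \Lambda_0$ satisfy $ab = (-1)^{\langle \underline a, \underline b\rangle}ba = ba$, so $\Lambda_0$ is commutative; being closed under multiplication and containing the unit (which has degree $\underline 0$), it is a unital subalgebra. More is true: because $\langle \underline 0, \underline b\rangle = 0$ for every homogeneous $b$, the subalgebra $\Lambda_0$ is in fact \emph{central} in $\Lambda$. This centrality is what makes the left multiplication action $\Lambda_0 \times \Lambda \to \Lambda$ coincide with the right one, so that $\Lambda$ becomes a genuine $\Lambda_0$-module rather than merely a bimodule.

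Next comes the topological step, which is the only point requiring real care. The $\Z_2^n$-grading of the nuclear Fr\'echet algebra $\Lambda$ is a \emph{topological} grading: the projection $\pi_{\underline 0}\colon \Lambda \to \Lambda_0$ onto the degree-zero component is continuous and idempotent, whence $\Lambda_0 = \ker(\id_\Lambda - \pi_{\underline 0})$ is a closed subspace of $\Lambda$. Concretely, for a $\Z_2^n$-point the coefficients of the formal power series are real numbers and the Fr\'echet topology is the coefficientwise product topology, in which each homogeneous component is manifestly closed. A closed subspace of a Fr\'echet space is again Fr\'echet, and a subspace of a nuclear space is again nuclear; hence $\Lambda_0$, equipped with the subspace topology and the restricted multiplication (jointly continuous, being a restriction of the jointly continuous multiplication of $\Lambda$), is a commutative nuclear Fr\'echet algebra.

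Finally, for the module assertion I would note that the action $\Lambda_0 \times \Lambda \to \Lambda$ obtained by restricting the multiplication of $\Lambda$ is jointly continuous and bilinear over $\Lambda_0$, so by Definition~\ref{Appex:FModule} it endows $\Lambda$ with the canonical structure of a Fr\'echet $\Lambda_0$-module. The main (and essentially only) obstacle is justifying the closedness of $\Lambda_0$, i.e.\ that the abstract $\Z_2^n$-grading supplied by Proposition~\ref{Prop:GAlgFrec} is compatible with the Fr\'echet topology; once this is in hand, everything else is routine inheritance of structure along a closed subalgebra.
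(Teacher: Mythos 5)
Your proposal is correct and follows essentially the same route as the paper: the paper derives this corollary directly from Proposition~\ref{Prop:GAlgFrec} with the one-line remark that one ``restricts attention to elements of degree $0$'', relying (as it does again later, in the proof of Proposition~\ref{LambdaFrechet}) on the facts that homogeneous components of $\Lambda=\R[[\zvy]]$ are closed in the coefficientwise Fr\'echet topology and that Fr\'echet/nuclear properties pass to closed subspaces. Your write-up simply makes explicit the details (commutativity and centrality of $\Lambda_0$, closedness of the degree-zero component, joint continuity of the restricted multiplication) that the paper leaves implicit or delegates to the cited reference.
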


\begin{Remark}Specialising to the $n=1$ case, we recover the standard and well-known facts about superpoints and their relation with Grassmann algebras.
\end{Remark}

\subsection[A convenient generating set of $\Z_2^n\catname{Man}$]{A convenient generating set of $\boldsymbol{\Z_2^n\catname{Man}}$}

It is clear that studying just the underlying topological points of a $\Z_2^n$-manifold is inadequate to probe the graded structure. Much like the category of supermanifolds, where the set of superpoints forms a generating set, the set of $\Z_2^n$-points forms a generating set for the category of $\Z_2^n$-manifolds. For the classical case of standard supermanifolds, see for example \cite[Theorem~3.3.3]{Sachse:2007}. For the general notion of a generating set, see Definition~\ref{def:GeneratingSet}.

\begin{Theorem}\label{thm:generatingset}
The set $\textnormal{Ob} \big(\Z_2^n\catname{Pts}\big)$ constitutes a generating set for $\Z_2^n\catname{Man}$.
\end{Theorem}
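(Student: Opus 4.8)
The plan is to verify directly the defining property of a generating set (Definition~\ref{def:GeneratingSet}): given two \emph{distinct} parallel morphisms $\zvf,\psi\colon M\to N$, I must exhibit a $\Zn$-point $P$ and a morphism $\pi\colon P\to M$ with $\zvf\circ\pi\neq\psi\circ\pi$. The first move is to trade morphisms for their algebraic shadows via Proposition~\ref{Global-to-local}: a $\Zn$-morphism is faithfully recorded by its pullback on global functions, so applied to the composites it is enough to find $\pi$ with $\pi^*\circ\zvf^*\neq\pi^*\circ\psi^*$ on $\cO_N(|N|)$. Since $\zvf\neq\psi$ yields, again by Proposition~\ref{Global-to-local}, a function $h\in\cO_N(|N|)$ with $g:=\zvf^*(h)-\psi^*(h)\neq 0$ in $\cO_M(|M|)$, and since each $\pi^*$ is $\R$-linear, the whole theorem collapses to one key lemma: \emph{$\Zn$-points separate functions}, i.e., for every nonzero $g\in\cO_M(|M|)$ there is a $\Zn$-point $\pi\colon P\to M$ with $\pi^*(g)\neq 0$. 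Indeed, detecting such a $g$ immediately gives $(\zvf\circ\pi)^*\neq(\psi\circ\pi)^*$ and hence $\zvf\circ\pi\neq\psi\circ\pi$.

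To prove the lemma I would localize. As $g\neq 0$, choose a chart $U\cong\mathcal{U}^{p|\ul q}$ with coordinates $(x^a,\zx^\alpha)$ on which $g$ is nonzero, and write $g=\sum_\alpha \zx^\alpha g_\alpha(x)$ as a formal power series with smooth coefficients; then some coefficient satisfies $g_{\alpha_0}(x_0)\neq 0$ at a point $x_0\in\mathcal{U}^p$. Now take for $P$ the $\Zn$-point $\R^{0|\ul q}$ whose graded dimension matches that of the chart, with $\cO(P)=\Lambda^{\ul q}$ generated by one parameter $\zeta^i$ for each formal coordinate $\zx^i$, of the same $\Z_2^n$-degree. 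By the chart theorem, a morphism into the $\Zn$-domain is completely determined by the pullbacks of the coordinates, the only constraint being that the reduced image lands in $\mathcal{U}^p$; I therefore set $\pi^*(x^a)=x_0^a$ (a constant, hence admissible since $\zve\,x_0^a=x_0^a\in\mathcal{U}^p$) and $\pi^*(\zx^i)=\zeta^i$, and compose with the open embedding $U\hookrightarrow M$ to land in $M$.

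It then remains to compute the pullback. Because $\Zn$-morphisms are continuous for the $\mathcal{J}$-adic topology while both $\cO_M$ and $\Lambda^{\ul q}$ are Hausdorff complete (the Proposition preceding this subsection), $\pi^*$ acts term-by-term on the formal series; as it evaluates smooth functions of $x$ at $x_0$ and carries each monomial $\zx^\alpha$ to the corresponding monomial $\zeta^\alpha$, one gets
\begin{gather*}
\pi^*(g)=\sum_\alpha \zeta^\alpha\, g_\alpha(x_0).
\end{gather*}
The monomials $\zeta^\alpha$ occurring here — arbitrary powers of the even non-zero-degree generators together with at most first powers of the odd ones, matching precisely the monomials that may appear in an element of $\cO_M$ — form a topological basis of $\Lambda^{\ul q}$, so the right-hand side vanishes only if every $g_\alpha(x_0)=0$; since $g_{\alpha_0}(x_0)\neq 0$, I conclude $\pi^*(g)\neq 0$. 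This proves the lemma, and hence the theorem, once one notes that $\mathrm{Ob}\big(\Zn\catname{Pts}\big)$ is a genuine (countable) set, as required.

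I expect the last paragraph to be the main obstacle: justifying rigorously that $\pi^*$ commutes with the infinite formal sum and that the resulting monomials are linearly independent in the formal-power-series algebra $\Lambda^{\ul q}$. This is exactly where the non-nilpotency of the even non-zero-degree generators — the very feature separating $\Zn$-manifolds ($n>1$) from ordinary supermanifolds — must be controlled, and where Hausdorff completeness of the $\mathcal{J}$-adic topology replaces the polynomiality argument available in the super case.
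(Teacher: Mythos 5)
Your proof is correct, but it is organized around a genuinely different reduction than the paper's. The paper works at the level of sheaf morphisms and splits into two cases: if $|\zvf|\neq|\psi|$, the point $\R^{0|\ul{0}}$ together with the reconstruction theorem of \cite{Bruce:2018b} already separates; if $|\zvf|=|\psi|$ but $\zvf^*\neq\psi^*$, the disagreement is first localized to a pair of coordinate patches, a separating $\Z_2^n$-point is built for the induced morphisms of $\Z_2^n$-domains (Lemma~\ref{LemGenerating}), and the separation is then lifted back to the morphisms $M\to N$ --- this localization-and-lifting is the longest and most delicate part of the paper's argument. You instead invoke Proposition~\ref{Global-to-local} to identify morphisms with their pullbacks on global functions, and use $\R$-linearity of pullbacks to convert ``the pullbacks of $\zvf$ and $\psi$ differ'' into ``the single global section $g=\zvf^*(h)-\psi^*(h)$ is nonzero''; the theorem then collapses to the lemma that $\Z_2^n$-points detect nonzero functions, and that lemma localizes trivially by the sheaf separation axiom (no case analysis, no patching). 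The separating point itself is identical in both proofs: evaluate the reduced coordinates at a point $x_0$ where some coefficient is nonzero and send the formal coordinates to generators of $\Lambda^{\ul{q}}$ of matching degree; likewise the technical core --- that this pullback acts term-by-term (degree-preserving algebra morphisms are automatically $\mathcal{J}$-adically continuous, and the targets are Hausdorff complete, with the evaluation $\pi^*(x^a)=x_0^a$ having no formal part so that no Taylor expansion intervenes) and that the monomials $\zeta^\alpha$ with odd generators to powers at most one are linearly independent in $\R[[\zeta]]$ --- is exactly what the paper uses silently in Lemma~\ref{LemGenerating}. The trade-off is that your route consumes the injectivity half of Proposition~\ref{Global-to-local}, a nontrivial theorem imported from \cite{Bruce:2018b}, whereas the paper's sheaf-level proof only needs the reconstruction of base points; since that proposition is stated and available in the paper, your proof is complete as it stands, and arguably cleaner.
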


\begin{proof}Let $\zvf=(|\zvf|,\zvf^*)$ and $\psi=(|\psi|,\psi^*)$ be two distinct $\Z_2^n$-morphisms $\zvf,\psi\colon M\to N$ between two $\Z_2^n$-manifolds $M=(|M|,\cO_M)$ and $N=(|N|,\cO_N)$. These morphisms have distinct smooth base maps \begin{gather*} |\zvf|,|\psi|\colon \ |M|\to|N|,\end{gather*} or, if $|\zvf|=|\psi|$, they have distinct pullback morphisms of sheaves of algebras \begin{gather*}\zvf^*,\psi^*\colon \ \cO_N\to |\zvf|_*\cO_M .\end{gather*}

If $|\zvf|\neq|\psi|$, there is at least one point $m\in|M|$, such that $|\zvf|(m)\neq|\psi|(m).$ Let now $s\colon \R^{0|\ul{0}}\to M$ be the $\Z_2^n$-morphism, which corresponds to the $\Z_2^n\tt Alg$ morphism $s^*\colon \cO_M(|M|)\ni f\mapsto(\zve f)(m)\in\R,$ where $\zve$ is the sheaf morphism $\zve\colon \cO_M\to\Ci_{|M|}$. It follows from the reconstruction theorem \cite[Theorem~9]{Bruce:2018b} that the base morphism $|s|\colon \{\star\}\to|M|$ maps $\star$ to $m$. Hence, the $\Z_2^n$-morphisms $\zvf\circ s$ and $\psi\circ s$ have distinct base maps.

Assume now that $|\zvf|=|\psi|$, so that there exists $|V|\subset |N|$, such that $\zvf^*_{|V|}\neq\psi^*_{|V|},$ i.e., such that $\zvf^*_{|V|}f\neq\psi^*_{|V|}f,$ for some function $f\in\cO_N(|V|)$. A cover of $|V|$ by coordinate patches $(\cV_i)_i$, induces a cover $|U_i|:=|\phi|^{-1}(\cV_i)$ of $|U|:=|\phi|^{-1}(|V|)$. It follows that \begin{gather*} (\zvf^*_{|V|}f)|_{|U_i|}\neq(\psi^*_{|V|}f)|_{|U_i|} ,\end{gather*} for some fixed $i$, i.e., that \begin{gather*} \zvf^*_{\cV_i}(f|_{\cV_i})\neq\psi^*_{\cV_i}(f|_{\cV_i}) ,\end{gather*} so that $\zvf^*_{\cV_i}\neq\psi^*_{\cV_i}$.

Recall that, for any open subset $|X|\subset|M|$, there is a $\Z_2^n$-morphism \begin{gather*}\iota_X\colon \ (|X|,\cO_M|_{|X|})\to (|M|,\cO_M),\end{gather*} whose base map $|\iota_X|$ is the inclusion and whose pullback $\iota_X^*$ is the obvious restriction. Further, any $\Z_2^n$-morphism $\phi\colon M\to N$, whose base map $|\phi|\colon |M|\to |N|$ is valued in an open subset $|Y|$ of $|N|$, induces a $\Z_2^n$-morphism \begin{gather*}\phi_Y\colon \ (|M|,\cO_M)\to(|Y|,\cO_N|_{|Y|}),\end{gather*} whose base map $|\phi_Y|$ is the map $|\phi|\colon |M|\to |Y|$ and whose pullback $\phi_Y^*$ is the pullback $\phi^*$ restricted to $\cO_N|_{|Y|}$.

In view of the above, if $(\cU_j)_j$ is a cover of $|U_i|$ by coordinate domains, we have \begin{gather}\label{dist}(\zvf^*_{\cV_i}(f|_{\cV_i}))|_{\cU_j}\neq(\psi^*_{\cV_i}(f|_{\cV_i}))|_{\cU_j},\end{gather} for some fixed $j$. This implies that the $\Z_2^n$-morphisms $(\phi\circ\iota_{\cU_j})_{\cV_i}$ and $(\psi\circ\iota_{\cU_j})_{\cV_i}$ from the $\Z_2^n$-domain $\cU_j=(\cU_j,\Ci_{\cU_j}[[\zx]])$ to the $\Z_2^n$-domain $\cV_i=(\cV_i,\Ci_{\cV_i}[[\theta]])$ are different. More precisely, they have the same base map $|\phi|=|\psi|\colon \cU_j\to\cV_i$, but their pullbacks are distinct. Indeed, these sheaf morphisms' algebra maps at $\cV_i$ are the maps $\iota^*_{\cU_j,|U_i|}\circ \phi^*_{\cV_i}$ and $\iota^*_{\cU_j,|U_i|}\circ \psi^*_{\cV_i}$ from $\Ci_{\cV_i}(y)[[\theta]]$ to $\Ci_{\cU_j}(x)[[\zx]]$, where $y$ runs through $\cV_i$ and $x$ through $\cU_j$, and the values of these algebra maps at $f|_{\cV_i}$ are different (see equation~\eqref{dist}).

In view of Lemma~\ref{LemGenerating}, there is a $\Z_2^n$-morphism $s\colon \R^{0|\ul{m}}\to \cU_j$, such that \begin{gather*}(\phi\circ\iota_{\cU_j})_{\cV_i}\circ s\neq(\psi\circ\iota_{\cU_j})_{\cV_i}\circ s.\end{gather*} However, then the $\Z_2^n$-morphism $\iota_{\cU_j}\circ s\colon \R^{0|\ul{m}}\to M$ separates $\phi$ and $\psi$, since the algebra maps at $\cV_i$ of the pullbacks $(s^*\circ\iota_{\cU_j}^*)\circ\phi^*$ and $(s^*\circ\iota_{\cU_j}^*)\circ\psi^*$ differ. Indeed, as the $\Z_2^n$-morphisms $(\phi\circ\iota_{\cU_j})_{\cV_i}$ and $(\psi\circ\iota_{\cU_j})_{\cV_i}$ are fully determined by the pullbacks of the target coordinates, their pullbacks at $\cV_i$ differ for at least one coordinate~$y^b$,~$\theta^B$. It follows from the proof of Lemma~\ref{LemGenerating} that the pullback $s^*_{\cU_j}\circ(\iota_{\cU_j,|U_i|}^*\circ\phi^*_{\cV_i})$ at $\cV_i$ of $(\phi\circ\iota_{\cU_j})_{\cV_i}\circ s$ and the similar pullback for $\psi$ differ for the same coordinate. However, the pullback at $\cV_i$ considered is also the algebra map at $\cV_i$ of the pullback $(s^*\circ\iota_{\cU_j}^*)\circ\phi^*$, so that the pullbacks $(s^*\circ\iota_{\cU_j}^*)\circ\phi^*$ and $(s^*\circ\iota_{\cU_j}^*)\circ\psi^*$ are actually distinct.
\end{proof}

It remains to prove the following

\begin{Lemma}\label{LemGenerating} The statement of Theorem~{\rm \ref{thm:generatingset}} holds for any two distinct $\Z_2^n$-morphisms between $\Z_2^n$-domains.\end{Lemma}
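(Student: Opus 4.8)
The plan is to reduce everything, via the chart theorem, to the claim that a single well-chosen $\Z_2^n$-point detects a nonzero function, and then to exhibit such a point explicitly. First I would record that a $\Z_2^n$-morphism into a $\Z_2^n$-domain $\cV=(\cV^r,\Ci_{\cV^r}[[\theta]])$ is completely determined by the pullbacks of the target coordinates $(y^b,\theta^B)$. Consequently, for any $\Z_2^n$-point $s\colon\R^{0|\ul m}\to\cU^{p|\ul q}$ one has $\zvf\circ s\neq\psi\circ s$ as soon as $s^*(\zvf^*(c))\neq s^*(\psi^*(c))$ for a single target coordinate $c$. Since $\zvf\neq\psi$, the chart theorem also furnishes a coordinate $c\in\{y^b,\theta^B\}$ with $\zvf^*(c)\neq\psi^*(c)$; I set $g:=\zvf^*(c)-\psi^*(c)\neq 0$ in $\Ci(\cU^p)[[\zx]]$ and expand $g=\sum_\zm g_\zm(x)\,\zx^\zm$ with $g_\zm\in\Ci(\cU^p)$. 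As $s^*$ is $\R$-linear, the whole lemma reduces to producing one $\Z_2^n$-point $s$ with $s^*(g)\neq 0$.

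For the construction I would take the $\Z_2^n$-point $\R^{0|\ul q}$ whose $\Z_2^n$-Grassmann algebra is $\zL^{\ul q}$ itself, with generators $\zz^\za$ of the same degrees as the formal coordinates $\zx^\za$, and define $s$ by the coordinate assignments $s^*(\zx^\za)=\zz^\za$ and $s^*(x^a)=x_0^a\cdot 1$ for a base point $x_0\in\cU^p$ to be fixed in a moment. The degree constraints and the image condition $\zve(s^*(x^a))=x_0^a\in\cU^p$ hold, so by the chart theorem these assignments define a genuine morphism $s$. The decisive simplification is that the `soul' $\mathring{x}^a:=s^*(x^a)-x_0^a\cdot 1$ vanishes, so the Taylor-type substitution formula that governs a pullback collapses to ordinary evaluation at $x_0$.

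Concretely, since $s^*$ is a continuous algebra morphism for the $J$-adic topology and $\zL^{\ul q}$ is Hausdorff complete, I would compute the pullback of the series $g$ term by term, obtaining the convergent series $s^*(g)=\sum_\zm g_\zm(x_0)\,\zz^\zm$, where each $s^*(g_\zm)=g_\zm(x_0)\cdot 1$ because $g_\zm$ is a degree-zero function pulled back along a soul-free body. Now $g\neq 0$ provides an index $\zm_0$ with $g_{\zm_0}\not\equiv 0$, hence a point $x_0\in\cU^p$ with $g_{\zm_0}(x_0)\neq 0$; I choose this $x_0$. Because the monomials $\zz^\zm$ form a topological basis of $\zL^{\ul q}$ (odd generators contributing only exponents $0$ and $1$, even nonzero-degree generators contributing all powers), the coefficients of $s^*(g)$ are uniquely read off as $g_\zm(x_0)$; in particular the coefficient of $\zz^{\zm_0}$ is $g_{\zm_0}(x_0)\neq 0$, so $s^*(g)\neq 0$ and $\zvf\circ s\neq\psi\circ s$. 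The same point distinguishes distinct base maps with no extra work: if $|\zvf|\neq|\psi|$ then already $\zvf^*(y^b)\neq\psi^*(y^b)$ for some $b$, and taking $c=y^b$ the argument applies verbatim.

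The step I expect to be the real obstacle is not the algebra but the justification of `pullback $=$ coefficientwise evaluation'. One must verify that the morphism delivered by the chart theorem acts on the full power series $g$ exactly as the termwise substitution $\zx^\za\mapsto\zz^\za$, $x^a\mapsto x_0^a$, which hinges on the continuity of $s^*$ together with the $J$-adic Hausdorff completeness recorded earlier. This is precisely the place where the $\Z_2^n$-setting with $n>1$ departs from ordinary superdomains: the non-nilpotence of the even nonzero-degree coordinates forces $g$ to carry infinitely many monomials, so one argues with the topological power-series basis of $\zL^{\ul q}$ rather than a finite Grassmann basis, and the convergence of $s^*(g)$ must be taken seriously rather than being automatic.
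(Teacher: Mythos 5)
Your construction is exactly the paper's: both proofs build the separating $\Z_2^n$-point $s\colon\R^{0|\ul q}\to\cU^{p|\ul q}$ with $s^*(x^a)$ a well-chosen real constant and $s^*(\zx^A)$ the corresponding generator of $\zL^{\ul q}$, so that pullback collapses to coefficientwise evaluation at the base point, and then pick the base point where a differing coefficient function is nonzero. Your only (harmless) deviations are working with the difference $g=\zvf^*(c)-\psi^*(c)$ instead of comparing the two series directly, and treating the case $|\zvf|\neq|\psi|$ in-line via the degree-zero coordinates rather than deferring to the proof of Theorem~\ref{thm:generatingset}.
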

\begin{proof}

We consider two $\Z_2^n$-domains $\mathcal{U}^{p|\ul{q}}$ and $\mathcal{V}^{r|\ul{s}}$ together with two distinct $\Z_2^n$-morphisms
 \begin{gather*} \mathcal{U}^{p|\ul{q}} \mathrel{\mathop{ \overset{\longrightarrow}{\longrightarrow} }^{\phi}_{\psi}} \mathcal{V}^{r|\ul{s}}.\end{gather*}
As in the general case above, there are two cases to consider: either $|\phi|\neq |\psi|$, or $|\phi| = |\psi|$ and $\phi^*\neq \psi^*$. In the proof of Theorem~\ref{thm:generatingset}, we showed that in the first case, the maps $\phi$ and $\psi$ can be separated. In the second case, since a $\Z_2^n$-morphism valued in a $\Z_2^n$-domain is fully defined by the pullbacks of the coordinates, these global $\Z_2^n$-functions $\phi^*_{\cV^r}(Y^i), \psi^*_{\cV^r}(Y^i)\in\Ci_{\cU^p}(x)[[\zx]]$ differ for at least one coordinate $Y^i=y^b$ or $Y^i=\theta^B$. Let $B$ be an index, such that
\begin{gather*}
\phi^*_{\cV^r}\big(\theta^B\big) = \sum_{|\za|=1}^\infty \phi_\za^B(x) \zx^{\za},\qquad
\psi^*_{\cV^r}\big(\theta^B\big) = \sum_{|\za|=1}^\infty \psi_\za^B(x) \zx^{\za},
\end{gather*}
where we denoted the coordinates of $\cU^{p|\ul{q}}$ by $\big(x^a, \zx^A\big)$ and used the standard multi-index notation, differ. This means that the functions $\phi^B_\za(x)$ and $\psi^B_\za(x)$ differ for at least one $\za$ and at least one $x\in\cU^p$, say for $\za=\mathfrak{a}$ and $x=\mathfrak{x}\in\cU^p\subset\R^p$. From this, we can construct the separating $\Z_2^n$-morphism
 \begin{gather*} \R^{0|\ul{q}} \stackrel{s}{\longrightarrow} \mathcal{U}^{p|\ul{q}} \mathrel{\mathop{ \overset{\longrightarrow}{\longrightarrow} }^{\phi}_{\psi}} \mathcal{V}^{r|\ul{s}}.\end{gather*}
Let us denote the coordinates of $\R^{0|\ul{q}}$ by $\chi^A$. We then define the $\Z_2^n$-morphism $s$ by setting
 \begin{alignat*}{3}
& s_{\cU^p}^*x^a = \mathfrak{x}^a\in\R[[\chi]], \qquad && \deg\big(\mathfrak{x}^a\big)=\deg\big(x^a\big),&\\
& s_{\cU^p}^*\zx^A = \chi^A\in\R[[\chi]], \qquad && \deg\big(\chi^A\big)=\deg\big(\zx^A\big).&
 \end{alignat*}
It is clear that $\phi\circ s\neq\psi\circ s$, since \begin{gather*}\sum_{|\za|=1}^\infty \phi_\za^B(\mathfrak{x})\chi^\za=s_{\cU^p}^*\big(\phi_{\cV^r}^*\big(\theta^B\big)\big)\neq s_{\cU^p}^*\big(\psi_{\cV^r}^*\big(\theta^B\big)\big)=\sum_{|\za|=1}^\infty \psi_\za^B(\mathfrak{x})\chi^\za.\end{gather*}
The case where $\phi^*_{\cV^r}(Y^i)\neq\psi^*_{\cV^r}(Y^i)$ for $Y^i=y^b$ is almost identical. In particular, we then have
 \begin{gather*}
 \phi_{\cV^r}^*\big(y^b\big) = |\phi|^b(x) + \sum_{|\za|=2}^\infty\phi_\za^b(x)\zx^\za,\\
 \psi_{\cV^r}^*\big(y^b\big) = |\psi|^b(x) + \sum_{|\za|=2}^\infty\psi_\za^b(x)\zx^\za.
\end{gather*}
Since we know that $|\phi|=|\psi|$, we can proceed as for $Y^i=\theta^B$.
\end{proof}

In view of Proposition \ref{RestYonProp}, we get the
\begin{Corollary}\label{cor:RestYon}
The restricted Yoneda functor
\begin{gather*}\mathcal{Y}_{\Z_2^n\catname{Pts}}\colon \ \Z_2^n\catname{Man}\ni M\mapsto \emph{\Hom}_{\Z_2^n\catname{Man}}\big(-, M\big)\in \big [\Z_2^n\catname{Pts}^{\rmo},\catname{Set} \big]\end{gather*}
is faithful.
\end{Corollary}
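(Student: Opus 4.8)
The plan is to deduce this corollary formally from Theorem~\ref{thm:generatingset} together with the abstract categorical result recorded in Proposition~\ref{RestYonProp}. Recall that $\Z_2^n\catname{Man}$ is locally small, so that all the Hom-sets below are genuine sets and the functor category $\big[\Z_2^n\catname{Pts}^{\rmo},\catname{Set}\big]$ is well defined. Faithfulness of $\mathcal{Y}_{\Z_2^n\catname{Pts}}$ means precisely that, for every pair $M,N$ of $\Z_2^n$-manifolds, the assignment $\phi\mapsto\mathcal{Y}_{\Z_2^n\catname{Pts}}(\phi)$ is injective on $\Hom_{\Z_2^n\catname{Man}}(M,N)$. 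Here $\mathcal{Y}_{\Z_2^n\catname{Pts}}(\phi)$ is the natural transformation whose component at a $\Z_2^n$-point $S$ is the post-composition map $\Hom_{\Z_2^n\catname{Man}}(S,M)\ni s\mapsto \phi\circ s\in\Hom_{\Z_2^n\catname{Man}}(S,N)$.

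First I would unwind this injectivity into its contrapositive form: given two distinct morphisms $\phi,\psi\colon M\to N$, one must exhibit a $\Z_2^n$-point $S$ and a morphism $s\colon S\to M$ with $\phi\circ s\neq \psi\circ s$, for then the $S$-components of $\mathcal{Y}_{\Z_2^n\catname{Pts}}(\phi)$ and $\mathcal{Y}_{\Z_2^n\catname{Pts}}(\psi)$ already differ, whence the two natural transformations are distinct. But this is exactly the defining property of $\textnormal{Ob}\big(\Z_2^n\catname{Pts}\big)$ being a generating (separating) set, in the sense of Definition~\ref{def:GeneratingSet}, which Theorem~\ref{thm:generatingset} has just established. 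Thus the corollary is obtained by applying the general equivalence ``generating set $\Longleftrightarrow$ faithful restricted Yoneda functor'' of Proposition~\ref{RestYonProp} to the specific generating set provided by Theorem~\ref{thm:generatingset}.

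Accordingly, there is essentially no obstacle left at this stage: all the analytic and sheaf-theoretic work has been carried out in the proof of Theorem~\ref{thm:generatingset} and its supporting Lemma~\ref{LemGenerating}, where distinct morphisms between $\Z_2^n$-domains were separated by an explicitly constructed $\Z_2^n$-point and then globalised. The only point requiring care is purely bookkeeping: one must verify that ``separation by morphisms out of the generators'' and ``injectivity of $\phi\mapsto\mathcal{Y}_{\Z_2^n\catname{Pts}}(\phi)$'' are literally the same condition, i.e., that the component-wise description of $\mathcal{Y}_{\Z_2^n\catname{Pts}}(\phi)$ as post-composition correctly identifies $\phi\circ s$ with the image of $s$ under that component. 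This is immediate from the definition of the Yoneda embedding in~\eqref{YonEmb}, and Proposition~\ref{RestYonProp} packages it once and for all, so the corollary follows at once.
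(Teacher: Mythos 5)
Your proposal is correct and follows exactly the paper's own route: the paper deduces Corollary~\ref{cor:RestYon} immediately by applying the equivalence of Proposition~\ref{RestYonProp} (generating set $\Leftrightarrow$ faithful restricted Yoneda functor) to the generating set supplied by Theorem~\ref{thm:generatingset}. Your additional unwinding of faithfulness into the separation property is just the content of Proposition~\ref{RestYonProp} made explicit, so nothing further is needed.
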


Above, we wrote $M(-)\in[\Z_2^n{\tt Man}^{\op{op}},{\tt Set}]$ for the image of $M\in\Z_2^n\tt Man$ by the non-restricted Yoneda functor. If no confusion arises, we will use the same notation $M(-)$ for the image $\cY_{\Z_2^n\tt Pts}(M)\in[\Z_2^n{\tt Pts}^{\op{op}},{\tt Set}]$ of $M$ by the restricted Yoneda functor.

\begin{Definition}\label{def:LambdaPtsM} Let $M$ be an object of $\Z_2^n\tt Man$ and $\Lambda\simeq\R^{0|\ul{m}}$ an object of $\Z_2^n{\tt GrAlg}\simeq\Z_2^n{\tt Pts}^{\op{op}}$. We refer to the set
\begin{gather*}
M(\Lambda):= \Hom_{\Z_2^n\catname{Man}}\big(\R^{0|\ul{m}}, M\big) \simeq \Hom_{\Z_2^n\catname{Alg}} \big ( \cO(|M|), \Lambda\big)\end{gather*} as the set of \emph{$\Lambda$-points} of $M$.
\end{Definition}

\begin{Proposition}\label{prop:LocalNature}Let \begin{gather*}m^* \in \emph{\Hom}_{\Z_2^n \catname{Alg}}\big( \cO(|M|), \Lambda\big)\end{gather*} be a $\zL$-point of $M$ and let $s \in \cO(|M|)$. The $\zL$-point $m^*$ can equivalently be viewed as a $\Z_2^n$-morphism \begin{gather*}m=(|m|,m^*)\in\emph{\Hom}_{\Z_2^n\tt Man}\big(\R^{0|\ul{m}},M\big) \end{gather*} and therefore it defines a unique topological point $x:=|m|(\star)\in|M|$. If $|U|\subset|M|$ is an open neighbourhood of $x$, such that $s|_{|U|} =0$, then $m^*(s)=0$.
\end{Proposition}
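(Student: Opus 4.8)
The plan is to reduce the statement to the locality of sheaf morphisms. By Proposition~\ref{Global-to-local}, the $\zL$-point $m^*\in\Hom_{\Z_2^n\catname{Alg}}(\cO(|M|),\zL)$ is exactly the global component $m^*_{|M|}$ of the sheaf morphism $m^*\colon\cO_M\to|m|_*\cO_{\R^{0|\ul m}}$ belonging to the $\Z_2^n$-morphism $m=(|m|,m^*)$ (Definition~\ref{LRSDefi}). Since such a sheaf morphism commutes with restrictions, I will show that the value $m^*(s)=m^*_{|M|}(s)$ depends only on the germ of $s$ at $x=|m|(\star)$, which vanishes by hypothesis.

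First I would record the structure of the target (pushforward) sheaf. The underlying space of $\R^{0|\ul m}$ is the one-point space $\{\star\}$ with $\cO_{\R^{0|\ul m}}(\{\star\})=\zL$, and $|m|(\star)=x$. Hence, for any open $|V|\subset|M|$ one has $(|m|_*\cO_{\R^{0|\ul m}})(|V|)=\cO_{\R^{0|\ul m}}(|m|^{-1}(|V|))$, which equals $\zL$ precisely when $x\in|V|$ (so that $|m|^{-1}(|V|)=\{\star\}$) and is the zero algebra otherwise. The key observation is that for the two opens $|M|\supset|U|\ni x$ we have $|m|^{-1}(|M|)=|m|^{-1}(|U|)=\{\star\}$; therefore the restriction map $(|m|_*\cO_{\R^{0|\ul m}})(|M|)\to(|m|_*\cO_{\R^{0|\ul m}})(|U|)$ is, by definition of the pushforward, the restriction $\cO_{\R^{0|\ul m}}(\{\star\})\to\cO_{\R^{0|\ul m}}(\{\star\})$ of the structure sheaf from $\{\star\}$ to itself, that is, the identity $\id_\zL$.

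Next I would invoke the compatibility of $m^*$ with restrictions applied to the inclusion $|U|\subset|M|$: the square relating $m^*_{|M|}$ and $m^*_{|U|}$ commutes, so that for our section $s$,
\begin{gather*}
\mathrm{res}^{|m|_*\cO}_{|M|,|U|}\big(m^*_{|M|}(s)\big)=m^*_{|U|}\big(s|_{|U|}\big).
\end{gather*}
By hypothesis $s|_{|U|}=0$, and an algebra morphism sends $0$ to $0$, so the right-hand side is $m^*_{|U|}(0)=0$. Since the left-hand restriction map is $\id_\zL$ by the previous step, we conclude $m^*(s)=m^*_{|M|}(s)=0$.

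I do not expect any serious obstacle, as the content is precisely the locality of morphisms of (locally $\Z_2^n$-ringed) spaces. The only point demanding care is the verification that the restriction map on the target side is genuinely $\id_\zL$; this rests on the elementary but essential fact that the preimage under $|m|$ of \emph{every} open neighbourhood of $x$ is the whole one-point space $\{\star\}$, so that no information is lost upon restricting to $|U|$. Everything else is bookkeeping.
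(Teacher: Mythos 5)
Your proof is correct and follows essentially the same route as the paper: both arguments rest on the compatibility of the sheaf morphism $m^*$ with restrictions, together with the observation that $|m|^{-1}(|U|)=\{\star\}$ for every open neighbourhood $|U|$ of $x$, so that restricting on the target side loses nothing and $m^*(s)=m^*_{|U|}(s|_{|U|})=m^*_{|U|}(0)=0$. Your extra bookkeeping with the pushforward sheaf $|m|_*\cO_{\R^{0|\ul m}}$ only makes explicit what the paper's proof states directly in terms of restrictions to $|m|^{-1}(|V|)$.
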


\begin{proof}Since $m^*\colon \cO_M\to \cO_{\R^{0|\ul m}}$ is a sheaf morphism, it commutes with restrictions, i.e., for any open subsets $|V|\subset |U|\subset |M|$ and any $s\in\cO_M(|U|)$, we have $m^*_{|U|}(s)\in\cO_{\R^{0|\ul m}}\big(|m|^{-1}(|U|)\big)$ and
\begin{gather*}(m^*_{|U|}(s))|_{|m|^{-1}(|V|)}= m^*_{|V|}(s|_{|V|})\in\cO_{\R^{0|\ul m}}\big(|m|^{-1}(|V|)\big).\end{gather*}
It follows that $m^*(s)=m^*_{|M|}(s)\in\zL=\cO_{\R^{0|\ul m}}(\{\star\})$ reads
\begin{gather*}m^*(s)=(m^*_{|M|}(s))|_{\{\star\}}=(m^*_{|M|}(s))|_{|m|^{-1}(|U|)}=m^*_{|U|}(s|_{|U|})=0.\tag*{\qed}\end{gather*}\renewcommand{\qed}{}
\end{proof}

\begin{Lemma}\label{RestTarget} There is a $1:1$ correspondence \begin{gather*}M(\Lambda) \simeq \bigcup_{x\in |M|} \emph{\Hom}_{\Z_2^n\catname{Alg}}\big( \cO_{M,x} , \Lambda \big) \end{gather*} between the set of $\zL$-points of $M$ and the set of morphisms from the stalks of $\cO_M$ to $\zL$. The set \begin{gather*}M_x(\Lambda) := \emph{\Hom}_{\Z_2^n\catname{Alg}}\big( \cO_{M,x} , \Lambda \big)\end{gather*} is referred to as the set of $\Lambda$-points near~$x$.\end{Lemma}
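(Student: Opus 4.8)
The plan is to construct explicit mutually inverse maps, the whole argument hinging on the local nature of $\Lambda$-points established in Proposition~\ref{prop:LocalNature} together with the surjectivity of the germ map $\rho_x\colon\cO(|M|)\to\cO_{M,x}$, $s\mapsto s_x$, sending a global section to its germ at $x$.

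First I would verify that $\rho_x$ is surjective. Given a germ at $x$ represented by a local section $f\in\cO_M(|U|)$ with $x\in|U|$, I choose a smooth bump function $\chi$ on $|M|$ that is supported in $|U|$ and identically equal to $1$ on a neighbourhood of $x$; regarding $\chi$ as a degree $\ul 0$ section of $\cO_M$, the product $\chi f\in\cO_M(|U|)$ vanishes outside $\operatorname{supp}\chi$ and therefore glues with the zero section on $|M|\setminus\operatorname{supp}\chi$ to a global section $g\in\cO(|M|)$ with $g_x=f_x$. Hence every germ at $x$ is the germ of a global section.

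Next I would define the forward map. Let $m^*\in\Hom_{\Z_2^n\catname{Alg}}(\cO(|M|),\Lambda)$ be a $\Lambda$-point; by Proposition~\ref{prop:LocalNature} it determines the point $x:=|m|(\star)\in|M|$, and $m^*(s)=0$ whenever $s$ vanishes on a neighbourhood of $x$. Consequently, if $\rho_x(s)=\rho_x(t)$, i.e.\ if $s-t$ vanishes near $x$, then $m^*(s)=m^*(t)$; combined with the surjectivity of $\rho_x$, this produces a (necessarily unique, $\Z_2^n$-graded) algebra morphism $\ol{m^*}\colon\cO_{M,x}\to\Lambda$ with $m^*=\ol{m^*}\circ\rho_x$. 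I send $m^*$ to the pair $\big(x,\ol{m^*}\big)$ in $\bigcup_{x\in|M|}\Hom_{\Z_2^n\catname{Alg}}(\cO_{M,x},\Lambda)$. In the opposite direction, to $(x,\mu)$ with $\mu\in\Hom_{\Z_2^n\catname{Alg}}(\cO_{M,x},\Lambda)$ I associate the $\Lambda$-point $\mu\circ\rho_x$.

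Finally I would check these are mutually inverse. The composite forward-then-backward returns $\ol{m^*}\circ\rho_x=m^*$ by construction. For backward-then-forward, surjectivity of $\rho_x$ immediately forces $\ol{\mu\circ\rho_x}=\mu$, so it only remains to see that the point attached to $\mu\circ\rho_x$ is again $x$. This point is read off from $\zve\circ\mu\circ\rho_x$, and I would argue that $\zve\circ\mu\colon\cO_{M,x}\to\R$ is the canonical reduction: being an $\R$-algebra morphism onto the field $\R$ its kernel is maximal, and since $\cO_{M,x}$ is a local ring with residue field $\R$ that kernel must coincide with the quotient by the unique maximal ideal $\mathfrak{m}_x$; hence $\zve\circ\mu\circ\rho_x=\zve_x$ and the attached point is $x$. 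The main obstacle I anticipate is the surjectivity of $\rho_x$, which relies on the existence of bump functions in the $\Z_2^n$-graded setting, together with the bookkeeping ensuring that the forward map lands in the summand of the disjoint union indexed by the correct topological point.
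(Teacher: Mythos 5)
Your proof is correct, but it takes a genuinely different route from the paper's. The paper gets the forward map essentially for free: a $\Lambda$-point, viewed through the reconstruction theorem (Proposition~\ref{Global-to-local}) as a morphism $m=(|m|,m^*)$ of $\Z_2^n$-ringed spaces, induces a morphism on stalks, $\phi_x[t_U]_x:=m^*_{|U|}t_U$, so no factorization argument is needed; the weight of the proof then falls on the composite $\psi_y\mapsto\zm^*\mapsto\phi_x$, which the paper handles with the localization isomorphism $\cO(|M|)\cdot S_U^{-1}\simeq\cO(|U|)$ of \cite{Bruce:2018b} (every section over $|U|$ is $F|_{|U|}\,s|_{|U|}^{-1}$ with $F$, $s$ global and $\zve s$ invertible on $|U|$), and it cites \cite{Bruce:2018b} again to identify the base point. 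You instead stay entirely on the algebraic side: your forward map is obtained by factoring the global algebra morphism through the germ map $\rho_x$, which costs you Proposition~\ref{prop:LocalNature} plus surjectivity of $\rho_x$; after that, both composite checks are essentially formal, and you identify the base point by a clean local-ring argument (the kernel of any unital algebra morphism $\cO_{M,x}\to\R$ is a proper ideal of codimension one, hence equals the unique maximal ideal $\mathfrak{m}_x$). So you trade the localization theorem for the softer extension-of-germs argument, making the lemma more self-contained, whereas the paper avoids constructing global extensions of germs altogether by leaning on previously established machinery.

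One detail to tighten: there is no canonical splitting $C^\infty_{|M|}\to\cO_M$, so a classical bump function cannot simply be ``regarded as a degree $\ul{0}$ section'' over an arbitrary open set $|U|$. The standard fix is to shrink to a coordinate chart $|V|\subset|U|$ around $x$, take the bump function $\chi$ supported in $|V|$ and equal to $1$ near $x$, lift it inside the chart to $\tilde\chi\in\cO_M(|V|)\simeq\Ci(|V|)[[\zx]]$ (the series with sole coefficient $\chi$), and glue $\tilde\chi\cdot f|_{|V|}$ with the zero section on $|M|\setminus\operatorname{supp}\chi$. This produces the required global representative of the germ, and the rest of your argument goes through unchanged.
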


\begin{proof}Any $\Lambda$-point $m^*$ or $m=(|m|,m^*)$ defines a topological point $x=|m|(\star)\in|M|$, as well as a $\Z_2^n\tt Alg$-morphism $\phi_x\in\Hom_{\Z_2^n\tt Alg}(\cO_{M,x},\zL)$ between stalks. This morphism is given, for any $t_U\in\cO(|U|)$ defined in some neighbourhood $|U|$ of $x$ in $|M|$, by \begin{gather*}\phi_x[t_U]_x=m^*_\star[t_U]_x=[m^*_{|U|}t_U]_\star=m^*_{|U|}t_U.\end{gather*}

Conversely, any morphism $\psi_y\in\Hom_{\Z_2^n\tt Alg}(\cO_{M,y},\zL)$ ($y\in|M|$) between stalks defines a $\zL$-point $\zm^*\in\Hom_{\Z_2^n \catname{Alg}}\big( \cO(|M|),$ $\Lambda\big)$. It suffices to set \begin{gather*}\zm^*t=\psi_y[t]_y\in\zL,\end{gather*} for all $t\in\cO(|M|)$.

It remains to check that the composites $m^*\mapsto \phi_x\mapsto \zm^*$ and $\psi_y\mapsto \zm^*\mapsto \phi_x$ are identities. In the first case, for any $t\in\cO(|M|)$, we get $\zm^*t=\phi_x[t]_x=m^*t$, so that $\zm^*=m^*$. In the second case, we need the following reconstruction results. Let $|U|\subset|M|$ be an open subset and set \begin{gather*}S_U=\big\{s\in\cO^0(|M|)\colon (\zve s)|_{|U|} \ \text{is invertible in} \ \Ci(|U|)\big\}.\end{gather*} Then the localization map $\zl_U\colon \cO(|M|)\cdot S_U^{-1}\to \cO(|U|)$ is an isomorphism in $\Z_2^n\tt Alg$. More precisely, for any $t_U\in\cO(|U|)$, there is a unique $Fs^{-1}\in\cO(|M|)\cdot S_U^{-1}$, such that $t_U=F|_{|U|}s|_{|U|}^{-1}$ (if $s\in S_U$, then $s|_{|U|}$ is invertible in $\cO(|U|)$), and we identify $Fs^{-1}$ with $t_U$. For the proof of these statements or more details on them, see \cite[Proposition~3.5.]{Bruce:2018b}. It is further clear from the results of \cite[Proposition~3.1.]{Bruce:2018b} that $x=|\zm|(\star)$ is the topological point~$y$.

We now compute the second composite above. For any $t_U$ defined in a neighborhood $|U|$ of~$x$, we get
\begin{gather*}\phi_x[t_U]_x=\zm^*_{|U|}\big(Fs^{-1}\big)=\zm^*(F) \zm^*(s)^{-1}\\
\hphantom{\phi_x[t_U]_x}{}= \psi_x[F]_x (\psi_x[s]_x)^{-1}=\psi_x[F]_x \psi_x\big([s]_x^{-1}\big)=\psi_x\big([F|_{|U|}]_x\big[s|_{|U|}^{-1}\big]_x\big)=\psi_x[t_U]_x,\end{gather*}
where the second equality is part of the reconstruction theorem of $\Z_2^n$-morphisms~\cite{Bruce:2018b}.
\end{proof}

Let us consider an open cover $(|U_I|)_{I \in \A}$ of the smooth manifold $|M|$, as well as the open $\Z_2^n$-submanifolds $U_I := \big(|U_I|, \cO_M|_{|U_I|} \big)$ of the $\Z_2^n$-manifold $M$ (which need \emph{not} be coordinate charts).

\begin{Proposition}\label{prop:CoverM}For any $\Z_2^n$-Grassmann algebra $\Lambda$ and $\Z_2^n$-manifold $M = \big( |M|, \cO_M\big)$, we have a natural $1:1$ correspondence
\begin{gather*}M(\Lambda) \simeq \bigcup_{I\in \A} U_I(\Lambda),\end{gather*}
so that the family of sets $(U_I(\Lambda))_{I \in \A}$ is a cover of the set $M(\Lambda)$.
\end{Proposition}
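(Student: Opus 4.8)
The plan is to reduce the statement to the stalk-wise decomposition of $\Lambda$-points furnished by Lemma~\ref{RestTarget}, applied both to $M$ and to each open $\Z_2^n$-submanifold $U_I$. First I would invoke Lemma~\ref{RestTarget} to write
\begin{gather*}
M(\Lambda)\simeq\bigcup_{x\in|M|}M_x(\Lambda),\qquad M_x(\Lambda)=\Hom_{\Z_2^n\catname{Alg}}\big(\cO_{M,x},\Lambda\big),
\end{gather*}
where the union is indexed by the topological point $x=|m|(\star)$ attached to each $\Lambda$-point $m$, which is well defined by Proposition~\ref{prop:LocalNature}. Applying the same lemma to the $\Z_2^n$-manifold $U_I=(|U_I|,\cO_M|_{|U_I|})$ gives the analogous decomposition $U_I(\Lambda)\simeq\bigcup_{x\in|U_I|}\Hom_{\Z_2^n\catname{Alg}}(\cO_{U_I,x},\Lambda)$.

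The key step is the stalk identity $\cO_{U_I,x}=\cO_{M,x}$ for every $x\in|U_I|$. This holds because $U_I$ carries, by definition, the restricted structure sheaf $\cO_M|_{|U_I|}$, and the stalk of a sheaf at an interior point of an open set is unchanged under restriction to that set. Consequently $\Hom_{\Z_2^n\catname{Alg}}(\cO_{U_I,x},\Lambda)=M_x(\Lambda)$ for each $x\in|U_I|$. I would then check that the map on $\Lambda$-points induced functorially by the open immersion $\iota_I\colon U_I\to M$ (namely $m_I\mapsto\iota_I\circ m_I$, i.e.\ the component at $\Lambda$ of the natural transformation $\iota_I(-)=\Hom(-,\iota_I)$) is precisely the identity on these stalk-wise summands. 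This simultaneously yields injectivity of $U_I(\Lambda)\hookrightarrow M(\Lambda)$, so that each $U_I(\Lambda)$ may be regarded as the subset $\{m\in M(\Lambda)\colon|m|(\star)\in|U_I|\}$ of $M(\Lambda)$, and gives naturality of the correspondence in $\Lambda$.

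It then remains only to take the union over $I$. Since $(|U_I|)_{I\in\A}$ covers $|M|$, we have $\bigcup_{I\in\A}|U_I|=|M|$, whence
\begin{gather*}
\bigcup_{I\in\A}U_I(\Lambda)=\bigcup_{I\in\A}\bigcup_{x\in|U_I|}M_x(\Lambda)=\bigcup_{x\in|M|}M_x(\Lambda)=M(\Lambda),
\end{gather*}
which is the asserted $1:1$ correspondence and exhibits $(U_I(\Lambda))_{I\in\A}$ as a cover of $M(\Lambda)$.

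The hard part will be the middle step: confirming that the abstract bijections of Lemma~\ref{RestTarget} for $M$ and for $U_I$ are genuinely compatible with the geometric inclusion $\iota_I$, that is, that a $\Lambda$-point landing topologically inside $|U_I|$ really does factor through $U_I$ as the \emph{same} stalk morphism. This is exactly what Proposition~\ref{prop:LocalNature} secures, since it forces $m^*$ to depend only on germs at $x$ and hence makes the factorization through $\cO_{M,x}=\cO_{U_I,x}$ canonical. Finally I would note that overlaps $|U_I|\cap|U_J|$ create no inconsistency in the union: on them both $U_I(\Lambda)$ and $U_J(\Lambda)$ restrict to the very same summands $M_x(\Lambda)$ viewed inside $M(\Lambda)$, so the family is a cover rather than a partition.
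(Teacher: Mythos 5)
Your proposal is correct and follows essentially the same route as the paper's proof: both reduce the statement to Lemma~\ref{RestTarget} applied to $M$ and to each $U_I$, use the stalk identity $\cO_{U_I,x}=\cO_{M,x}$ for $x\in|U_I|$, and conclude by the covering property $\bigcup_I|U_I|=|M|$. The additional compatibility check you flag (that the bijections of Lemma~\ref{RestTarget} intertwine with the inclusion $\iota_I$ on $\Lambda$-points) is left implicit in the paper's one-line computation, so your write-up is simply a more careful version of the same argument.
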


\begin{proof}Since it is clear from the definition of a stalk that $\cO_{U_I,x} = \cO_{M,x}$, for any $x\in|U_I|$, it follows from Lemma~\ref{RestTarget} that
\begin{gather*}
\bigcup_{I \in \A} U_{I}(\Lambda) \simeq \bigcup_{I \in \A}\bigcup_{x\in |U_I|} \Hom_{\Z_2^n\catname{Alg}}\big( \cO_{M,x} , \Lambda \big) = \bigcup_{x\in |M|}\Hom_{\Z_2^n\catname{Alg}}\big( \cO_{M,x} , \Lambda \big) \simeq M(\Lambda).\tag*{\qed}
\end{gather*}\renewcommand{\qed}{}
\end{proof}

Recall that \begin{gather*}\Hom_{\Z_2^n\tt Man}(-,-)\in\big[\Z_2^n{\tt Man}, \big[\Z_2^n{\tt Pts}^{\op{op}},{\tt Set}\big]\big],\end{gather*} so that,
\begin{enumerate}\itemsep=0pt
\item[(i)] any $\Z_2^n$-morphism $\phi=(|\phi|,\phi^*):M\to N$ is mapped (injectively) to a natural transformation
\begin{gather*}\phi\simeq\Hom_{\Z_2^n\tt Man}(-,\phi)\colon \ \Hom_{\Z_2^n\tt Man}(-,M)\to \Hom_{\Z_2^n\tt Man}(-,N),\end{gather*}
whose $\zL$-component ($\zL\simeq\R^{0|\ul{m}}$) is the $\tt Set$-map given by
\begin{gather*}
\phi_\zL:=\Hom_{\Z_2^n\tt Man}(\zL,\phi)\colon \ M(\zL)=\Hom_{\Z_2^n\tt Man}\big(\R^{0|\ul{m}},M\big)\simeq\Hom_{\Z_2^n\tt Alg}(\cO(|M|),\zL)\ni m^*\\
\hphantom{\phi_\zL}{} \mapsto
m^*\circ \phi^*\in\Hom_{\Z_2^n\tt Alg}(\cO(|N|),\zL)\simeq\Hom_{\Z_2^n\tt Man}\big(\R^{0|\ul{m}},N\big)=N(\zL), \qquad \text{and}, \end{gather*}
\item[(ii)] for any fixed $M\in\Z_2^n\tt Man$, given a morphism $\psi=(|\psi|,\psi^*)\colon \R^{0|\ul{m}'}\to \R^{0|\ul{m}}$ of $\Z_2^n$-points, or, equivalently, a morphism $\psi^* \colon \Lambda \rightarrow \Lambda'$ of $\Z_2^n$-Grassmann algebras, we get the induced $\tt Set$-map
\begin{gather} M(\psi^*):=\Hom_{\Z_2^n\tt Man}(\psi,M)\colon \ M(\zL)=\Hom_{\Z_2^n\tt Man}\big(\R^{0|\ul{m}},M\big)\nonumber\\
\hphantom{M(\psi^*)}{} \simeq\Hom_{\Z_2^n\tt Alg}(\cO(|M|),\zL)\ni m^* \mapsto \psi^*\circ m^*\in\Hom_{\Z_2^n\tt Alg}(\cO(|M|),\zL')\nonumber\\
\hphantom{M(\psi^*)}{} \simeq\Hom_{\Z_2^n\tt Man}\big(\R^{0|\ul{m}'},M\big)=M(\zL'). \label{AlgMorLbdPts}\end{gather}
\end{enumerate}
When reading the maps $\phi_\zL$ and $M(\psi^*)$ through the $1:1$ correspondence
\begin{gather*}M(\Lambda)\ni m^* \mapsto (x,m^*_\star)\in\bigcup_{y\in |M|} {\Hom}_{\Z_2^n\catname{Alg}}\big( \cO_{M,y} , \Lambda \big), \end{gather*} where $x=|m|(\star)$, we obtain
\begin{gather*}
\phi_\Lambda \colon \ M(\Lambda) \longrightarrow N(\Lambda),\\
\hphantom{\phi_\Lambda \colon}{} \ (x, m_\star^*) \mapsto (|\phi|(x), m_\star^* \circ \phi^*_x), \qquad \text{and}
\\ 
M(\psi^*)\colon \ M(\Lambda) \longrightarrow M(\Lambda'),\\
\hphantom{M(\psi^*)\colon}{} \ (x, m_\star^*) ~ \mapsto (x, \psi^* \circ m_\star^*).
\end{gather*}

\subsection{Restricted Yoneda functor and fullness}

\newcommand{\mr}{\mathrm}
\newcommand{\rim}{\mathring}

The Yoneda functor from any locally small category $\tt C$ into the category of $\tt Set$-valued contravariant functors on $\tt C$, is fully faithful. This holds in particular for ${\tt C}=\Z_2^n\tt Man$. When we restrict the contravariant functors to the generating set $\Z_2^n\tt Pts$, the resulting restricted Yoneda functor is automatically faithful. In the following, we show that it is not full, i.e., that not all natural transformations are induced by a $\Z_2^n$-morphism.

Naturality of any transformation $\phi : M(-) \rightarrow N(-) $ between $\tt Set$-valued contravariant (resp., covariant) functors on $\Z_2^n\tt Pts$ (resp., $\Z_2^n\tt GrAlg$), means that the diagram
\begin{gather*}
\leavevmode
\begin{xy}
(0,20)*+{ M(\Lambda)}="a"; (40,20)*+{N(\Lambda)}="b";
(0,0)*+{M(\Lambda')}="c"; (40,0)*+{N(\Lambda')}="d";%
{\ar "a";"b"}?*!/_3mm/{\phi_\Lambda};%
{\ar "a";"c"}?*!/^6mm/{M(\psi^*)};%
{\ar "b";"d"}?*!/_6mm/{N(\psi^*)};%
{\ar "c";"d"}?*!/^3mm/{\phi_{\Lambda'}};%
\end{xy}
\end{gather*}
commutes, for any morphism $\psi^*\colon \Lambda \rightarrow \Lambda'$ of $\Z_2^n$-Grassmann algebras.

A $\Lambda$-point of a $\Z_2^n$-manifold $M$ is denoted by $m^*$ or $m=(|m|,m^*)$. If the manifold is a~$\Z_2^n$-domain $\cU^{p|\ul{q}}$, we use the notation $\mathrm{x}^*$ or $\mathrm{x}=(|\mathrm{x}|,\mathrm{x}^*)$. If $\big(x^a,\zx^A\big)$ are the coordinates of $\mathcal{U}^{p|\ul{q}}$, a~$\Lambda$-point $\mathrm{x}^*$ in $\cU^{p|\ul{q}}$ is completely determined by the degree-respecting pullbacks \begin{gather*}\big(x^a_\zL,\zx^A_\zL\big):=\big(\mathrm{x}^*\big(x^a\big),\mr{x}^*\big(\zx^A\big)\big).\end{gather*} Since $x^a_\zL\in\zL_0=\R\oplus\mathring{\Lambda}_0$, we write $x^a_\zL=\big(x^a_{||},\mathring{x}^a_\zL\big)$. Hence, any $\zL$-point $\mr{x}^*$ in $\cU^{p|\ul{q}}$ can be identified with
\begin{gather} \label{eqn:LambdaCoords}
\mr{x}^*\simeq\big(x^a_\zL,\zx^A_\zL\big)=\big(x^a_{||},\mathring{x}_\Lambda^a,\zx^A_\Lambda \big) \in \R^p \times \mathring{\Lambda}_{0}^p \times \mathring{\Lambda}_{\gamma_1}^{q_1} \times \dots \times \mathring{\Lambda}_{\gamma_N}^{q_N},
\end{gather}
where \begin{gather*}x_{||}=\big(x^a_{||}\big)=\big(\dots, x^a_{||}, \dots\big)\in\cU^p,\end{gather*} and where $\zg_1,\dots,\zg_N$ denote the non-zero $\Z_2^n$-degrees in standard order. Here the $\mathring{x}_\Lambda^a$ (resp., the~$\zx^A_\Lambda$) are formal power series containing at least $2$ (resp., at least $1$) of the generators $\big(\theta^C\big)$ of the $\Z_2^n$-Grassmann algebra $\Lambda$.

As mentioned above, any $\Z_2^n$-morphism, in particular any morphism $\phi\colon \mathcal{U}^{p|\ul{q}} \rightarrow \mathcal{V}^{r|\ul{s}}$ between $\Z_2^n$-domains, naturally induces a natural transformation, with $\zL$-component
\begin{gather*}\phi_\Lambda \colon \ \mathcal{U}^{p|\ul{q}}(\Lambda)\ni \mr{x}^* \mapsto \mr{x}^*\circ\phi^*\in \mathcal{V}^{r|\ul{s}}(\Lambda).\end{gather*}
If $\big(y^b,\zh^B\big)$ are the coordinates of $\cV^{r|\ul{s}}$, the morphism $\phi$ reads
\begin{gather*}
 \phi^*\big(y^b\big)=\sum_{|\za|\ge 0}\phi_\za^b(x) \zx^\za,\qquad
\phi^*\big(\zh^B\big)=\sum_{|\za|> 0}\phi_\za^B(x) \zx^\za,
\end{gather*}
where the right-hand sides have the appropriate degrees and where $\phi_0(\cU^p)\subset\cV^r$. Further, the image $\zL$-point $\mr{x}^*\circ\phi^*$ in $\cV^{r|\ul{s}}$ by $\phi_\zL$ of the $\zL$-point $\mr{x}^*\simeq\big(\mr{x}^*(x^a);\mr{x}^*\big(\zx^A\big)\big)=\big(x^a_{||},\mathring{x}_\Lambda^a;\zx^A_\Lambda \big)$ in $\cU^{p|\ul{q}}$, is given by
\begin{subequations}
\begin{gather}
 y^b_\Lambda = \sum_{|\za|\ge 0}\sum_{|\zb|\ge 0}\frac{1}{\zb!} \big(\partial_{x}^\zb\phi_\za^b\big)(x_{||}) \mathring{x}_\zL^\zb \zx_\zL^\za, \label{eqn:smoothzero}\\
 \zh^B_\Lambda =\sum_{|\za|> 0}\sum_{|\zb|\ge 0}\frac{1}{\zb!} \big(\partial_x^\zb\phi_\za^B)(x_{||}\big) \mathring{x}_\zL^\zb \zx_\zL^\za. \label{eqn:smoothnonzero}
\end{gather}
\end{subequations}
Let us recall that there is no convergence issue with terms in $x_{||}$~\cite{Covolo:2016}. Thus the components of a~natural transformation implemented by a $\Z_2^n$-morphism between $\Z_2^n$-domains, are very particular formal power series in the formal variables $\mathring{x}^a_{\Lambda}$ and $\zx_\Lambda^A$, which are themselves formal power series in the generators $\big(\theta^C\big)$ of $\Lambda$.

We are now able to prove that not all natural transformations between the restricted functors $M(-),N(-)\in[\Z_2^n\tt Pts,Set]$ associated with $M,N\in\Z_2^n\tt Man$, arise from a $\Z_2^n$-morphism $M\to N$. Since it suffices to give one counter-example, we choose $M=N=\R^{p|\ul{0}}=\R^p$.

\begin{Example}Consider an arbitrary diffeomorphism $\phi\colon \R^p\longrightarrow \R^p$. The $\zL$-component of the associated natural transformation is
\begin{gather*}
\phi_\Lambda \colon \ \R^{p|\ul{0}}(\Lambda) \longrightarrow \R^{p|\ul{0}}(\Lambda),\\
 \hphantom{\phi_\Lambda \colon}{} \ (x^b_\Lambda, 0)\mapsto \bigg(\phi^b(x_{||})+\sum_{|\zb|> 0}\frac{1}{\zb!} (\partial_{x}^\zb\phi^b)(x_{||}) \mathring{x}_\zL^\zb, \, 0\bigg).
\end{gather*}
From this data we obtain another natural transformation
\begin{gather*}
\alpha_\Lambda \colon \ \R^{p|\ul{0}}(\Lambda) \longrightarrow \R^{p|\ul{0}}(\Lambda),\\
\hphantom{\alpha_\Lambda \colon}{} \ \big(x^b_\Lambda, 0\big) \mapsto \big(\phi^b(x_{||}), \, 0\big).
\end{gather*}
The natural transformation $\alpha$ is not implemented by a morphism $\psi\colon \R^p\to\R^p$. Indeed, otherwise $\za_\zL=\psi_\zL$, for all $\zL$. This means that
\begin{gather*}
\big(\phi^b(x_{||}), 0\big) = \bigg(\psi^b(x_{||})+\sum_{|\zb|> 0}\frac{1}{\zb!} (\partial_{x}^\zb\psi^b)(x_{||}) \mathring{x}_\zL^\zb, \, 0\bigg),
\end{gather*}
for all $\zL$ and all $\zL$-points. Since $\phi^b(x)\equiv\psi^b(x)$, we have $\partial_x^\zb\phi^b\equiv\partial_x^\zb\psi^b$. Take now any $\zb\colon |\zb|=1$, so that $\zb_a=1$, for some fixed $a\in\{1,\ldots,p\}$. As we can choose $\zL$ and $x_\zL^b$, for all $b\in\{1,\ldots,p\}$, arbitrarily, we can choose $\rim{x}_\zL^b=0$, for all $b\neq a$, and $\rim{x}_\zL^a=\zvy^D\zvy^E$, where $\zvy^D$ and $\zvy^E$ are two different generators of $\zL$ that have the same degree. The coefficient of $\zvy^D\zvy^E$ in the sum over all~$\zb$ is then $\big(\partial_{x^a}\psi^b\big)(x_{||})$, hence $\partial_{x^a}\phi^b\equiv\partial_{x^a}\psi^b\equiv 0$. The latter observation is a contradiction, since the Jacobian determinant of $\phi$ does not vanish anywhere in $\R^p$.
\end{Example}

We now generalise a technical result \cite[Theorem~1]{Voronov:1984} to $\Z_2^n$-domains $\cU^{p|\ul{q}}$. Let
\begin{gather*}\mathcal{B}_{p |\ul{q}}\big(\mathcal{U}^p\big) := \mathcal{F}\big(\mathcal{U}^p, \R\big)[[X, \Xi]],\end{gather*}
be the $\Z_2^n$-graded $\Z_2^n$-commutative associative unital $\R$-algebra of formal power series in $p$ parameters $X^a$ of $\Z_2^n$-degree $0$ and $q_1,\dots,q_N$ parameters $\Xi^A$ of non-zero $\Z_2^n$-degree $\gamma_1,\dots,\zg_N$, and with coefficients in arbitrary $\R$-valued functions on $\mathcal{U}^p$, i.e., we do \emph{not} ask that these functions be continuous let alone smooth. Following \cite{Schwarz:1982,Schwarz:1984,Voronov:1984}, we will refer to this algebra as a $\Z_2^n$-\emph{Berezin algebra}. Any element of this algebra is of the form
\begin{gather}\label{NatXXI}F = \sum_{|\za|\ge 0}\sum_{|\zb|\ge 0}F_{\za\zb}(x)X^\zb\Xi^\za,\end{gather}
where the $x^a$ are coordinates in $\mathcal{U}^p$.

\begin{Theorem}\label{prop:NaturalTrans}
For any $\Z_2^n$-domains $\mathcal{U}^{p|\ul{q}}$ and $\mathcal{V}^{r|\ul{s}}$, there is a $1:1$ correspondence \begin{gather*}\op{Nat}\big(\cU^{p,\ul{q}},\cV^{r,\ul{s}}\big)\to\big(\mathcal{B}_{p|\ul{q}}\big(\mathcal{U}^p\big)\big)^{r|\ul{s}} \end{gather*} between
\begin{itemize}\itemsep=0pt
\item[--] the set of natural transformations in $[\Z_2^n\catname{Pts}^{\textnormal{op}}, \catname{Set}]$ between $\mathcal{U}^{p|\ul{q}}(-)$ and $\mathcal{V}^{r|\ul{s}}(-)$, and
\item[--] the set of `vectors' $\mathbf{F}$ with $r$ $($resp., with $s_1,\ldots,s_N)$ components $F^b$ of degree $0$ $($resp., components $F^B$ of degrees $\zg_1,\ldots,\zg_N)$ of the type \eqref{NatXXI}, such that the $r$-tuple $\big(F^b_{00}\big)$ made of the coefficients $F^b_{00}(x)$ of the $r$ series $F^b$ satisfies
\begin{gather*}\big(F^b_{00}\big)\big(\cU^p\big)\subset\cV^r.\end{gather*}
\end{itemize}
\end{Theorem}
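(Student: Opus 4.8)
The plan is to exhibit two mutually inverse maps between the set $\op{Nat}\big(\cU^{p|\ul q},\cV^{r|\ul s}\big)$ of natural transformations and the set of admissible Berezin vectors $\mathbf{F}$, and to locate the only genuine difficulty in the degree-zero, non-nilpotent directions that are invisible in the supermanifold case $n=1$. The easy direction is the substitution map $\Theta$. Recall from \eqref{eqn:LambdaCoords} that a $\zL$-point $\mathrm{x}^*$ of $\cU^{p|\ul q}$ is the same as a tuple $(x_{||},\mathring{x}_\zL,\zx_\zL)$ with $x_{||}\in\cU^p$, $\mathring{x}^a_\zL\in\mathring{\zL}_0$ and $\zx^A_\zL\in\mathring{\zL}_{\zg_i}$. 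Given $\mathbf{F}=(F^b,F^B)$ of the form \eqref{NatXXI}, I declare $\Theta(\mathbf{F})_\zL(\mathrm{x}^*)=(y^b_\zL,\zh^B_\zL)$, where
\begin{gather*}
y^b_\zL=\sum_{|\za|\ge0}\sum_{|\zb|\ge0}F^b_{\za\zb}(x_{||})\,\mathring{x}_\zL^{\,\zb}\,\zx_\zL^{\,\za},\qquad
\zh^B_\zL=\sum_{|\za|>0}\sum_{|\zb|\ge0}F^B_{\za\zb}(x_{||})\,\mathring{x}_\zL^{\,\zb}\,\zx_\zL^{\,\za}.
\end{gather*}
Each series converges in $\zL$ because $\mathring{x}^a_\zL\in\mathring{\zL}^{2}$ and $\zx^A_\zL\in\mathring{\zL}$, so the $(\za,\zb)$-term lies in $\mathring{\zL}^{\,2|\zb|+|\za|}$; as $\cU^{p|\ul q}$ has finitely many coordinates only finitely many terms escape each $\mathring{\zL}^{k}$, and $\zL$ is $\mathring{\zL}$-adically Hausdorff complete. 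The tuple is a genuine $\zL$-point of $\cV^{r|\ul s}$: the degree-$0$ series $y^b_\zL$ has real part $F^b_{00}(x_{||})\in\cV^r$ precisely by the hypothesis $(F^b_{00})(\cU^p)\subset\cV^r$, while $\zh^B_\zL$ is homogeneous of degree $\zg_i$ with vanishing real part. Naturality is immediate: a morphism $\psi^*\colon\zL\to\zL'$ satisfies $\psi^*(1)=1$, preserves degrees and hence $\psi^*(\mathring{\zL}^k)\subset\mathring{\zL'}^k$, so it fixes real parts and, being continuous, commutes with the convergent substitution series; this is exactly commutativity of the naturality square with vertical arrows $\cU^{p|\ul q}(\psi^*)$, $\cV^{r|\ul s}(\psi^*)$ from \eqref{AlgMorLbdPts}.

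Second, I would define the extraction map $E$ inverse to $\Theta$. Specializing to $\zL=\R$ gives $\eta_\R\colon\cU^p\to\cV^r$ and forces $F^b_{00}:=\eta_\R(-)^b$ to obey the range condition. For a general coefficient $F^b_{\za\zb}(\mathfrak{x})$ at $\mathfrak{x}\in\cU^p$ I proceed as in the counterexample with $M=N=\R^p$ preceding the theorem: using that the $\Z_2^n$-Grassmann algebras form a directed set (Proposition~\ref{prop:DirSet}), I take $\zL$ with as many generators of each non-zero degree as needed, and a $\zL$-point at $\mathfrak{x}$ whose ideal coordinates are monomials built from pairwise distinct generators (for the degree-zero directions, suitable products of two equal-degree generators), arranged so that the finitely many relevant monomials $\mathring{x}_\zL^{\,\zb'}\zx_\zL^{\,\za'}$ are $\R$-linearly independent in $\zL$. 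I then read off $F^b_{\za\zb}(\mathfrak{x})$ as the coefficient of $\mathring{x}_\zL^{\,\zb}\zx_\zL^{\,\za}$ in $\eta_\zL$ of that point, and similarly for $F^B_{\za\zb}$. Independence of this value from the admissible choices follows from naturality with respect to the Grassmann morphisms relabelling equal-degree generators and sending superfluous generators to $0$, and the same monomial-separation argument shows $E\circ\Theta=\id$, i.e.\ faithfulness of $\Theta$.

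It remains to verify $\Theta\circ E=\id$, namely that for the extracted $\mathbf{F}$ the substitution formula reproduces $\eta_{\zL'}$ on \emph{every} $\zL'$-point $(\mathfrak{x},\mathring{x}_{\zL'},\zx_{\zL'})$, including those whose ideal coordinates are honest non-truncating power series. The plan is to compare both sides modulo $\mathring{\zL'}^{\,k}$ for each $k$ and then invoke Hausdorff completeness of $\zL'$. For the non-zero degree coordinates I use that every $\zx^A_{\zL'}\in\mathring{\zL'}_{\zg_i}$ is the image of the single generator under a unique morphism from the one-generator $\Z_2^n$-Grassmann algebra of degree $\zg_i$, so naturality transports the identity already verified for the generic points. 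For the degree-zero coordinates $\mathring{x}^a_{\zL'}\in\mathring{\zL'}_0$, where no single generator suffices, I reduce modulo $\mathring{\zL'}^{\,k}$ to a finite computation involving only the coefficients with $2|\zb|+|\za|<k$ and settle it by the same order-by-order naturality and monomial-separation argument. Letting $k\to\infty$ and using that an element of $\zL'$ is determined by its images in all the $\zL'/\mathring{\zL'}^{\,k}$ completes the verification.

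The main obstacle is exactly this last step in the degree-zero, even directions. Unlike the supermanifold case $n=1$, where $\mathring{\zL}$ is nilpotent and every $\zL$-point carries only polynomial (finite) data, for $n>1$ there are non-nilpotent even generators, the series $\mathring{x}^a_\zL\in\mathring{\zL}_0$ do not terminate, and an arbitrary degree-zero coordinate cannot be realized as the image of a single fixed element under a Grassmann morphism; a naive Yoneda-style reduction to one ``generic'' point therefore fails, and one must instead argue order by order and pass to the limit. This is why the $\mathring{\zL}$-adic Hausdorff completeness recalled in Section~\ref{sec:Z2nGeometry} is essential throughout, whereas convergence and nuclearity of the target algebras play no role in the argument itself.
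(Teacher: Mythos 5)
Your substitution map $\Theta$ coincides with the paper's map $\mathcal{I}$, and your extraction map $E$ together with the monomial-separation argument reproduces the paper's uniqueness step; both of these are sound. The genuine gap is precisely the step you flag as the main obstacle, namely $\Theta\circ E=\mathrm{id}$ on arbitrary points, and the order-by-order fix you propose does not work. Reducing modulo $\mathring{\Lambda}'^{k}$ is not an operation the functors can see: the quotients $\Lambda'/\mathring{\Lambda}'^{k}$ are not $\Z_2^n$-Grassmann algebras, so the projections onto them are not morphisms of $\Z_2^n\catname{Pts}^{\textnormal{op}}$, and naturality says nothing about them. Worse, a natural transformation of $\catname{Set}$-valued functors carries no continuity whatsoever, so there is no a priori reason why $\eta_{\Lambda'}$ evaluated at a point should agree modulo $\mathring{\Lambda}'^{k}$ with $\eta_{\Lambda'}$ evaluated at its truncation; the assertion that $\eta_{\Lambda'}$ is ``adically continuous'' (equivalently, given by a formal series) is exactly the content of Theorem~\ref{prop:NaturalTrans}, so it cannot be invoked in its own proof. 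Your monomial points do not bridge this gap either: they control $\eta$ only at points whose ideal coordinates are single monomials in pairwise distinct generators, and a general element of $\mathring{\Lambda}'_0$ is not the image of such a coordinate under any Grassmann morphism. For instance $\theta^1\theta^2+\theta^3\theta^4$, with all four generators of one odd degree, is not a product $uv$ with $u,v\in\mathring{\Lambda}'$, since the quadratic part of such a product is an alternating form of rank at most two.

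What is missing is the mechanism the paper uses at exactly this point: the Schwarz--Voronov pairing trick combined with invariant theory. Writing $\mathring{x}^a_\Lambda=\sum_{\lambda\kappa}\theta^\lambda\theta^\kappa K^a_{\kappa\lambda}$, one introduces a Grassmann algebra $\Lambda'$ with generators $\eta^{a\lambda}$, $\zeta^b_\kappa$, $\psi^A$, the single ``generic'' point with coordinates $\big(x^a_{||},\sum_\lambda\eta^{a\lambda}\zeta^a_\lambda,\psi^A\big)$, and the morphism $\varphi^*$ sending $\eta^{a\lambda}\mapsto\theta^\lambda$, $\zeta^a_\kappa\mapsto\sum_\lambda\theta^\lambda K^a_{\lambda\kappa}$, $\psi^A\mapsto\zx^A_\Lambda$; every $\Lambda$-point is thus the pushforward of the generic point, so by naturality it suffices to identify the value of the transformation there. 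That value is then pinned down not by an adic limit but by symmetry: the pairings are invariant under the formal rotations $R^*$ acting on the $\eta$'s and $\zeta$'s by mutually inverse matrices, naturality under all such $R^*$ makes the value invariant, and the first fundamental theorem of invariant theory, in the graded form of \cite[Proposition~4.13]{Balduzzi:2010}, forces an invariant series to be a series in the pairings and the $\psi$'s, i.e., of the substitution form \eqref{NatXXI}. Note also that your diagnosis that this difficulty is peculiar to $n>1$ is not accurate: for $n=1$ the even formal coordinates in $\mathring{\Lambda}_0$ are equally present (nilpotency merely truncates the series), and Voronov's proof of the corresponding statement \cite{Voronov:1984} uses the same pairing-and-rotation argument. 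The missing ingredient is classical invariant theory, not Hausdorff completeness.
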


\begin{proof}
Let $\mathbf{F}$ be such a `vector'. For any $\zL$, we define the map \begin{gather*}\zb_\zL\colon \ \cU^{p|\ul{q}}(\zL)\ni \big(x^a_{||},\mathring{x}^a_\zL,\zx^A_\zL\big)\mapsto \big(y^b_\zL,\zh^B_\zL\big)\in\cV^{r|\ul{s}}(\zL),\end{gather*} where
\begin{gather*}\label{NatBeta}
y^b_\Lambda := \sum_{|\za|\ge 0}\sum_{|\zb|\ge 0}F^b_{\za\zb}(x_{||}) \mathring{x}_\zL^\zb \zx_\zL^\za\qquad\text{and}\qquad \zh^B_\Lambda := \sum_{|\za|\ge 0}\sum_{|\zb|\ge 0}F^B_{\za\zb}(x_{||}) \mathring{x}_\zL^\zb \zx_\zL^\za .
\end{gather*}
Since $\mathring{x}^a_\zL$, $\zx^A_\zL$ have the same degrees as $X^a$, $\Xi^A$, the right-hand sides of \eqref{NatBeta} have the same degrees as $F^b$, $F^B$, hence, $y^b_\zL$, $\zh^B_\zL$ have the degrees required to be a $\zL$-point in $\cV^{r|\ul{s}}$. Moreover, we have \begin{gather*}y^b_{||}=F^b_{00}(x_{||}),\end{gather*} so that $y_{||}\in\cV^r$. The target of the map $\zb_\zL$ is thus actually $\cV^{r|\ul{s}}(\zL)$. The naturality of $\zb$ under morphisms of $\Z_2^n$-Grassmann algebras is obvious: $\zb$ is a natural transformation in $[\Z_2^n{\tt Pts}^{\op{op}},\tt Set]$ between $\cU^{p|\ul{q}}(-)$ and $\cV^{r|\ul{s}}(-)$. Finally, we defined a map
\begin{gather*}\mathcal{I}\colon \ \big(\mathcal{B}_{p|\ul{q}}\big(\mathcal{U}^p\big)\big)^{r|\ul{s}}\to\op{Nat}\big(\cU^{p,\ul{q}},\cV^{r,\ul{s}}\big).\end{gather*}

We will explain now that any natural transformation $\beta\colon \mathcal{U}^{p|\ul{q}}(-) \longrightarrow \mathcal{V}^{r|\ul{s}}(-)$ is the image by~$\mathcal{I}$ of a unique `vector' $\mathbf{F}$. We first show that, for any $\zL\simeq\R^{0|\ul{m}}$, the image $\zb_\zL(\mr{x}^*)\in\cV^{r|\ul{s}}(\zL)$ of any $\zL$-point
\begin{gather*}\mr{x^*}\simeq\big(x^a_{||},\rim{x}^a_\zL,\zx_\zL^A\big)\in\mathcal{U}^p \times \mathring{\Lambda}_{0}^p \times \rim{\Lambda}^{q_1}_{\gamma_1} \times \dots \times \rim{\Lambda}^{q_N}_{\gamma_N}
\end{gather*}
in $\cU^{p|\ul{q}}$, has components $y^b_\zL$ and $\zh^B_\zL$ of the type~\eqref{NatBeta}.

{\it Step 1}. {\it We prove that any $\zL$-point in $\cU^{p|\ul{q}}$ is the image by a $\Z_2^n$-Grassmann algebra map $\zf^*\colon \zL'\to\zL$ of a $\zL'$-point in $\cU^{p|\ul{q}}$, some of whose defining series are series in formal pairings.}

Let $\big(\theta^C\big)$ be the generators of $\Lambda$. The $\zL$-point $\mr{x}^*$ then reads
\begin{gather*}\mr{x}^*\simeq \big(x^a_{||}, \textstyle\sum_{\lambda \kappa} \zvy^\lambda \zvy^\kappa K_{\kappa \lambda}^a,\zx^A_\Lambda\big),\end{gather*}
where the degree of $K_{\kappa \lambda}^a \in \Lambda$ is the sum of the degrees of $\zvy^\zl$ and $\zvy^\zk$. Recall that $a$ (resp.,~$A$) runs through $\{1,\ldots,p\}$ (resp., through $\{1,\ldots,|\ul{q}|\}$), and that $\zl$, $\zk$ run through $\{1,\ldots,|\ul{m}|\}$. Consider now the set $S$ of generators
\begin{gather*}\zvy'=\big( \eta^{a\lambda} , \zeta_\kappa^b , \psi^A \big),\end{gather*}
where $b$ has the same range as $a$, and define their (non-zero) $\Z_2^n$-degrees by
\begin{gather*}
 \deg\big(\eta^{a\lambda}\big) = \deg\big(\zvy^\zl\big), \qquad \deg\big(\zeta^b_\kappa\big) = \deg\big(\zvy^\kappa\big), \qquad \deg\big(\psi^A\big) = \deg\big(\zx^A_\zL\big) = \deg \big(\zx^A\big).
\end{gather*}
Let $\zL'$ be the $\Z_2^n$-Grassmann algebra defined by $S$, and set
\begin{gather*}\label{Dependence1} \mathrm{x'^*} \simeq \big( x_{||}^a, \textstyle\sum_{\lambda}\eta^{a\lambda}\zeta_\lambda^a , \psi^A\big)\in\mathcal{U}^p \times \mathring{\Lambda}'^p_{0} \times \rim\Lambda'^{q_1}_{\gamma_1} \times \dots \times \rim\Lambda'^{q_N}_{\gamma_N} \end{gather*}
(no sum over $a$ in the formal pairings $\textstyle\sum_{\lambda}\zh^{a\zl}\zz^{a}_\zl$). The degree-respecting equalities
\begin{gather*}
\varphi^*\big(\eta^{a\lambda}\big) = \theta^\lambda, \qquad \varphi^*\big(\zeta^b_\kappa\big) = \sum_{\lambda} \zvy^\lambda K_{\lambda \kappa}^b, \qquad \varphi^*\big(\psi^A\big) = \zx^A_\Lambda
\end{gather*}
define a morphism of $\Z_2^n$-Grassmann algebras $\varphi^* \colon \Lambda' \longrightarrow \Lambda$. It suffices to set
\begin{gather*}\zf^*\bigg(\sum_\ze r_\zve\zvy'^\ze\bigg) := \sum_\ze r_\ze (\zf^*\zvy')^\ze.\end{gather*}
Indeed, any term of the right-hand side is a series in $\zvy$ whose terms contain at least $|\ze|$ gene\-ra\-tors. Hence, for any $\zve$, only the terms $|\ze|\le |\zve|$ can contribute to $\zvy^\zve$, and therefore there is no convergence issue with the coefficient of $\zvy^\zve$. Since the $\zL$-point $\zf^*\circ\mr{x}'^*$ in $\cU^{p|\ul{q}}$ reads
\begin{gather*}\zf^*\circ\mr{x}'^* \simeq \zf^*\big( x_{||}^a,\textstyle\sum_{\lambda} \eta^{a\lambda}\zeta_\lambda^a , \psi^A\big) = \big( x_{||}^a, \textstyle\sum_{\lambda \kappa}\zvy^\zl\zvy^\zk K^a_{\zk\zl}, \zx^A_\zL\big) \simeq \mr{x}^*,\end{gather*}
naturality of the transformation $\beta \colon \mathcal{U}^{p|\ul{q}}(-) \longrightarrow \mathcal{V}^{r|\ul{s}}(-)$ implies that
\begin{gather}\big(y^b_{\zL},\zh^B_\zL\big) :\simeq \zb_\zL(\mr{x}^*) = \zb_\zL(\zf^*\circ{\mr{x}'^*}) = \zb_\zL\big( \cU^{p|\ul{q}}(\zf^*) (\mr{x}'^*)\big)\nonumber\\
\hphantom{\big(y^b_{\zL},\zh^B_\zL\big)}{} =\cV^{r|\ul{s}}(\zf^*)(\zb_{\zL'}(\mr{x}'^*)) = \zf^*\circ (\zb_{\zL'}(\mr{x}'^*))\simeq \zf^*\big(y^b_{\zL'},\zh^B_{\zL'}\big),\label{NTMain}\end{gather}
where $y^b_{\zL'}$ and $\zh^B_{\zL'}$ are series in the generators of $\zL'$.

{\it Step 2}. {\it We define formal rotations under which the formal pairings are invariant. Moreover, we show that any formal series that is invariant under the formal rotations is a series in the formal pairings.}

The formal part of each degree $0$ component of $\mathrm{x}'^*$ can be viewed as a formal pairing $\eta^a \cdot \zeta^a = \sum_\zl\eta^{a\lambda}\zeta^a_\lambda$, which is stable under formal rotations $R^*$. More precisely, we set
\begin{gather*}
R^*\big(\eta^{a\lambda}\big) = \sum_{\kappa}\eta^{a\kappa}(O^a)_\kappa^{ \lambda}, R^*\big(\zeta_\kappa^b\big) =\sum_{\lambda} \big(O^{b \st}\big)_{\kappa}^{ \lambda} \zeta_\lambda^b, R^*(\psi^A) = \psi^A,
\end{gather*}
where $O^a$ and $O^{b \st}$ are any $(m_1+\dots+m_N)\times(m_1+\dots+m_N)$ block-diagonal matrices with entries in $\R$ that satisfy
\begin{gather}\label{Ortho} \sum_{\lambda}(O^a)_\rho^{ \lambda}\big(O^{a \st}\big)_{\lambda}^{ \omega} = \delta_\rho^{ \omega}.\end{gather}
Since, for any fixed $a$ (resp.,~$b$), the components $\zh^{a\zl}$ (resp.,~$\zz^b_\zk$) are ordered such that the~$m_1$ first components have degree $\zg_1$, the next $m_2$ degree $\zg_2$, and so on, these equalities are degree-preserving. Hence, they define a $\Z_2^n$-Grassmann algebra morphism $R^*\colon \zL'\to\zL'$ via
\begin{gather*} R^*\bigg(\sum_{\ze}r_\ze\zvy'^\ze\bigg) = R^*\bigg(\sum_{\za\zb\zg}r_{\za\zb\zg} \zh^\za \zz^\zb \psi^\zg\bigg) := \sum_{\za\zb\zg}r_{\za\zb\zg} (R^*\zh)^\za (R^*\zz)^\zb \psi^\zg.\end{gather*}
Since the images $R^*(\zh^{a\zl})$ (resp., $R^*(\zz^b_\zk)$) are linear in the $\zh^{a\zk}$ (resp.,~$\zz^b_\zl$) (of the same degree), the term indexed by $\za\zb\zg$ is a homogeneous polynomial of order $|\za|+|\zb|+|\zg|$ in the generators~$\zvy'$. Hence, for any~$\zve$, only the terms $|\za|+|\zb|+|\zg|=|\zve|$ can contribute to $\zvy'^\zve$, so that no convergence problems arise. In view of~\eqref{Ortho}, it is clear that, as mentioned above, the formal pairing $\eta^a \cdot \zeta^a = \sum_\zl\eta^{a\lambda}\zeta^a_\lambda$ is invariant under $R^*$. As any $\Z_2^n$-Grassmann algebra morphism, the formal rotation~$R^*$ induces maps $\cU^{p|\ul{q}}(R^*)$ and $\cV^{r|\ul{s}}(R^*)$, and due to naturality of $\zb$, we find
\begin{gather*}
\cV^{r|\ul{s}}(R^*)(\zb_{\zL'}\mr{x}'^*) = \zb_{\zL'}\big( \cU^{p|\ul{q}}(R^*)(\mr{x}'^*)\big) = \zb_{\zL'}( R^*\circ\mr{x}'^*) \\
\hphantom{\cV^{r|\ul{s}}(R^*)(\zb_{\zL'}\mr{x}'^*)}{} \simeq \zb_{\zL'}\bigg(R^*\bigg( x_{||}^a, \sum_\zl\eta^{a\lambda}\zeta_\lambda^a , \psi^A\bigg)\bigg) \simeq \zb_{\zL'}\mr{x}'^*,\end{gather*}
so that $\zb_{\zL'}\mr{x}'^*$ is invariant under rotations.

We are now prepared to continue the computation \eqref{NTMain}. Since \begin{gather} \label{Dependence2}\zb_{\zL'}(\mr{x}'^*) \simeq \big(y_{\zL'}^b,\zh_{\zL'}^B\big)=\big(y^b_{||},\rim{y}_{\zL'}^b,\zh_{\zL'}^B\big)\end{gather} is invariant under the rotations $R^*$, the series $\rim{y}_{\zL'}^b$, $\zh_{\zL'}^B$ in the generators $\zvy'$ are invariant. More explicitly, for each series, we have an equality of the type
\begin{gather*}\sum_\zg\bigg(\sum_{k,\ell}\sum_{|\za|=k,\, |\zb|=\ell}F_{\za\zb\zg} \zh^\za \zz^\zb\bigg) \psi^\zg = \sum_\zg\bigg(\sum_{k,\ell}\sum_{|\za|=k,\, |\zb|=\ell}F_{\za\zb\zg} (R^*\zh)^\za (R^*\zz)^\zb\bigg) \psi^\zg,\end{gather*}
which is equivalent to
\begin{gather*}\sum_{|\za|=k, \, |\zb|=\ell}F_{\za\zb\zg} \cdots \zh^{a\zl}\zh^{b\zm}\zz^c_\zn\cdots = \sum_{|\za|=k,\, |\zb|=\ell}F_{\za\zb\zg} \zh^\za \zz^\zb =\sum_{|\za|=k,\, |\zb|=\ell}F_{\za\zb\zg} (R^*\zh)^\za (R^*\zz)^\zb \\
\qquad {} = \sum_{|\za|=k,\, |\zb| =\ell}F_{\za\zb\zg} \cdots \eta^{a\zd}(O^a)_\zd^{ \lambda} \eta^{b\zd'}\big(O^{b}\big)_{\zd'}^{ \zm} \big(O^{c \st}\big)_{\zn}^{ \zd''} \zeta_{\zd''}^c \cdots,\end{gather*}
and holds for all (!) formal rotations. This is only possible, if the power series considered, i.e., the series $\rim{y}^b_{\zL'}$ and $\zh_{\zL'}^B$, are series in pairings $\zh^a\cdot\zz^a=\sum_\zl\zh^{a\zl}\zz^a_\zl$. In the classical setting, the result is known under the name of first fundamental theorem of invariant theory for the orthogonal group \cite{OInvariants1, OInvariants2}. It has been extended to the graded situation in \cite[Proposition~4.13]{Balduzzi:2010}. In view of~\eqref{NTMain}, we thus get
\begin{gather*}\big(y^b_{||},\rim{y}^b_\zL,\zh_\zL^B\big)=\zb_\zL(\mr{x}^*)=\zb_\zL\big(x^a_{||},\rim{x}^a_\zL,\zx^A_\zL\big) =\big(y^b_{||},\zf^*(\rim{y}^b_{\zL'}),\zf^*\big(\zh^B_{\zL'}\big)\big),\end{gather*}
where any image by $\zf^*$ is of the type
\begin{gather*}\sum_{(\za,\zb)\neq (0,0)} F_{\za\zb} \zf^* \big((\zh\cdot\zz)^\zb\big) \zf^* \big(\psi^\za\big) = \sum_{(\za,\zb)\neq (0,0)} F_{\za\zb} \rim{x}_\zL^\zb \zx_\zL^\za.\end{gather*}
It is clear from \eqref{Dependence2} and \eqref{Dependence1} that the coefficients \begin{gather*}F^b_{\za\zb}, F_{\za\zb}^B ((\za,\zb)\neq(0,0)), \qquad \text{and} \qquad F^b_{00}:=y^b_{||}\end{gather*} depend (only) on~$x_{||}\in\cU^p$. Hence, the image \begin{gather*}\big(y_\zL^b,\zh_\zL^B\big)=\zb_\zL(\mr{x}^*)=\big(F^b(x_{||},\rim{x}_\zL,\zx_\zL),F^B(x_{||},\rim{x}_\zL,\zx_\zL)\big)\end{gather*} is actually of the type \eqref{NatBeta}. Since $\zb_\zL(\mr{x}^*)$ is a $\zL$-point in $\cV^{r|\ul{s}}$, the $r$ series $F^b(x_{||},\rim{x}_\zL,\zx_\zL)$ and the~$s_i$ series $F^B(x_{||},\rim{x}_\zL,\zx_\zL)$ are of degree 0 and degree $\zg_i$, respectively, i.e., the $r$ series $F^b(x,X,\Xi)$ and the~$s_i$ series $F^B(x,X,\Xi)$ are of degree $0$ and degree $\zg_i$, respectively. For the same reason, we have $F_{00}(x_{||})\in\cV^r$, for all $x_{||}\in\cU^p$, so that we constructed a `vector' $\mathbf{F}\in(\mathcal{B}_{p|\ul{q}}(\cU^p))^{r|\ul{s}} $, whose image by $\mathcal{I}$ is obviously $\zb$.

{\it Step 3}. {\it We show that $\mathbf{F}$ is unique $($which concludes the proof$)$}. If there is another `vector'~$\mathbf{F}'$, such that $\mathcal{I}(\mathbf{F'})=\zb$, we have
\begin{gather}\label{UniquenessStep3}
\sum_{|\za|\ge 0,|\zb|\ge 0}F^{\mathfrak{b}}_{\za\zb}(x_{||}) \rim{x}_\zL^\zb \zx_{\zL}^\za=\sum_{|\za|\ge 0,|\zb|\ge 0} F'^{\mathfrak{b}}_{\za\zb}(x_{||}) \rim{x}_\zL^\zb \zx_{\zL}^\za,
\end{gather}
for all $\mathfrak{b}\in\{b,B\}$, all $\zL$, and all $\mr{x}^*$. Notice first that any $\rim{x}_\zL^a$ (resp., any $\zx_\zL^A$) is a series of degree~0 (resp., of degree $\deg\big(\zx^A\big)=\zg_A$) in the $\zvy$-s that contains at least two parameters $\zvy^{C}\zvy^{C'}$ (resp., at least one parameter $\zvy^{C''}$). Hence, both sides are series in $\zvy$, and the left-hand side and right-hand side coefficients of any monomial $\zvy^\ze$ coincide. A term $(\za,\zb)\neq(0,0)$ cannot contribute to the independent term $\zvy^0$. Hence $F_{00}^{\mathfrak{b}}(x_{||})=F_{00}'^{\mathfrak{b}}(x_{||})$. We now show that $F_{\za\zb}^{\mathfrak{b}}(x_{||})=F_{\za\zb}'^{\mathfrak{b}}(x_{||}),$ for an arbitrarily fixed $(\za,\zb)\neq(0,0)$. Since $\zL$ is arbitrary, we can choose as many different generators~$\zvy$ in each non-zero degree as necessary, and, since $\mr{x}^*$ is arbitrary, we can choose $x_{||}$ arbitrarily in $\cU^p$ and we can choose the coefficients of the series~$\rim{x}_\zL^a$ and $\zx_\zL^A$ arbitrarily (except that we have to observe that the coefficient of a monomial~$\zvy^\ze$, which does not have the required degree, {\it must be zero}). Let now $\za_1,\ldots,\za_{\zm}$ and $\zb_1, \ldots, \zb_{\zn}$ be the non-zero components in the fixed $\za$ and $\zb$. For each factor $\zx_\zL^{A_i}$ of
\begin{gather*}\zx_\zL^\za=\big(\zx_\zL^{A_1}\big)^{\za_1}\cdots\big(\zx_\zL^{A_\zm}\big)^{\za_{\zm}},\end{gather*}
we choose a monomial in one generator $\zvy^{C_i}$ of degree $\zg_{A_i}$, set its coefficient $r_{C_i}$ to~1, and all the other coefficients in the series $\zx_\zL^{A_i}$ to zero. Further, for different $\zx_\zL^{A_i}$, we choose different generators~$\zvy^{C_i}$. Similarly, for each factor $\rim{x}_\zL^{a_j}$ of
\begin{gather*}\rim{x}_\zL^\zb=\big(\rim{x}_\zL^{a_1}\big)^{\zb_1}\cdots\big(\rim{x}_\zL^{a_{\zn}}\big)^{\zb_{\zn}},\end{gather*}
we choose monomials $\zvy^{D_{jk}}\zvy^{E_{jk}}$ ($k\in\{1,\ldots,\zb_j\}$) in two generators of the same odd degree (for all $\Z_2^n$-manifolds with $n\ge 1$, there is at least one odd degree), set their coefficient $r_{D_{jk}E_{jk}}$ to 1, and all the other coefficients in the series $\rim{x}_\zL^{a_j}$ to zero. Further, we choose the generators so that all generators $\zvy^{C_i},$ $\zvy^{D_{jk}}$, and $\zvy^{E_{jk}}$ are different. When setting \begin{gather*}\zvy^{\zw}=\prod_{j=1}^\zn\zvy^{D_{j1}}\zvy^{E_{j1}}\cdots\zvy^{D_{j\zb_j}}\zvy^{E_{j\zb_j}}\prod_{i=1}^\zm\big(\zvy^{C_i}\big)^{\za_i}\neq 0,\end{gather*} the terms indexed by (the fixed) $(\za,\zb)$ in both sides of \eqref{UniquenessStep3}, read \begin{gather*}\zb!F^{\mathfrak{b}}_{\za\zb}(x_{||})\zvy^\zw \qquad \text{and} \qquad \zb!F'^{\mathfrak{b}}_{\za\zb}(x_{||})\zvy^\zw.\end{gather*} For any term $(\za',\zb')\neq(\za,\zb)$, we either get a new series $\zx_\zL^A$ or $\rim{x}_\zL^a$ (i.e., a series that is not present in $\zx_\zL^\za$ or $\rim{x}_\zL^\zb$), or we get an old series a different number of times. In the second case, the term $(\za',\zb')$ does not contribute to the coefficient of $\zvy^\zw$; in the first, we set all the coefficients of the new series to 0, so that the term $(\za',\zb')$ vanishes. Finally, we obtain $F^{\mathfrak{b}}_{\za\zb}(x_{||})=F'^{\mathfrak{b}}_{\za\zb}(x_{||})$, for any $x_{||}\in\cU^p$.
\end{proof}

We now show that $\R^{p|\ul{q}}(\Lambda)$ is a Fr\'echet space and that $\mathcal{U}^{p|\ul{q}}(\Lambda)$ is an open subset of $\R^{p|\ul{q}}(\Lambda)$. This means that we have a notion of directional derivative, as well as a notion of smoothness of continuous maps between the $\Lambda$-points of $\Z_2^n$-domains. For more details on Fr\'{e}chet objects, we refer the reader to Appendix~\ref{appx:AManifolds}.

\begin{Proposition}\label{LambdaFrechet}For any $\Lambda \in \Z_2^n \catname{GrAlg} $, the set $\R^{p|\ul{q}}(\Lambda)$ is a~nuclear Fr\'{e}chet space and a~Fr\'{e}chet $\Lambda_{0}$-module. Moreover, the set $\cU^{p|\ul{q}}(\zL)$ is an open subset of $\R^{p|\ul{q}}(\zL)$.
\end{Proposition}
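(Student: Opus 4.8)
The plan is to identify $\R^{p|\ul{q}}(\Lambda)$ explicitly with a finite product of homogeneous components of $\Lambda$, and then to read off all three assertions from the permanence properties of nuclear Fr\'echet spaces and modules. First I would invoke the coordinate description~\eqref{eqn:LambdaCoords}: a $\Lambda$-point of $\R^{p|\ul{q}}$ is completely determined by the degree-respecting pullbacks $\big(x^a_\Lambda,\zx^A_\Lambda\big)$ of the coordinates, where $x^a_\Lambda\in\Lambda_0$ and $\zx^A_\Lambda\in\Lambda_{\gamma_i}$ for the appropriate nonzero degree $\gamma_i$ (note that $\Lambda_{\gamma_i}=\mathring{\Lambda}_{\gamma_i}$ since every nonzero-degree element lies in the augmentation ideal), and, since the body $x_{||}$ is now allowed to range over all of $\R^p$, no further restriction is imposed. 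This yields a bijection
\begin{gather*}
\R^{p|\ul{q}}(\Lambda)\simeq\Lambda_0^{p}\times\Lambda_{\gamma_1}^{q_1}\times\cdots\times\Lambda_{\gamma_N}^{q_N},
\end{gather*}
and the strategy is to transport the product Fr\'echet topology and product $\Lambda_0$-module structure from the right-hand side.

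Next I would check that each factor is a closed nuclear Fr\'echet $\Lambda_0$-submodule of $\Lambda$. The $\Z_2^n$-grading of the nuclear Fr\'echet algebra $\Lambda$ (Proposition~\ref{Prop:GAlgFrec}) is topological: $\Lambda$ is the finite topological direct sum of its $2^n$ homogeneous components, so the projection onto each $\Lambda_\gamma$ is continuous and $\Lambda_\gamma$ is a closed subspace. A closed subspace of a nuclear Fr\'echet space is again nuclear Fr\'echet (see Appendix~\ref{appx:AManifolds}), whence every $\Lambda_\gamma$ is nuclear Fr\'echet. Moreover $\Lambda_0\,\Lambda_\gamma\subseteq\Lambda_\gamma$ by the grading, so $\Lambda_\gamma$ is a closed $\Lambda_0$-submodule of the Fr\'echet $\Lambda_0$-module $\Lambda$ (Corollary~\ref{Coro:Lambda0Frec}), hence itself a Fr\'echet $\Lambda_0$-module. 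Since a finite product of nuclear Fr\'echet spaces (resp., of Fr\'echet $\Lambda_0$-modules) is again of the same type, the displayed identification shows that $\R^{p|\ul{q}}(\Lambda)$ is a nuclear Fr\'echet space and a Fr\'echet $\Lambda_0$-module.

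For the final assertion I would exhibit $\cU^{p|\ul{q}}(\Lambda)$ as the preimage of an open set under a continuous map. Composing the product projection $\R^{p|\ul{q}}(\Lambda)\to\Lambda_0^{p}$ with the componentwise augmentation (body) map $\zve\colon\Lambda_0\to\R$, which extracts the constant term and is continuous by the nuclear Fr\'echet structure of $\Lambda$, gives a continuous body map
\begin{gather*}
\pi\colon\ \R^{p|\ul{q}}(\Lambda)\ni\big(x^a_\Lambda,\zx^A_\Lambda\big)\mapsto\big(x^a_{||}\big)\in\R^p .
\end{gather*}
By the body condition built into the definition of a $\Z_2^n$-morphism valued in a $\Z_2^n$-domain, a $\Lambda$-point of $\R^{p|\ul{q}}$ factors through $\cU^{p|\ul{q}}$ precisely when $x_{||}\in\cU^p$, so that $\cU^{p|\ul{q}}(\Lambda)=\pi^{-1}(\cU^p)$ is open in $\R^{p|\ul{q}}(\Lambda)$.

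The main obstacle is the single genuinely analytic input, namely that the $\Z_2^n$-grading and the extraction of power-series coefficients are compatible with the Fr\'echet topology on $\Lambda$ -- that is, that the grading is a topological direct sum and the augmentation is continuous. Once this is granted from the explicit nuclear Fr\'echet structure of the coordinate algebras (Proposition~\ref{Prop:GAlgFrec} and the references therein), everything else reduces to an assembly of standard permanence properties and a continuity-of-preimage argument.
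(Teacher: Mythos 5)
Your proof is correct and follows essentially the same route as the paper: both identify $\R^{p|\ul{q}}(\Lambda)$ with the product $\Lambda_0^p\times\Lambda_{\gamma_1}^{q_1}\times\cdots\times\Lambda_{\gamma_N}^{q_N}$, deduce nuclearity and the Fr\'echet property from permanence properties of the homogeneous components of the nuclear Fr\'echet algebra $\Lambda$ (the paper cites the functional-analytic reference for this, where you derive it from the topological direct-sum decomposition), and define the $\Lambda_0$-module structure componentwise via the jointly continuous multiplication. The only difference is cosmetic and lies in the last step: the paper splits $\Lambda_0=\R\oplus\mathring{\Lambda}_0$ topologically and exhibits $\cU^{p|\ul{q}}(\Lambda)\simeq\cU^p\times\mathring{\Lambda}_0^p\times\prod_{i=1}^N\Lambda_{\gamma_i}^{q_i}$ as an open box inside $\R^p\times\mathring{\Lambda}_0^p\times\prod_{i=1}^N\Lambda_{\gamma_i}^{q_i}$, whereas you realize $\cU^{p|\ul{q}}(\Lambda)$ as the preimage $\pi^{-1}(\cU^p)$ of the continuous body map -- two equivalent formulations of the same fact.
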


\begin{proof}Let $\Lambda \in \Z_2^n\catname{GrAlg} $. As explained above, there is a $1:1$ correspondence between the $\zL$-points $\mr{x}^*$ of $\R^{p|\ul{q}}$ (resp., of $\cU^{p|\ul{q}}$) and the $(p+|\ul{q}|)$-tuples
\begin{gather*}\mathrm{x}^* \simeq \big( {x}^a_\Lambda , \zx^A_\Lambda\big)\in {\Lambda}_{0}^p \times \Lambda_{\gamma_1}^{q_1} \times \dots \times \Lambda_{\gamma_N}^{q_N} \end{gather*}
(resp., the same $(p+|\ul{q}|)$-tuples, but with the additional requirement that the $p$-tuple $(x^a_{||})$ made of the independent terms of $(x^a_\zL)$ be a point in $\cU^p\subset\R^p$). Note now that $\zL$ is the $\Z_2^n$-graded $\Zn$-commutative nuclear Fr\'echet $\R$-algebra of global $\Z_2^n$-functions of some $\R^{0|\ul{m}}$. Hence, all its homogeneous subspaces $\zL_{\zg_i}$ ($i\in{0,\ldots, N}$, $\zg_0=0$) are nuclear Fr\'echet vector spaces. Since any product (resp., any countable product) of nuclear (resp., Fr\'echet) spaces is nuclear (resp., Fr\'echet), the set $\R^{p|\ul{q}}(\zL)$ of $\zL$-points of $\R^{p|\ul{q}}$ is nuclear Fr\'echet. The latter statements can be found in~\cite{Bruce:2018}.

As for the second claim in Proposition~\ref{LambdaFrechet}, recall that $\Lambda_{0}$ is a (commutative) Fr\'{e}chet algebra, see Corollary~\ref{Coro:Lambda0Frec}. The Fr\'{e}chet $\Lambda_{0}$-module structure on $\R^{p|\ul{q}}(\zL)$ is then defined by
\begin{gather}\label{ActionCompWise}
 m\colon \ \Lambda_{0} \times \R^{p|\ul{q}}(\Lambda)\ni (\mathrm{a} , \mathrm{x}^*)\mapsto \big(\mathrm{a} \cdot x_\Lambda^a , \mathrm{a} \cdot \zx_\Lambda^{A} \big) \in\R^{p|\ul{q}}(\Lambda) .
\end{gather}
Since this action is defined using the continuous associative multiplication $\cdot\colon \zL_{\zg_i}\times\zL_{\zg_j}\to\zL_{\zg_i+\zg_j}$ of the Fr\'echet algebra $\zL$, it is (jointly) continuous.

As any closed subspace of a Fr\'echet space is itself a Fr\'echet space, the space \begin{gather*}\rim{\zL}_0\simeq \{0\}\times \rim{\zL}_0\subset \R\times\rim{\zL}_0=\zL_0\end{gather*} is Fr\'echet. We thus see that \begin{gather}\label{IsomPtProd}\cU^{p|\ul{q}}(\zL)\simeq \cU^p\times {\rim\Lambda}_{0}^p \times \prod_{i=1}^N\Lambda_{\gamma_i}^{q_i}\subset\R^p\times {\rim\Lambda}_{0}^p \times \prod_{i=1}^N\Lambda_{\gamma_i}^{q_i}\simeq\R^{p|\ul q}(\zL) \end{gather} is open.
\end{proof}

\begin{Remark} In the following, we will use the isomorphisms \eqref{IsomPtProd} (and similar ones) without further reference.\end{Remark}

The just described $\Lambda_0$-module structure is vital in understanding the structure of the $\Lambda$-points of any $\Z_2^n$-manifold. In particular, morphisms between $\Z_2^n$-domains induce natural transformations between the associated functors that respect this module structure. The converse is also true, that is, any natural transformation between the associated functors that respects the $\Lambda_0$-module structure comes from a morphism between the underlying $\Z_2^n$-domains. More carefully, we have the following proposition.

\begin{Theorem}\label{prop:AMorphisms}Let $\mathcal{U}^{p|\ul{q}}$ and $\mathcal{V}^{r|\ul{s}}$ be $\Z_2^n$-domains. A natural transformation $\beta\colon \mathcal{U}^{p|\ul{q}}(-) \longrightarrow \mathcal{V}^{r|\ul{s}}(-)$ comes from a $\Z_2^n$-manifold morphism $\cU^{p|\ul q}\to \cV^{r|\ul s}$ if and only if $\beta_{\Lambda} \colon \mathcal{U}^{p|\ul{q}}(\Lambda) \longrightarrow \mathcal{V}^{r|\ul{s}}(\Lambda)$ is $\Lambda_{0}$-smooth, for all $\Lambda \in \Z_2^n\catname{GrAlg}$. That is, for all $\Lambda$, the map $\beta_{\Lambda}$ must be a smooth map $($from the open subset $\cU^{p|\ul q}(\zL)$ of the Fr\'echet space $\R^{p|\ul q}(\zL)$ to the Fr\'echet space $\R^{r|\ul s}(\zL)$, see Appendix~{\rm \ref{appx:AManifolds})} and its G\^ateaux derivative $($see Appendix~{\rm \ref{appx:AManifolds})} must be $\Lambda_{0}$-linear, i.e.,
\begin{gather*}\rmd_{\mathrm{x}^*}\beta_\Lambda(\mathrm{a} \cdot \mathrm{v}) = \mathrm{a} \cdot \rmd_{\mathrm{x}^*}\beta_\Lambda( \mathrm{v}),\end{gather*}
for all $\mathrm{x}^* \in \mathcal{U}^{p|\ul{q}}(\Lambda)$, $\mathrm{a} \in \Lambda_{0}$, and $\mathrm{v} \in \R^{p|\ul{q}}(\Lambda)$.
\end{Theorem}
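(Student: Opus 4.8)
The plan is to run both implications through the explicit parametrisation of natural transformations furnished by Theorem~\ref{prop:NaturalTrans}: a transformation $\beta$ corresponds to a unique `vector' $\mathbf{F}$ with homogeneous components $F^b,F^B$ of the form \eqref{NatXXI}, and $\beta$ arises from a $\Z_2^n$-morphism exactly when the coefficient functions $F_{\za\zb}$ are \emph{smooth} and satisfy the Taylor relation $F_{\za\zb}=\tfrac{1}{\zb!}\,\partial_x^\zb F_{\za 0}$ (compare \eqref{eqn:smoothzero}--\eqref{eqn:smoothnonzero}, where $\phi_\za=F_{\za 0}$; note that the degree constraint forces $F^B_{0\zb}=0$ automatically). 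Reducing the whole statement to these two conditions on $\mathbf{F}$ is the organising idea, after which everything is about extracting smoothness and the Taylor relation from the two halves of the hypothesis.

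\textbf{Necessity.} Suppose $\beta$ is induced by $\phi\colon\cU^{p|\ul{q}}\to\cV^{r|\ul{s}}$, so $\beta_\zL$ is given by \eqref{eqn:smoothzero}--\eqref{eqn:smoothnonzero}. Viewing the even part $x_\zL\in\zL_0^p$ as a single Fréchet variable, $\beta_\zL$ is assembled from the smooth functional calculus maps $x_\zL\mapsto\phi^{\mathfrak b}_\za(x_\zL)$ (the $\mathcal{J}_M$-adically convergent series $\sum_\zb\tfrac1{\zb!}(\partial_x^\zb\phi_\za^{\mathfrak b})(x_{||})\mathring{x}_\zL^\zb$) and the jointly continuous multiplication of $\zL$; by the Fréchet calculus of Appendix~\ref{appx:AManifolds} it is smooth. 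The $\zL_0$-linearity of $\rmd_{\mathrm{x}^*}\beta_\zL$ is then immediate: by the chain rule $\rmd_{\mathrm{x}^*}\beta_\zL(\mathrm{v})$ is a sum of terms $(\partial_{x^c}\phi^{\mathfrak b}_\za)(x_\zL)\,v^c\,\zx_\zL^\za$ and analogous odd-direction terms, and scaling $\mathrm{v}$ by $\mathrm{a}\in\zL_0$ inserts the factor $\mathrm{a}$ into each term; since $\mathrm{a}$ has degree $0$ it commutes with every $\zx_\zL^\za$ and with the (real) coefficients, so it factors out.

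\textbf{Sufficiency.} Assume $\beta_\zL$ is $\zL_0$-smooth for every $\zL$. First I would prove that all coefficients $F_{\za\zb}$ are smooth on $\cU^p$: restricting the smooth map $\beta_\zL$ to the affine slice $x_{||}\mapsto\beta_\zL\big(x_{||},\mathring{x}_\zL^{(0)},\zx_\zL^{(0)}\big)$, with $\mathring{x}_\zL^{(0)},\zx_\zL^{(0)}$ fixed so as to isolate a prescribed monomial (as in Step~3 of Theorem~\ref{prop:NaturalTrans}), and post-composing with the continuous-linear projection onto the coefficient of that monomial, exhibits each $F_{\za\zb}$ (up to a nonzero factorial) as a smooth $\R$-valued function. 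With smoothness in hand I would then compute the Gâteaux derivative of $\beta_\zL$ in an even direction $\mathrm{v}=(v,0)$, $v=v_{||}+\mathring{v}\in\zL_0^p$, at a point $(x_{||},\mathring{x}_\zL,\zx_\zL)$: writing $A^{\mathfrak b}_c$ and $B^{\mathfrak b}_c$ for the partial derivatives with respect to $x^c_{||}$ and to $\mathring{x}^c_\zL$ of the output series for component $\mathfrak b\in\{b,B\}$, it reads $\sum_c\big(v_{||}^c A^{\mathfrak b}_c+\mathring{v}^c B^{\mathfrak b}_c\big)$. Imposing $\zL_0$-linearity with $\mathrm{a}\in\mathring{\zL}_0$ (so that the real part of $\mathrm{a}v$ vanishes) and cancelling common terms yields $\mathrm{a}\,\big(B^{\mathfrak b}_c-A^{\mathfrak b}_c\big)=0$ for all $\mathrm{a}\in\mathring{\zL}_0$.

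The crux, and the step I expect to be the main obstacle, is converting this last identity into the pointwise recursions $(\zb_c+1)F_{\za,\zb+e_c}=\partial_{x^c}F_{\za 0}$. Here I would exploit the $\Z_2^n$-specific fact that $\mathring{\zL}_0$ is \emph{nonzero} — it contains products of two odd generators of equal degree — which is precisely what replaces nilpotency of the super case. Choosing $\mathring{x}_\zL^c=\sum_k\zvy^{D_{ck}}\zvy^{E_{ck}}$ with all generators distinct produces the factorial $\zb!$ and lets one read off the coefficient of a single squarefree $\zL$-monomial $\mathrm{a}\,\mathring{x}_\zL^\zb\zx_\zL^\za$ (taking $\mathrm{a}=\zvy^P\zvy^Q$), exactly the combinatorial device already used in Step~3 of Theorem~\ref{prop:NaturalTrans}; this forces each coefficient of $B^{\mathfrak b}_c-A^{\mathfrak b}_c$ to vanish. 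An induction on $|\zb|$ starting from $F_{\za 0}$ then gives $F_{\za\zb}=\tfrac1{\zb!}\partial_x^\zb F_{\za 0}$ for both the $F^b$ and the $F^B$ components (the $\zx_\zL^\za$ factor is inert under the even-direction derivative, so the argument is uniform in $\za$). Finally, setting $\phi_\za:=F_{\za 0}$, these smooth functions have the correct degrees and satisfy $F^b_{00}(\cU^p)\subset\cV^r$, so by the chart theorem they define a $\Z_2^n$-morphism $\cU^{p|\ul{q}}\to\cV^{r|\ul{s}}$; comparison of \eqref{eqn:smoothzero}--\eqref{eqn:smoothnonzero} with \eqref{NatXXI} shows its induced transformation is $\beta$, which closes the argument.
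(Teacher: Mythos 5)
Your proposal is correct, and its skeleton is the same as the paper's: both implications are routed through the parametrisation of Theorem~\ref{prop:NaturalTrans}, and your crux step --- extracting the propagation recursion from $\zL_0$-linearity by testing against $\mathrm{a}=\zvy^P\zvy^Q$ at points with $\mathring{x}_\zL^\zb\zx_\zL^\za=\zb!\,\zvy^\zw$ --- is precisely the paper's Part~Ib, equation~\eqref{Propagation}. (Watch the slip in your statement of that recursion: the right-hand side should be $\partial_{x^c}F_{\za\zb}$, not $\partial_{x^c}F_{\za 0}$; your subsequent induction on $|\zb|$ clearly uses the correct form.) Where you genuinely deviate is the smoothness of the coefficients: the paper proves $F^{\mathfrak b}_{\za\zb}\in C^\infty(\mathcal{U}^p)$ by an induction on the order of differentiability (Part~Ia, iterating Proposition~\ref{partialFrechet}), whereas you get it in one stroke by composing $\beta_\zL$ with the affine slice $x_{||}\mapsto\big(x_{||},\mathring{x}_\zL^{(0)},\zx_\zL^{(0)}\big)$ and the continuous linear projection onto the coefficient of the isolated monomial. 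Your argument is valid and cleaner, but note what the paper's longer route buys: the induction simultaneously establishes formula~\eqref{IndHyp}, i.e., that the G\^ateaux derivative of the output series in the $x_{||}$-direction is the term-wise differentiated series --- exactly the identification of your $A^{\mathfrak b}_c$ that you need before coefficients of $\mathrm{a}\big(B^{\mathfrak b}_c-A^{\mathfrak b}_c\big)=0$ can be compared. With your shortcut this must be supplied separately; it is easy (the topology of $\zL$ is that of coefficient-wise convergence and each coefficient of the output series is a finite sum of terms smooth in $x_{||}$, so the derivative commutes with the sum), but it is not automatic and should be said. The same remark applies to your necessity direction: smoothness of the assembled map $x_\zL\mapsto\phi^{\mathfrak b}_\za(x_\zL)$ is not a citation to the appendix --- it is the actual content of the paper's Part~II, proved there by direct computation of all partial derivatives --- though the same coefficient-wise observation (each component of $\beta_\zL$ factors through finitely many continuous linear coordinate functionals followed by a smooth function of finitely many real variables) makes your framing rigorous. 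Your explanation of why $\zL_0$-linearity then holds, namely that the derivative is a single series multiplied by $v\in\zL_0$ rather than separate series against $v_{||}$ and $\mathring{v}$, is exactly the point the paper emphasises at the end of its proof.
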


\begin{proof}{\it Part I}. Let $\zb\colon \mathcal{U}^{p|\ul{q}}(-) \longrightarrow \mathcal{V}^{r|\ul{s}}(-)$ be a natural transformation with $\zL_0$-smooth components $\zb_\zL$, $\zL\in\Z_2^n\tt GrAlg$. From Theorem~\ref{prop:NaturalTrans}, we know that $\zb_\zL$ is completely specified by the systems
\begin{gather}\label{Theybs}
y^b_\Lambda =\sum_{|\za|\ge 0, |\zb|\ge 0}F^b_{\za\zb}(x_{||}) \rim{x}_\zL^\zb\zx_\zL^\za \qquad\text{and} \qquad \zh_\zL^B=\sum_{|\za|> 0, |\zb|\ge 0}F^B_{\za\zb}(x_{||}) \rim{x}_\zL^\zb\zx_\zL^\za,
\end{gather}
where the coefficients $F_{\za\zb}^{\mathfrak{b}}$ $(\mathfrak{b}\in\{b,B\})$ are set-theoretical maps from $\cU^p$ to $\R $.

{\it Part Ia}. Smoothness of $\zb_\zL$ implies that these coefficients are smooth. Indeed, we will show that $F_{\za\zb}^{\mathfrak{b}}\in C^0(\cU^p)$ and that, if $F_{\za\zb}^{\mathfrak{b}}\in C^k(\cU^p)$ $(k\ge 0)$, then $F_{\za\zb}^{\mathfrak{b}}\in C^{k+1}(\cU^p) $.

{\it Step 1}. Since \begin{gather*}\zb_\zL\colon \ \cU^{p|\ul q}(\zL)\to \Lambda_{0}^r \times \prod_{i=1}^N\Lambda_{\gamma_i}^{s_i}\end{gather*} is continuous, any of its components \begin{gather*}y_\zL^{\mathfrak{b}}\colon \ \cU^{p|\ul q}(\zL)\to \zL_{\zg_{i(\mathfrak b)}}=\R[[\zvy]]_{\zg_{i(\mathfrak b)}}\simeq\prod_{\zg_i(\mathfrak{b})}\R \end{gather*} is continuous. For simplicity, we wrote $y_\zL^B$ instead of $\zh_\zL^B$, and we will continue doing so. Moreover, the target space are the formal power series in $\zvy$ with coefficients in $\R$, all whose terms have the degree $\zg_{i(\mathfrak{b})}$ of $y^{\mathfrak b}$, and this space is identified with the corresponding space of families of reals. For any $\zw$ such that $\zvy^\zw$ has the degree $\zg_{i(\mathfrak b)}$, the corresponding real coefficient gives rise to a~continuous map \begin{gather*}y_\zL^{\mathfrak{b},\zw}\colon \ \cU^{p|\ul q}(\zL)\to \R.\end{gather*}
Since this joint continuity implies separate continuity with respect to $x_{||}\in\cU^p$, for any fixed $(\rim{x}_\zL,\zx_\zL)$ and any $\zL$, we can proceed as at the end of the proof of Theorem~\ref{prop:NaturalTrans}. More precisely, select any $(\za,\zb)$ and select (for an appropriate $\zL$) the pair $(\rim{x}_\zL,\zx_\zL)$ such that $\rim{x}_\zL^\zb\zx_\zL^\za=\zb! \zvy^\zw$, where~$\zvy^\zw$ is now the degree $\zg_{i(\mathfrak{b})}$ monomial defined in the proof just mentioned. The real coefficient of this monomial is $\zb! F_{\za\zb}^{\mathfrak{b}}(x_{||})$, which, as said, is an $\R$-valued continuous map on $\cU^p$, so that $F_{\za\zb}^{\mathfrak{b}}\in C^0(\cU^p)$, for all~$\mathfrak b$ and all $(\za,\zb)$.

{\it Step 2}. Since \begin{gather*}\cU^{p|\ul q}(\zL)\subset \R\times\left(\R^{p-1}\times {\rim\Lambda}_{0}^p \times \prod_{i=1}^N\Lambda_{\gamma_i}^{q_i}\right) \end{gather*} is an open subset of a product of two Fr\'echet spaces, smoothness of $\zb_\zL$ implies (via an iterated application of Proposition~\ref{partialFrechet}) that, for any $\mathfrak{b}\in\{b,B\}$, any $\ell\in\N$ and any $\zg\in\N^p$ $(|\zg|=\ell)$, the partial derivative \begin{gather*}\op{d}^\zg_{x_{||}}y^{\mathfrak{b}}_\zL\colon \ \cU^{p|\ul q}(\zL)\times\R^{\times \ell}\to \prod_{\zg_i(\mathfrak{b})}\R\end{gather*} is continuous.

Assume now that $F_{\za\zb}^{\mathfrak{b}}\in C^{k}(\cU^p)$ $(k\ge 0)$, for any $\mathfrak{b}$ and any $(\za,\zb)$, as well as that, for any $\zg\in\N^p$ $(|\zg|=k)$ and any $\mathfrak{b}$, the continuous partial G\^ateaux derivative \begin{gather*}\op{d}_{x_{||}}^\zg y^{\mathfrak{b}}_\zL(1,\ldots,1)\colon \ \cU^{p|\ul q}(\zL)\to \prod_{\zg_i(\mathfrak{b})}\R\end{gather*} is given by
\begin{gather}\label{IndHyp}\op{d}_{x_{||},\mr{x}^*}^\zg y^{\mathfrak{b}}_\zL(1,\ldots,1)=\sum_{\za\zb} \big(\partial_{x}^\zg F_{\za\zb}^{\mathfrak{b}}\big)(x_{||}) \rim{x}_\zL^\zb\zx_\zL^\za.\end{gather} Observe that for $k=0$, this condition is automatically satisfied. We will now show that, under these assumptions, the same statements hold at order $k+1$. In view of \eqref{IndHyp}, any order $k+1$ continuous partial G\^ateaux derivative \begin{gather*}\op{d}_{x_{||}^a}\op{d}_{x_{||}}^\zg y_\zL^{\mathfrak{b}}(1,\ldots,1)\colon \ \cU^{p|\ul q}(\zL)\to \prod_{\zg_{i}(\mathfrak{b})}\R \end{gather*} $(a\in\{1,\ldots,p\}, |\zg|=k)$ is given, at any $\mr{x}^*\simeq(x_{||},\rim{x}_\zL,\zx_\zL)\in\cU^{p|\ul q}(\zL)$, by
\begin{gather} \label{IndConcl}\sum_{\za\zb}\lim_{t\to 0}\frac{1}{t}\big(\big(\partial_x^\zg F^{\mathfrak{b}}_{\za\zb}\big)(x^1_{||},\ldots, x^a_{||}+t,\ldots, x^p_{||})-\big(\partial_x^\zg F^{\mathfrak{b}}_{\za\zb}\big)\big(x^1_{||},\ldots, x^a_{||},\ldots, x^p_{||}\big)\big)\rim{x}_\zL^\zb\zx_\zL^\za.\end{gather} When proceeding as in Step 1, we get that the limit is an $\R$-valued continuous function in $\cU^p$. In other words, the partial derivative $\partial_{x^a}\partial_x^\zg F_{\za\zb}^{\mathfrak{b}}$ exists and is continuous in $\cU^p$, i.e., $F_{\za\zb}^{\mathfrak{b}}\in C^{k+1}(\cU^p)$. Moreover, formula~\eqref{IndHyp} pertaining to order $k$ derivatives, extends to the order $k+1$ derivatives, see~\eqref{IndConcl}.

{\it Part Ib}. We examine the further consequences of $\Lambda_{0}$-smoothness, in particular those of $\Lambda_{0}$-linearity. Since $\zb_\zL$ is of class $C^1$, its components $y^{\mathfrak{b}}_\zL\colon \cU^{p|\ul q}(\zL)\to\prod_{\zg_{i}(\mathfrak{b})}\R$ are of class $C^1$. Further, as \begin{gather*}\cU^{p|\ul q}(\zL)\subset \big(\R\times\rim{\zL}_0 \big)\times \left(\R^{p-1}\times \rim{\zL}_0^{p-1}\times\prod_{i=1}^N\zL_{\zg_i}^{q_i}\right)\end{gather*} is an open subset of a product of two Fr\'echet spaces, the partial G\^ateaux derivative \begin{gather*}\op{d}_{(x^a_{||},\rim{x}_\zL^a)}y_\zL^{\mathfrak{b}}\colon \ \cU^{p|\ul q}(\zL)\times\big(\R\times\rim\zL_0\big)\to \prod_{\zg_{i}{(\mathfrak{b})}}\R\end{gather*} is continuous. It is given by
\begin{gather*}\op{d}_{(x^a_{||}, \rim{x}_\zL^a), \mr{x}^*}y_\zL^{\mathfrak{b}}(v_{||},\rim v_\zL)=\op{d}_{x^a_{||}, \mr{x}^*}y_\zL^{\mathfrak{b}}(v_{||})+\op{d}_{\rim{x}_\zL^a, \mr{x}^*}y_\zL^{\mathfrak{b}}(\rim v_\zL)\\
\qquad {} =v_{||}\sum_{\za\zb}\big(\partial_{x^a}F^{\mathfrak{b}}_{\za\zb}\big)(x_{||})\rim{x}_\zL^\zb\zx_\zL^\za+\sum_{\za\zb}F^{\mathfrak{b}}_{\za\zb}(x_{||}) \lim_{t\to 0}\frac{1}{t}\big((\rim{x}^a_\zL+t \rim v_\zL)^{\zb_a}-(\rim{x}^a_\zL)^{\zb_a}\big)\prod_{b\neq a} \big(\rim x^b_\zL\big)^{\zb_b}\zx_\zL^\za \\
\qquad{} = : v_{||}T_1 + \mathfrak{T}_2.\end{gather*}
As $\rim\zL_0$ is a commutative algebra, it follows from the binomial formula that \begin{gather*}\mathfrak{T}_2=\rim v_\zL\sum_{\za\zb}\zb_a F^{\mathfrak{b}}_{\za\zb}(x_{||})\rim x_\zL^{\zb-e_a}\zx_\zL^\za=:\rim v_\zL T_2,\end{gather*} where $(e_a)_a$ is the canonical basis of $\R^p$. Observe now that, in view of \eqref{ActionCompWise}, the $\zL_0$-linearity of the total G\^ateaux derivative of $y_\zL^{\mathfrak{b}}$ with respect to $\mr{x}^*$ is equivalent to the $\zL_0$-linearity of all its partial G\^ateaux derivatives with respect to the $x_\zL^a=\big(x_{||}^a,\rim x_\zL^a\big)$ and the $\zx_\zL^A$. For $a=0+\rim v_\zL\in \zL_0$ and $\mr{v}=1+0\in\R+\rim\zL_0=\zL_0$, this implies that
\begin{gather*}\rim v_\zL T_2=\op{d}_{(x^a_{||}, \rim{x}_\zL^a), \mr{x}^*}y_\zL^{\mathfrak{b}}(\rim v_\zL\cdot 1)=\rim v_\zL\cdot \op{d}_{(x^a_{||}, \rim{x}_\zL^a), \mr{x}^*}y_\zL^{\mathfrak{b}}(1)=\rim v_\zL T_1,\end{gather*} i.e., that
 \begin{gather*}
 \rim v_\zL\sum_{\za\zb}(\zb_a+1) F^{\mathfrak{b}}_{\za,\zb+e_a}(x_{||})\rim x_\zL^{\zb}\zx_\zL^\za\\
 \qquad {} =\rim v_\zL\sum_{\za,\zg\colon \zg_a\neq 0}\zg_a F^{\mathfrak{b}}_{\za\zg}(x_{||})\rim x_\zL^{\zg-e_a}\zx_\zL^\za=\rim v_\zL\sum_{\za\zb}\big(\partial_{x^a}F^{\mathfrak{b}}_{\za\zb}\big)(x_{||})\rim{x}_\zL^\zb\zx_\zL^\za.\end{gather*}
 Since $\zL\in\Z_2^n\tt GrAlg$, $\rim v_\zL\in\rim\zL_0$, and $\mr{x}^*\in\cU^{p|\ul q}(\zL)$ are arbitrary, we can repeat the $\zvy^\zw$-argument used above. More precisely, we select $(\za,\zb)$, select $(\rim x_\zL,\zx_\zL)$ such that $\rim x_\zL^\zb\zx_\zL^\za=\zb! \zvy^\zw$, and select $\rim v_\zL=\zvy^D\zvy^E\in\rim\zL_0$ such that $\zvy^D\zvy^E\zvy^\zw\neq 0.$ The coefficients of the latter monomial in the left and right hand sides do coincide, which means that
 \begin{gather*} (\zb_a+1) F^{\mathfrak{b}}_{\za,\zb+e_a}(x_{||})=\big(\partial_{x^a}F^{\mathfrak{b}}_{\za\zb}\big)(x_{||}),\end{gather*}
 or, equivalently,
 \begin{gather} F^{\mathfrak{b}}_{\za\zg}(x_{||})=\frac{1}{\zg_a}\big(\partial_{x^a}F^{\mathfrak{b}}_{\za,\zg-e_a}\big)(x_{||}),\label{Propagation}\end{gather}
 for all $\mathfrak{b}$, $\za$, $a$, all $\zg\colon \zg_a\neq 0$, and all $x_{||}\in\cU^p$. For any $\mathfrak{b}$, $\za$, and $x_{||}$, we now set \begin{gather*}\zvf^{\mathfrak{b}}_\za(x_{||}):=F^{\mathfrak{b}}_{\za 0}(x_{||})\in C^\infty\big(\cU^p\big).\end{gather*} An iterated application of~\eqref{Propagation} shows that \begin{gather*}F_{\za\zg}^{\mathfrak{b}}(x_{||})=\frac{1}{\zg!}\big(\partial^\zg_{x}\zvf^{\mathfrak{b}}_\za\big)(x_{||}).\end{gather*} Hence, the $y_\zL^{\mathfrak{b}}$ have the form~\eqref{eqn:smoothzero} and~\eqref{eqn:smoothnonzero}. This means that the natural transformation~$\zb$ is implemented by the $\zvf^{\mathfrak{b}}_\za$, which define actually a $\Z_2^n$-morphism from $\cU^{p|\ul q}$ to $\cV^{r|\ul s}$. Indeed, the property $\big(\zvf^b_0\big)(\cU^p)\subset\cV^r$ follows from the similar property of $\big(F^b_{00}\big)$. On the other hand, the pullback \begin{gather*}\zvf^*\big(y^{\mathfrak{b}}\big):=\sum_\za\zvf_\za^{\mathfrak{b}}(x)\zx^\za\end{gather*} must have the same degree as $y^{\mathfrak{b}}$. However, if $\deg(\zx^\za)\neq\deg\big(y^{\mathfrak{b}}\big)$, then $\deg(\zx^\za_\zL)\neq\deg\big(y^{\mathfrak{b}}_\zL\big)$, whatever $\zx_\zL$. It follows therefore from~\eqref{Theybs} that $\zvf^{\mathfrak{b}}_\za=F^{\mathfrak{b}}_{\za 0}=0$.

{\it Part II}. The proof of the converse implication is less demanding. Let $\zb\colon \cU^{p|\ul q}(-)\to \cV^{r|\ul s}(-)$ be a natural transformation that is induced by a $\Z_2^n$-morphism $\zvf\colon \cU^{p|\ul q}\to\cV^{r|\ul s} $, i.e., that is of the form~\eqref{eqn:smoothzero} and~\eqref{eqn:smoothnonzero}. For any $\zL\in\Z_2^n\tt GrAlg$, the map $\zb_\zL$ is smooth and its derivative is $\zL_0$-linear if and only if its components $y_\zL^{\mathfrak{b}}$ have these properties. The total derivative of $y_\zL^{\mathfrak{b}}$ with respect to $\mr{x}^*$ exists, is continuous, and is $\zL_0$-linear if and only if its partial derivatives with respect to the $x_\zL^a$ and the $\zx_\zL^A$ exist, are continuous, and are $\zL_0$-linear. When computing the derivative $y_\zL^{\mathfrak{b}}$ with respect to $\zx_\zL^{A_i}\in\zL_{\zg_i}$ at $\mr{x}^*\in\cU^{p|\ul q}(\zL)$ in the direction of $w_\zL\in\zL_{\zg_i}$, we get \begin{gather*}\sum_{\za\zb}\frac{1}{\zb!}\big(\partial_x^\zb\zvf_\za^{\mathfrak{b}}\big)(x_{||}) \rim x_\zL^\zb \big(\zx_\zL^{A_1}\big)^{\za_1}\cdots \lim_{t\to 0}\frac{1}{t}\big(\big(\zx_\zL^{A_i}+tw_\zL\big)^{\za_i}-\big(\zx_\zL^{A_i}\big)^{\za_i}\big) \cdots\big(\zx_\zL^{A_{|\ul q|}}\big)^{\za_{|\ul q|}}.\end{gather*} If $\zg_i$ is odd, the exponent $\za_i$ is $0$ or $1$. In the first (resp., the second) case, the limit vanishes (resp., is $w_\zL$). If $\zg_i$ is even, the multiplication of vectors in $\zL_{\zg_i}$ is commutative and the binomial formula shows that the limit is $w_\zL \za_i\big(\zx_\zL^{A_i}\big)^{\za_i-1} $. The derivative thus exists, is continuous, and is $\zL_0$-linear. Similarly, the derivative of $y_\zL^{\mathfrak{b}}$ with respect to $x_\zL^a$ exists if and only if its derivatives with respect to~$x_{||}^a$ and with respect to $\rim x_\zL^a$ exist. The (standard) computation of the derivative with respect to $x_\zL^a$ at $\mr{x}^*$ in the direction of \begin{gather*}v_\zL=(v_{||},\rim v_\zL)\in\R\times \rim\zL_0\end{gather*} thus leads to the sum of the terms \begin{gather*}v_{||}\sum_{\za\zb}\frac{1}{\zb!}\big(\partial_x^{\zb+e_a}\zvf_\za^{\mathfrak{b}}\big)(x_{||}) \rim x_\zL^\zb \zx_\zL^\za \end{gather*} and \begin{gather*}\rim v_\zL\sum_{\za,\zg\colon \zg_a\neq 0}\frac{1}{\zg!}\big(\partial_x^\zg\zvf_\za^{\mathfrak{b}}\big)(x_{||}) \zg_a \rim x_\zL^{\zg-e_a} \zx_\zL^\za=\rim v_\zL\sum_{\za\zb}\frac{1}{\zb!}\big(\partial_x^{\zb+e_a}\zvf_\za^{\mathfrak{b}}\big)(x_{||}) \rim x_\zL^{\zb} \zx_\zL^\za.\end{gather*} The derivative considered does therefore exist, is continuous, and is $\zL_0$-linear (note that it is essential that the derivative is the series over $\za\zb$ multiplied by~$v_\zL$~-- as $a\in\zL_0$ does not act on~$v_{||}$).\end{proof}

\begin{Remark}The $\Lambda_0$-linearity is a strong constraint that takes us from the category of ge\-ne\-ralized $\Z_2^n$-manifolds to the one of $\Z_2^n$-manifolds. A similar phenomenon exists in complex analysis. Indeed, for any real differentiable function $f = u + \rmi v \colon \Omega\subset\C\simeq \R^2 \rightarrow \C\simeq \R^2$, the Jacobian is an $\R$-linear map $J_f \colon \R^2 \rightarrow \R^2$. However, if we further insist that the Jacobian be $\C$-linear, then we see that $f$ must be holomorphic, that is, it must satisfy the Cauchy--Riemann equations on $\Omega$. Imposing $\C$-linearity thus greatly restricts class of functions and takes us from real analysis to complex analysis.
\end{Remark}

It will also be important to understand what happens to the $\Lambda$-points of a given $\Z_2^n$-domain under morphisms of $\Z_2^n$-Grassmann algebras.
\begin{Proposition}\label{prop:varphiSmooth} Let $\mathcal{U}^{p|\ul{q}}$ be a $\Z_2^n$-domain and let $\psi^* \colon \Lambda \rightarrow \Lambda'$ be a morphism of $\Z_2^n$-Grassmann algebras. The induced map $($see \eqref{AlgMorLbdPts}$)$
\begin{gather*}
\Psi := \mathcal{U}^{p|\ul{q}}(\psi^*)\colon \ \mathcal{U}^{p|\ul{q}}(\Lambda)\ni \mr{x}^*\simeq(x_\zL,\zx_\zL) \mapsto \psi^*\circ\mr{x}^*\simeq\psi^*(x_\zL,\zx_\zL)\in \mathcal{U}^{p|\ul{q}}(\Lambda')
\end{gather*}
is a smooth map from the open subset $\cU^{p|\ul q}(\zL)$ of the Fr\'echet space and Fr\'echet $\zL_0$-module $\R^{p|\ul q}(\zL)$ to the open subset $\cU^{p|\ul q}(\zL')$ of the Fr\'echet space and Fr\'echet $\zL'_0$-module $\R^{p|\ul q}(\zL')$, such that
\begin{gather*}\rmd_{\mathrm{x}^*} \Psi(\mathrm{a} \cdot \mathrm{v}) = \psi^*(\mathrm{a}) \cdot \rmd_{\mathrm{x}^*} \Psi( \mathrm{v}),\end{gather*}
for all $\mathrm{x}^*\in \mathcal{U}^{p|\ul{q}}(\Lambda)$, $\mathrm{v} \in \R^{p|\ul{q}}(\Lambda)$ and $\mathrm{a} \in \Lambda_0 $.
\end{Proposition}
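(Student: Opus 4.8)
The plan is to recognise $\Psi$ as the restriction to an open set of a continuous \emph{linear} map between Fr\'echet spaces; once this is done, both the smoothness of $\Psi$ and the twisted linearity of its derivative reduce to formal computations. Under the identification $\R^{p|\ul q}(\zL)\simeq\zL_0^p\times\prod_{i=1}^N\zL_{\gamma_i}^{q_i}$ recalled in Proposition~\ref{LambdaFrechet}, the map $\Psi$ acts componentwise by applying $\psi^*$, that is $\Psi(x_\zL^a,\zx_\zL^A)=(\psi^*(x_\zL^a),\psi^*(\zx_\zL^A))$. Since $\psi^*$ preserves the $\Z_2^n$-grading, it sends $\zL_0$ to $\zL'_0$ and $\zL_{\gamma_i}$ to $\zL'_{\gamma_i}$, so the componentwise action assembles into an $\R$-linear map $L\colon\R^{p|\ul q}(\zL)\to\R^{p|\ul q}(\zL')$ with $\Psi=L|_{\cU^{p|\ul q}(\zL)}$. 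I would first check that $L$ carries $\cU^{p|\ul q}(\zL)$ into $\cU^{p|\ul q}(\zL')$: since $\psi^*$ is unital and maps the generators to elements of non-zero degree, it sends $\rim\zL$ into $\rim\zL'$, so writing $x_\zL^a=x_{||}^a+\rim{x}_\zL^a$ gives $\psi^*(x_\zL^a)=x_{||}^a+\psi^*(\rim{x}_\zL^a)$ with unchanged independent term $x_{||}^a\in\cU^p$; hence the image is again a $\zL'$-point of $\cU^{p|\ul q}$.

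The one genuinely analytic point is the continuity of $\psi^*$, which I would settle through the filtration by powers of $\rim\zL$. Being multiplicative and sending each generator into $\rim\zL'$, the map $\psi^*$ is filtered, $\psi^*((\rim\zL)^k)\subset(\rim\zL')^k$ for every $k$, and therefore descends to maps of the quotients $\zL/(\rim\zL)^k\to\zL'/(\rim\zL')^k$, which are finite-dimensional (only finitely many monomials involve fewer than $k$ generators). By Proposition~\ref{Prop:GAlgFrec} each $\zL$ is a nuclear Fr\'echet algebra whose topology is the projective-limit topology of these quotients, so a filtered linear map is automatically continuous. Consequently $\psi^*$ is continuous, and so are its restrictions to the homogeneous components $\zL_{\gamma_i}$; as a finite product of continuous maps, $L$ is then continuous.

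With $\Psi=L|_{\cU^{p|\ul q}(\zL)}$ a continuous linear map restricted to an open subset, smoothness in the sense of Appendix~\ref{appx:AManifolds} is immediate: for small $t$ the point $\mr{x}^*+t\mathrm{v}$ remains in $\cU^{p|\ul q}(\zL)$, so
\begin{gather*}\rmd_{\mr{x}^*}\Psi(\mathrm{v})=\lim_{t\to0}\frac{1}{t}\big(L(\mr{x}^*+t\mathrm{v})-L(\mr{x}^*)\big)=L(\mathrm{v}),\end{gather*}
independently of $\mr{x}^*$; the higher G\^ateaux derivatives vanish, and continuity of all derivatives follows from that of $L$. It then remains to verify the twisted $\zL_0$-linearity. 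Using the componentwise $\zL_0$-action \eqref{ActionCompWise}, the multiplicativity of $\psi^*$, and the fact that $\psi^*(\mathrm{a})\in\zL'_0$ for $\mathrm{a}\in\zL_0$, I compute componentwise $L(\mathrm{a}\cdot\mathrm{v})=\psi^*(\mathrm{a})\cdot L(\mathrm{v})$, whence
\begin{gather*}\rmd_{\mr{x}^*}\Psi(\mathrm{a}\cdot\mathrm{v})=L(\mathrm{a}\cdot\mathrm{v})=\psi^*(\mathrm{a})\cdot L(\mathrm{v})=\psi^*(\mathrm{a})\cdot\rmd_{\mr{x}^*}\Psi(\mathrm{v}),\end{gather*}
which is the asserted identity. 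Thus the only substantive step is the continuity claim of the second paragraph; the linearity of $\Psi$ turns everything else into a one-line verification.
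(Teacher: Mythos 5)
Your proof is correct, and it reaches the one substantive point~--- continuity of $\psi^*$~--- by a genuinely different route than the paper. The paper never argues directly with the topology of $\Lambda$: it uses the isomorphism $\Z_2^n\catname{Pts}\simeq\Z_2^n\catname{GrAlg}^{\textnormal{op}}$ (Proposition~\ref{prop:PtsGrAlg}) to write $\psi^*=\phi^*_{\{\star\}}$ for a unique morphism of $\Z_2^n$-points $\R^{0|\ul{m}'}\to\R^{0|\ul{m}}$, and then quotes~\cite{Bruce:2018} for the fact that pullbacks of $\Z_2^n$-morphisms are continuous for the standard locally convex topologies; after that, exactly as in your last two paragraphs, it computes $\rmd_{\mathrm{x}^*}\Psi(\mathrm{v})=\big(\psi^*\big(v^{\mathfrak{a}}_\Lambda\big)\big)$, notes that all higher derivatives vanish, and obtains the twisted linearity from multiplicativity of $\psi^*$. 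Your replacement of that citation by an adic argument~--- $\psi^*$ is filtered for the $\mathring{\Lambda}$-adic filtration, the quotients $\Lambda/\mathring{\Lambda}^k$ are finite-dimensional, and the Fr\'echet topology is the projective-limit topology of these quotients, so a filtered linear map is automatically continuous~--- is self-contained and makes the ``automatic continuity'' of abstract $\Z_2^n$-Grassmann algebra morphisms transparent, whereas the paper's route leans on external machinery (whose proof is, at bottom, of the same adic nature). Two steps you compress should be made explicit. First, multiplicativity only controls finite sums, so to see $\psi^*\big(\mathring{\Lambda}^k\big)\subset\big(\mathring{\Lambda}'\big)^k$ on arbitrary (infinite) series one needs the purely algebraic identification of $\mathring{\Lambda}^k$~--- the $k$-th power of the ideal generated by the generators, hence the set of \emph{finite} sums of $k$-fold products~--- with the set of all series whose terms each contain at least $k$ generators (every such series equals $\sum_{|i|=k}\theta^i g_i(\theta)$, a finite sum); this same identification is what makes the quotients finite-dimensional, and it is also the point you implicitly invoke in the parenthetical remark. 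Second, the projective-limit description of the topology is not the content of Proposition~\ref{Prop:GAlgFrec}, which only asserts that $\Lambda$ is nuclear Fr\'echet; it follows instead from the defining seminorms $||\mathrm{x}||_k=\sum_{|j|\le k}|a_j|$ recalled in Appendix~\ref{appx:AManifolds}, each of which factors through $\Lambda/\mathring{\Lambda}^{k+1}$ (equivalently, from the $\mathcal{J}$-adic completeness). With these two points spelled out your argument is complete; note finally that your check that $L$ maps $\mathcal{U}^{p|\ul{q}}(\Lambda)$ into $\mathcal{U}^{p|\ul{q}}(\Lambda')$, via invariance of the independent term $x_{||}$, is automatic in the paper's functorial formulation (where $\Psi(\mathrm{x}^*)=\psi^*\circ\mathrm{x}^*$ is by construction a $\Lambda'$-point of $\mathcal{U}^{p|\ul{q}}$), but it is genuinely needed in your concrete presentation and you handle it correctly.
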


\begin{proof}Since $\zL=\cO_{\R^{0|\ul m}}(\{\star\})$, so that \begin{gather*}\psi^*\in\Hom_{\Z_2^n{\tt Alg}}(\cO_{\R^{0|\ul m}}(\{\star\}),\cO_{\R^{0|\ul{m}'}}(\{\star\})),\end{gather*} there is a unique morphism \begin{gather*}\Phi=(|\phi|,\phi^*)\in\Hom_{\Z_2^n{\tt Man}}\big(\R^{0|\ul m'},\R^{0|\ul m}\big),\end{gather*} such that $\psi^*=\phi^*_{\{\star\}}$. Hence, the morphism $\psi^*$ is continuous from $\zL=\R[[\zvy]]$ to $\zL'=\R[[\zvy']]$ endowed with their standard locally convex topologies~\cite{Bruce:2018}, and so are its restrictions $\psi^*|_{\zL_{\zg_i}}$ from $\zL_{\zg_i}$ to $\zL'_{\zg_i}$. We thus see that the induced map {\samepage \begin{gather*}\Psi=(\psi^*|_{\zL_0})^{\times p}\times\prod_{i=1}^N(\psi^*|_{\zL_{\zg_i}})^{\times q_i} \end{gather*} is continuous.}

At $\mr{x}^*\simeq (x_\zL,\zx_\zL)=:u_\zL\in\cU^{p|\ul q}(\zL)$ and $\mr{v}\simeq v_\zL\in \R^{p|\ul q}(\zL)$, the derivative
\begin{gather*}\rmd \Psi \colon \ \mathcal{U}^{p|\ul{q}}(\Lambda) \times \R^{p|\ul{q}}( \Lambda) \longrightarrow \R^{p|\ul{q}}( \Lambda')\end{gather*}
is defined as
\begin{gather*}
\rmd_{\mathrm{x}^*} \Psi(\mathrm{v}) = \lim_{t \rightarrow 0}\frac{\Psi(\mathrm{x}^* + t \mathrm{v}) {-} \Psi(\mathrm{x}^*)}{t}
 = \lim_{t \rightarrow 0} \left(\dots,\frac{\psi^*(u^{\mathfrak{a}}_\zL + t v^{\mathfrak{a}}_\zL) {-} \psi^*(u^{\mathfrak{a}}_\zL) }{t},\dots\right)\\
\hphantom{\rmd_{\mathrm{x}^*} \Psi(\mathrm{v})}{} = \big(\dots,\psi^*(v_\zL^{\mathfrak{a}}),\dots\big) =: (\psi^*(v_\zL^{\mathfrak{a}})),
\end{gather*}
where $\mathfrak{a}$ is the label $a\in\{1,\ldots,p\}$ or $A\in\{1,\ldots,|\ul q|\}$ of any coordinate in $\R^{p|\ul q}(\zL)$, and where we used the $\R$-linearity of the $\Z_2^n$-algebra morphism $\psi^*\colon \zL\to\zL'$. Hence, for any $\mathrm{a}\in\zL_0$, we get
\begin{gather*}\rmd_{\mathrm{x}^*} \Psi(\mathrm{a}\cdot\mathrm{v}) = (\psi^*(\mathrm{a}\cdot v_\zl^{\mathfrak{a}})) = (\psi^*(\mathrm{a})\cdot \psi^*(v_\zl^{\mathfrak{a}})) = \psi^*(\mathrm{a}) \cdot \rmd_{\mathrm{x}^*} \psi(\mathrm{v}).\end{gather*}
Since the higher order derivatives of $\Psi$ vanish, all its derivatives exist and are continuous, hence, the map $\Psi$ is actually smooth.
\end{proof}

\subsection[The manifold structure on the set of $\Lambda$-points]{The manifold structure on the set of $\boldsymbol{\Lambda}$-points}\label{MSLP}

The next theorem generalizes Propositions \ref{LambdaFrechet} and \ref{prop:varphiSmooth}. For information about Fr\'echet manifolds, we refer to Appendix~\ref{appx:AManifolds}. We recall that the $\zL$-points $M(\zL)$ of a $\Z_2^n$-manifold~$M$ can be equivalently viewed as the maps $m=(|m|,m^*)\in\Hom_{Z_2^n\tt Man}\big(\R^{0|\ul m},M\big)$, as the global pullbacks $m^*=m^*_{|M|}\in\Hom_{\Z_2^n \tt Alg}(\cO_M(|M|),\zL)$, or as the induced morphisms \begin{gather*}m^*_\star\in\Hom_{\Z_2^n \tt Alg}(\cO_{M,x},\zL),\end{gather*} where $x=|m|(\star)\in|M| $. If $M=\cU^{p|\ul q}$ is a $\Z_2^n$-domain, we often write $\mr{x}$ instead of $m$ and we can identify $\mr{x}\simeq\mr{x}^*\simeq\mr{x}^*_\star$ with the pullbacks \begin{gather*}(u_{||}, \rim{u}_\zL,\zr_\zL)\in\cU^p\times\rim{\zL}_0^p \times\prod_i\zL_{\zg_i}^{q_i} \end{gather*} by $\mr{x}^*$ of the coordinate functions $(u,\zr)$ in $\cU^{p|\ul q}$. Recall as well that $\Z_2^n$-morphisms $\zvf\colon M\to N$ are mapped injectively to natural transformations $\zvf\colon M(-)\to N(-)$ with $\zL$-component \begin{gather}\label{InducedComponent}\zvf_\zL\colon \ M(\zL)\ni (x,m^*_\star)\mapsto (|\zvf|(x), m^*_\star\circ\zvf^*_x)\in N(\zL),\end{gather} and that, for any fixed~$M$, a~$\Z_2^n$-Grassmann algebra morphism $\psi^*\colon \zL\to\zL'$ induces a map \begin{gather*}M(\psi^*)\colon \ M(\zL)\ni(x,m^*_\star)\mapsto (x,\psi^*\circ m^*_\star)\in M(\zL').\end{gather*}

\begin{Theorem}\label{thm:ManStruct}Let $M$ be a $\Z_2^n$-manifold, and let $\Lambda$ and $\Lambda'$ be $\Z_2^n$-Grassmann algebras. Then
\begin{enumerate}\itemsep=0pt
\item[$(i)$] $M(\Lambda)$ has the structure of a nuclear Fr\'{e}chet $\Lambda_0$-manifold, and,
\item[$(ii)$] given a morphism of $\Z_2^n$-Grassmann algebras $\psi^* \colon \Lambda \longrightarrow \Lambda'$, the induced mapping $M(\psi^*)$ is $\psi^*$-smooth.
\end{enumerate}
\end{Theorem}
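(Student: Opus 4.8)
The plan is to reduce everything to the local statements already established for $\Z_2^n$-domains and then glue. Fix a $p|\ul q$-atlas $\big(\cU_I^{p|\ul q},|\psi_I|\big)_I$ of $M$ in the sense of Definition~\ref{AtlasDefi}, with associated open $\Z_2^n$-submanifolds $U_I=\big(|U_I|,\cO_M|_{|U_I|}\big)$. By Proposition~\ref{prop:CoverM} the sets $U_I(\Lambda)\simeq\cU_I^{p|\ul q}(\Lambda)$ cover $M(\Lambda)$, and by Proposition~\ref{LambdaFrechet} each $\cU_I^{p|\ul q}(\Lambda)$ is an open subset of the nuclear Fr\'echet space and Fr\'echet $\Lambda_0$-module $\R^{p|\ul q}(\Lambda)$. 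These identifications will serve as the charts of the desired manifold: I would topologise $M(\Lambda)$ by declaring each $U_I(\Lambda)$ open with its Fr\'echet topology, and then check compatibility on overlaps.

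Next I would analyse the transition maps. On an overlap $|U_I|\cap|U_J|$ the two charts differ by the coordinate transformation $\psi_{JI}$, which is a $\Z_2^n$-isomorphism between the open sub-$\Z_2^n$-domains $\cU_I^{p|\ul q}|_{V_{IJ}}$ and $\cU_J^{p|\ul q}|_{V_{JI}}$. Since the restricted Yoneda functor $M\mapsto M(-)$ is functorial, $\psi_{JI}$ induces a transition map $(\psi_{JI})_\Lambda$ between the corresponding open subsets of $\R^{p|\ul q}(\Lambda)$, with inverse induced by $\psi_{IJ}=\psi_{JI}^{-1}$. By Theorem~\ref{prop:AMorphisms}, a natural transformation coming from a $\Z_2^n$-morphism is $\Lambda_0$-smooth, so both $(\psi_{JI})_\Lambda$ and its inverse are $\Lambda_0$-smooth; hence each transition is a $\Lambda_0$-diffeomorphism. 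The cocycle identity $\psi_{KJ}\psi_{JI}=\psi_{KI}$ of the atlas passes to $(\psi_{KJ})_\Lambda(\psi_{JI})_\Lambda=(\psi_{KI})_\Lambda$, again by functoriality. The gluing lemma for Fr\'echet manifolds (Appendix~\ref{appx:AManifolds}) then yields a unique Fr\'echet manifold structure on $M(\Lambda)$ modelled on the nuclear Fr\'echet spaces $\R^{p|\ul q}(\Lambda)$ with $\Lambda_0$-smooth transitions, i.e.\ a nuclear Fr\'echet $\Lambda_0$-manifold, which proves~$(i)$. Hausdorffness and second countability follow from the corresponding properties of $|M|$ via the continuous base-point projection $(x,m^*_\star)\mapsto x$, together with the Hausdorffness of each chart.

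For part~$(ii)$, the induced map $M(\psi^*)\colon (x,m^*_\star)\mapsto(x,\psi^*\circ m^*_\star)$ preserves the base point $x$, so it restricts to $U_I(\Lambda)\to U_I(\Lambda')$ for every chart. Read through the coordinates of Proposition~\ref{LambdaFrechet}, this restriction is precisely the map $\cU_I^{p|\ul q}(\psi^*)=\Psi$ analysed in Proposition~\ref{prop:varphiSmooth}, which is smooth with $\psi^*$-linear G\^ateaux derivative, i.e.\ $\psi^*$-smooth. As $\psi^*$-smoothness is a local condition, the charts $U_I(\Lambda)$ cover $M(\Lambda)$, and $M(\psi^*)$ together with all transition maps commute by functoriality, the global map $M(\psi^*)$ is $\psi^*$-smooth, giving~$(ii)$.

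The genuinely delicate point is the gluing step in part~$(i)$: one must confirm that the smooth structures coming from overlapping charts agree, that the transitions are bona fide $\Lambda_0$-diffeomorphisms satisfying the cocycle identity, and that the resulting topology is Hausdorff and second countable. All the analytic content---smoothness of the transitions and $\Lambda_0$-linearity of their derivatives---is supplied by Theorem~\ref{prop:AMorphisms}; the remaining work is the essentially formal but careful verification that functoriality of $M\mapsto M(-)$ converts the $\Z_2^n$-atlas cocycle into a Fr\'echet-manifold atlas cocycle, and that the construction is independent of the chosen atlas up to $\Lambda_0$-diffeomorphism.
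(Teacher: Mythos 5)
Your proposal is correct and follows essentially the same route as the paper's own proof: cover $M(\Lambda)$ by chart sets $U_I(\Lambda)\simeq\cU_I^{p|\ul q}(\Lambda)$ via Proposition~\ref{prop:CoverM} and Proposition~\ref{LambdaFrechet}, obtain $\Lambda_0$-smooth transition diffeomorphisms from Theorem~\ref{prop:AMorphisms} applied to the induced natural transformations of the $\Z_2^n$-chart transitions, and reduce part~$(ii)$ to Proposition~\ref{prop:varphiSmooth} by naturality. The only cosmetic differences are that you phrase the construction through the atlas definition (Definition~\ref{AtlasDefi}) and an informal ``gluing lemma'', where the paper verifies directly the smooth-atlas conditions of Appendix~\ref{appx:AManifolds} (in particular, openness of overlap images via the stalk description of Lemma~\ref{RestTarget}).
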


\begin{proof}(i) Let $p|\ul q$ be the dimension of the $\Z_2^n$-manifold $M$. The local $\Z_2^n$-isomorphisms \begin{gather*}h_\za=(|h_\za|,h^*_\za)\colon \ U_\za=(|U_\za|,\cO_M|_{|U_\za|})\to \cU_\za^{p|\ul q}=\big(\cU^p_\za,\Ci_{\R^p}|_{\cU^p_\za}[[\zr]]\big),\end{gather*} where $\za$ varies in some $\mathcal{A}$ and where $|U_\za|\subset|M|$ is open, provide an atlas on $M$ (see paragraph below Definition~\ref{AtlasDefi}). As recalled above, the $\Z_2^n$-isomorphisms \begin{gather*}h_\za\colon \ U_\za\to\cU_\za^{p|\ul q}\end{gather*} implement natural isomorphisms $h_{\za}$ with $\zL$-components \begin{gather}\label{InducedComponentSpec}h_{\za,\zL}\colon \ U_\za(\zL)\ni (x,m^*_\star)\mapsto (|h_\za|(x), m^*_\star\circ (h_\za)^*_x)\in \cU_\za^{p|\ul q}(\zL),\end{gather} whose inverses are the similar maps defined using \begin{gather*}\big|h_\za^{-1}\big|=|h_\za|^{-1}\qquad\text{and}\qquad\big(h_\za^{-1}\big)^*_y=((h_\za)^*_{|h_\za|^{-1}(y)})^{-1} (y\in\cU_\za^p).\end{gather*} The family $(U_\za(\zL),h_{\za,\zL})$ ($\za\in\mathcal{A}$) is an atlas that endows $M(\zL)$ with a nuclear Fr\'echet $\zL_0$-manifold structure. Indeed:
\begin{enumerate}\itemsep=0pt
\item[(a)] Any $h_{\za,\zL}\colon U_\za(\zL)\to \cU_\za^{p|\ul q}(\zL)$ is a bijection valued in the open subset $\cU_\za^{p|\ul q}(\zL)$ of the nuclear Fr\'echet vector space $\R^{p|\ul q}(\zL)$, which is also a Fr\'echet module over the nuclear Fr\'echet algebra~$\zL_0$. Moreover, as the $|U_\za|$ are an open cover of~$|M|$, we have \begin{gather*}M(\Lambda) = \bigcup_{\za \in \mathcal{A}} U_{\za}(\Lambda),\end{gather*} in view of Proposition~\ref{prop:CoverM}.
\item[(b)] The image $h_{\za,\zL}(U_\za(\zL)\cap U_\zb(\zL))$ is open in $\R^{p|\ul q}(\zL)$. To see this, set $|U_{\za\zb}|=|U_\za|\cap|U_\zb|\subset|U_\za|$ and consider the open $\Z_2^n$-submanifold $U_{\za\zb}=(|U_{\za\zb}|,\cO_M|_{|U_{\za\zb}|})$ of $U_\za$. The $\Z_2^n$-isomorphism $h_\za$ restricts to a $\Z_2^n$-isomorphism \begin{gather*}h_\za\colon \ U_{\za\zb}\to \cU_{\za\zb}^{p|\ul q},\end{gather*} where the target is the open $\Z_2^n$-subdomain $\cU_{\za\zb}^{p|\ul q}$ of $\cU_\za^{p|\ul q}$ defined over the open subset \begin{gather*}\cU_{\za\zb}^{p}:=|h_\za|(|U_{\za\zb}|)\subset\cU^{p}_\za,\end{gather*} obtained as the image of the open subset $|U_{\za\zb}|\subset|U_\za|$ by the diffeomorphism $|h_\za|$. The restricted $\Z_2^n$-isomorphism $h_\za$ induces a natural isomorphism $h_\za$, whose $\zL$-component is a~bijection \begin{gather*}h_{\za,\zL}\colon \ U_{\za\zb}(\zL)\to\cU_{\za\zb}^{p|\ul q}(\zL).\end{gather*} Further, we have \begin{gather*}U_{\za\zb}(\zL)=\bigcup_{x\in |U_{\za\zb}|}\Hom_{\Z_2^n\tt Alg}(\cO_{M,x},\zL)\\
 \hphantom{U_{\za\zb}(\zL)}{} = \bigcup_{x\in|U_{\za}|}\Hom_{\Z_2^n\tt Alg}(\cO_{M,x},\zL) \bigcap \bigcup_{x\in|U_{\zb}|}\Hom_{\Z_2^n\tt Alg}(\cO_{M,x},\zL)=U_\za(\zL)\cap U_{\zb}(\zL).\end{gather*} Hence, the image $h_{\za,\zL}(U_\za(\zL)\cap U_\zb(\zL))=\cU^{p|\ul q}_{\za\zb}(\zL)\subset \R^{p|\ul q}(\zL)$ is open.
\item[(c)] We have still to prove that the transition bijections \begin{gather*}h_{\zb,\zL}(h_{\za,\zL})^{-1}\colon \ \cU^{p|\ul q}_{\za\zb}(\zL)\to \cU^{p|\ul q}_{\zb\za}(\zL)\end{gather*} are $\zL_0$-smooth. In view of Theorem~\ref{prop:AMorphisms}, the $\Z_2^n$-isomorphism \begin{gather*}h_\zb h_\za^{-1}\colon \ \cU^{p|\ul q}_{\za\zb}\to\cU^{p|\ul q}_{\zb\za}\end{gather*} induces a natural isomorphism $h_\zb h_\za^{-1}$ with a $\zL_0$-smooth $\zL$-component \begin{gather*}\big(h_\zb h_\za^{-1}\big)_\zL\colon \ \cU^{p|\ul q}_{\za\zb}(\zL)\to\cU^{p|\ul q}_{\zb\za}(\zL).\end{gather*} In view of equations~\eqref{InducedComponent} and~\eqref{InducedComponentSpec}, we get \begin{gather*}\big(h_\zb h_\za^{-1}\big)_\zL(u,\mr{x}^*_\star)=\big(\big|h_\zb h_\za^{-1}\big|(u), \mr{x}^*_\star\circ \big(h_\zb\circ h_\za^{-1}\big)^*_u\big)\\
 \hphantom{\big(h_\zb h_\za^{-1}\big)_\zL(u,\mr{x}^*_\star)}{} = \big(|h_\zb|\big(|h_\za|^{-1}(u)\big), \mr{x}^*_\star\circ \big((h_\za)^*_{|h_\za|^{-1}(u)}\big)^{-1}\circ (h_\zb)^*_{|h_\za|^{-1}(u)}\big)\\
 \hphantom{\big(h_\zb h_\za^{-1}\big)_\zL(u,\mr{x}^*_\star)}{}=h_{\zb,\zL}\big((h_{\za,\zL})^{-1}(u,\mr{x}^*_\star)\big),\end{gather*} for any $(u,\mr{x}^*_\star)\in\cU^{p|\ul q}_{\za\zb}(\zL).$ It follows that $h_{\zb,\zL}(h_{\za,\zL})^{-1}=\big(h_\zb h_\za^{-1}\big)_\zL$ is $\zL_0$-smooth.
\end{enumerate}

(ii) The statement of part (ii) is purely local, see Appendix~\ref{appx:AManifolds}. Let $(x,m^*_\star)\in M(\zL)$, let $(U_\za(\zL),h_{\za,\zL})$ be a chart of $M(\zL)$ around $(x,m^*_\star)$, and let $(U_\zb(\zL'),h_{\zb,\zL'})$ be a chart of $M(\zL')$, such that $M(\psi^*)(U_{\za}(\zL))\subset U_{\zb}(\zL')$. We must show that the local form \begin{gather*}h_{\zb,\zL'}\circ M(\psi^*)\circ(h_{\za,\zL})^{-1}\end{gather*} of $M(\psi^*)$ is $\psi^*$-smooth. Actually, we can choose $(U_\za(\zL'),h_{\za,\zL'})$ as second chart, since the image by $M(\psi^*)$ of a point $(y,n^*_\star)$ in $U_\za(\zL)$, i.e., a point \begin{gather*}(y,n^*_\star)\in\Hom_{\Z_2^n\tt Alg}(\cO_{M,y},\zL)\end{gather*} with $y\in|U_\za|$, is the point \begin{gather*}(y,\psi^*\circ n^*_\star)\in\Hom_{\Z_2^n\tt Alg}(\cO_{M,y},\zL'),\end{gather*} i.e., in $U_\za(\zL')$. From here, we omit subscript $\za$. Since $h\colon U(-)\to\cU^{p|\ul q}(-)$ is a natural transformation, the diagram
\begin{gather*}
\leavevmode
\begin{xy}
(0,15)*+{U(\Lambda)}="a"; (35,15)*+{U(\Lambda')}="b";%
(0,0)*+{\mathcal{U}^{p|\ul{q}}(\Lambda) }="c"; (35,0)*+{\mathcal{U}^{p|\ul{q}}(\Lambda')}="d";%
{\ar "a";"b"}?*!/_3mm/{M(\psi^*)};%
{\ar "a";"c"}?*!/^3mm/{h_\Lambda};{\ar "b";"d"}?*!/_5mm/{h_{\Lambda'}};%
{\ar "c";"d"} ?*!/^5mm/{\cU^{p|\ul q}(\psi^*)};%
\end{xy}
\end{gather*}
commutes. Since $h$ is in fact a natural isomorphism, we get that \begin{gather*}h_{\Lambda'} \circ M(\psi^*) \circ (h_\Lambda)^{-1}=\cU^{p|\ul q}(\psi^*).\end{gather*} From Proposition~\ref{prop:varphiSmooth} we conclude that this local form is indeed $\psi^*$-smooth.
\end{proof}

In view of \eqref{eqn:LambdaCoords}, in general, the local model $\R^{p|\ul{q}}(\zL)$ of $M(\zL)$ is infinite-dimensional, due to the non-zero degree even coordinates of $\zL$. If the particular $\Z_2^n$-Grassmann algebra has no non-zero degree even coordinates, then it is a polynomial algebra and the resulting local model~$\R^{p|\ul{q}}(\zL)$ will, of course, be finite-dimensional. Further, we have the

\begin{Corollary}\label{corl:FunctorLMan}For any $\Z_2^n$-manifold $M$, the associated functor \begin{gather*}M(-)\in\big[\Z_2^n{\tt Pts}^{\op{op}},{\tt Set}\big]\end{gather*} can be considered as a functor \begin{gather*}M(-)\in\big[\Z_2^n{\tt Pts}^{\op{op}},{\tt A(N)FMan}\big],\end{gather*} where the target category is either the category $\tt AFMan$ of Fr\'echet manifolds over a Fr\'echet algebra or the category $\tt ANFMan$ of nuclear Fr\'echet manifolds over a nuclear Fr\'echet algebra, see Appendix~{\rm \ref{appx:AManifolds}}. Therefore, the faithful restricted Yoneda functor $\mathcal{Y}_{\Z_2^n\tt Pts}$, see Corollary~{\rm \ref{cor:RestYon}}, can be viewed as a faithful functor \begin{gather*}\mathcal{Y}_{\Z_2^n\tt Pts}\colon \ \Z_2^n{\tt Man}\to\big[\Z_2^n{\tt Pts}^{\op{op}},{\tt A(N)FMan}\big].\end{gather*}
\end{Corollary}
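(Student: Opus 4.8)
The plan is to treat this corollary as a packaging of Theorem~\ref{thm:ManStruct} together with Corollaries~\ref{Coro:Lambda0Frec} and~\ref{cor:RestYon}, the only genuinely new ingredient being the identification of the relevant maps as morphisms in ${\tt A(N)FMan}$. First I would verify that $M(-)$, viewed on $\Z_2^n{\tt Pts}^{\op{op}}\simeq\Z_2^n{\tt GrAlg}$, takes values in ${\tt A(N)FMan}$. On objects this is immediate: Theorem~\ref{thm:ManStruct}(i) endows $M(\Lambda)$ with the structure of a nuclear Fréchet $\Lambda_0$-manifold, and by Corollary~\ref{Coro:Lambda0Frec} the base $\Lambda_0$ is a commutative nuclear Fréchet algebra, so $M(\Lambda)$ is an object of ${\tt ANFMan}$, hence also of ${\tt AFMan}$. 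On morphisms, a $\Z_2^n$-Grassmann algebra map $\psi^*\colon\Lambda\to\Lambda'$ induces $M(\psi^*)\colon M(\Lambda)\to M(\Lambda')$, which Theorem~\ref{thm:ManStruct}(ii) shows to be $\psi^*$-smooth, that is, exactly a morphism of ${\tt A(N)FMan}$ in the sense of Appendix~\ref{appx:AManifolds}. Functoriality (preservation of identities and composition) is inherited verbatim from the already established Set-valued functor $M(-)$, since the enhancement changes only the ambient category and not the underlying maps.

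The substance of the second assertion is that $\mathcal{Y}_{\Z_2^n\tt Pts}$ sends each $\Z_2^n$-morphism $\phi\colon M\to N$ to a morphism of the enhanced functor category, i.e., to a natural transformation whose every component $\phi_\Lambda\colon M(\Lambda)\to N(\Lambda)$ is a morphism in ${\tt A(N)FMan}$; since source and target here lie over the same algebra $\Lambda_0$, this amounts to requiring $\phi_\Lambda$ to be $\Lambda_0$-smooth. This is the one point not literally supplied by Theorem~\ref{thm:ManStruct}, and it is the main (if modest) obstacle. I would dispose of it by localization, $\Lambda_0$-smoothness being a local property: choosing a chart $h_\za$ of $M$ whose base image sits inside a chart domain of $N$, the local form $h_{\zb,\Lambda}\circ\phi_\Lambda\circ(h_{\za,\Lambda})^{-1}$ coincides — by the same naturality computation used in item~(c) of the proof of Theorem~\ref{thm:ManStruct}(i), together with~\eqref{InducedComponent} — with the $\Lambda$-component of the natural transformation induced by the $\Z_2^n$-morphism $h_\zb\circ\phi\circ h_\za^{-1}$ between $\Z_2^n$-domains. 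By the converse direction of Theorem~\ref{prop:AMorphisms} (Part~II of its proof), every such component is $\Lambda_0$-smooth, so $\phi_\Lambda$ is $\Lambda_0$-smooth. The naturality squares of $\phi$ already commute in ${\tt Set}$, hence commute in ${\tt A(N)FMan}$ once their edges are recognized as morphisms there.

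Finally, faithfulness transfers directly from Corollary~\ref{cor:RestYon}. The action of $\mathcal{Y}_{\Z_2^n\tt Pts}$ on a Hom-set, $\phi\mapsto(\phi_\Lambda)_\Lambda$, is unchanged by the passage to the richer target; this passage only restricts the codomain Hom-set to those natural transformations with $\Lambda_0$-smooth components. As an injective map into the full set of natural transformations, it remains injective when corestricted to any subset containing its image, so the functor into $[\Z_2^n{\tt Pts}^{\op{op}},{\tt A(N)FMan}]$ is again faithful. This completes the plan.
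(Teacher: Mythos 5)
Your proposal is correct and takes essentially the same route as the paper: the object-level and Grassmann-morphism-level structure comes from Theorem~\ref{thm:ManStruct} together with Corollary~\ref{Coro:Lambda0Frec}, and the $\Lambda_0$-smoothness of the components $\phi_\Lambda$ of an induced natural transformation is obtained by localizing to charts and invoking Part~II of Theorem~\ref{prop:AMorphisms}, which is precisely the argument the paper gives (stated as Proposition~\ref{prop:NatMorp} and carried out inside the proof of Proposition~\ref{RYF}). Faithfulness likewise transfers from Corollary~\ref{cor:RestYon} in both treatments, so the only difference is organizational: you inline what the paper defers to the next subsection.
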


The latter statement requires that the natural transformation $\zvf\colon M(-)\to N(-)$ induced by a $\Z_2^n$-morphism $\zvf\colon M\to N$ have components $\zvf_\zL\colon M(\zL)\to N(\zL)$ that are morphisms in $\tt A(N)FMan$ between the Fr\'echet $\zL_0$-manifolds $M(\zL)$ and $N(\zL)$, i.e., that the $\zvf_\zL$ be $\zr$-smooth for some morphism $\zr\colon \zL_0\to\zL_0$ of Fr\'echet algebras. We will show in the next subsection that this condition is satisfied for $\zr=\id_{\zL_0}$, i.e., we will show that:

\begin{Proposition}\label{prop:NatMorp} Any natural transformation $\phi\colon M(-)\to N(-)$ that is implemented by a~$\Z_2^n$-morphism $\phi\colon M\to N$ has $\zL_0$-smooth components $\zvf_\zL\colon M(\zL)\to N(\zL)$.
\end{Proposition}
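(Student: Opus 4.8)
The plan is to reduce the global statement to the local result already contained in Theorem~\ref{prop:AMorphisms}, using that $\zL_0$-smoothness of a map between Fr\'echet $\zL_0$-manifolds is, by definition, a local condition read off in charts (see Appendix~\ref{appx:AManifolds}). Recall from the proof of Theorem~\ref{thm:ManStruct} that the $\zL_0$-manifold structure on $M(\zL)$ is induced by an atlas $(U_\za(\zL),h_{\za,\zL})_{\za\in\mathcal{A}}$ coming from a $\Z_2^n$-atlas $(U_\za,h_\za)$ of $M$, and likewise $N(\zL)$ is equipped with an atlas $(V_\zb(\zL),k_{\zb,\zL})$ coming from a $\Z_2^n$-atlas $(V_\zb,k_\zb)$ of $N$.

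First I would fix a $\zL$-point $(x,m^*_\star)\in M(\zL)$ and a chart $U_\za$ of $M$ with $x\in|U_\za|$. Since $|\zvf|$ is continuous and $|\zvf|(x)$ lies in some $|V_\zb|$, I may replace $U_\za$ by the open $\Z_2^n$-subdomain sitting over $|U_\za|\cap|\zvf|^{-1}(|V_\zb|)$, so that $|\zvf|(|U_\za|)\subset|V_\zb|$; this still yields a chart compatible with the $\zL_0$-manifold structure. In view of the explicit formula~\eqref{InducedComponent}, the component $\zvf_\zL$ then maps the open subset $U_\za(\zL)\subset M(\zL)$ into $V_\zb(\zL)\subset N(\zL)$, so that the local representative $k_{\zb,\zL}\circ\zvf_\zL\circ(h_{\za,\zL})^{-1}$ is a well-defined map between the open subsets $\cU_\za^{p|\ul q}(\zL)$ and $\cV_\zb^{r|\ul s}(\zL)$ of the relevant Fr\'echet spaces.

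Next I would identify this local representative with the $\zL$-component of a natural transformation induced by a morphism of $\Z_2^n$-domains. The restriction--corestriction of $\zvf$ gives a $\Z_2^n$-morphism $\zvf_{\za\zb}\colon U_\za\to V_\zb$, whence a $\Z_2^n$-morphism $\wt\zvf:=k_\zb\circ\zvf_{\za\zb}\circ h_\za^{-1}\colon \cU_\za^{p|\ul q}\to\cV_\zb^{r|\ul s}$ between $\Z_2^n$-domains. Since the restricted Yoneda functor $\cY_{\Z_2^n\tt Pts}$ respects composition and since the chart maps $h_{\za,\zL}$, $k_{\zb,\zL}$ of Theorem~\ref{thm:ManStruct} are exactly the $\zL$-components of the natural isomorphisms induced by $h_\za$, $k_\zb$ (see~\eqref{InducedComponentSpec}), unwinding the definitions yields $\wt\zvf_\zL=k_{\zb,\zL}\circ\zvf_\zL\circ(h_{\za,\zL})^{-1}$ on $\cU_\za^{p|\ul q}(\zL)$. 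As $\wt\zvf$ is a morphism of $\Z_2^n$-domains, the implication of Theorem~\ref{prop:AMorphisms} stating that a natural transformation induced by such a morphism has $\zL_0$-smooth components shows that $\wt\zvf_\zL$ is $\zL_0$-smooth. Because $\zL_0$-smoothness is local and the charts $U_\za(\zL)$ cover $M(\zL)$ by Proposition~\ref{prop:CoverM}, we conclude that $\zvf_\zL$ is $\zL_0$-smooth for every $\zL$.

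The only genuinely delicate point is the bookkeeping in the third step: checking that passing to the induced natural transformation commutes with restriction to a source chart and corestriction to a target chart, so that $k_{\zb,\zL}\circ\zvf_\zL\circ(h_{\za,\zL})^{-1}$ truly is the $\zL$-component of the $\Z_2^n$-domain morphism $\wt\zvf$ and not merely a map agreeing with it on an awkward subset. Once this identification is secured from~\eqref{InducedComponent},~\eqref{InducedComponentSpec} and the functoriality of the (restricted) Yoneda embedding, the $\zL_0$-smoothness is entirely inherited from Theorem~\ref{prop:AMorphisms}, and no further analytic work is required.
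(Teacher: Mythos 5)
Your proposal is correct and follows essentially the same route as the paper's own argument (given in the proof of Proposition~\ref{RYF}): cover $M$ and $N$ by $\Z_2^n$-charts arranged so that each source chart maps into a target chart under $|\phi|$, identify the local representative $k_{\zb,\zL}\circ\zvf_\zL\circ(h_{\za,\zL})^{-1}$ with the $\zL$-component of the induced $\Z_2^n$-domain morphism via~\eqref{InducedComponent},~\eqref{InducedComponentSpec} and functoriality, and invoke the easy direction of Theorem~\ref{prop:AMorphisms} together with locality of $\zL_0$-smoothness. The only cosmetic difference is that the paper constructs the cover target-first (charts $V_\zb$ of $N$, then charts of $|\phi|^{-1}(|V_\zb|)$), whereas you shrink an arbitrary source chart by intersecting with $|\phi|^{-1}(|V_\zb|)$; both secure the same compatibility and the bookkeeping you flag is exactly the identity $g_\zL\circ\phi_\zL\circ(h_\zL)^{-1}=\big(g\circ\phi|_{U}\circ h^{-1}\big)_\zL$ used in the paper.
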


\begin{Theorem}Let $M\in\Z_2^n\tt Man$ be of dimension $p|\ul q$ and let $\zL\in\Z_2^n\tt GrAlg$.
\begin{enumerate}\itemsep=0pt
\item[$(i)$] The nuclear Fr\'echet $\zL_0$-manifold $M(\zL)$ is a fiber bundle in the category $\tt ANFMan$. Its base is the nuclear Fr\'echet $\R$-manifold $M(\R)$, i.e., the smooth manifold $|M|$, and its typical fiber is the nuclear Fr\'echet $\zL_0$-manifold \begin{gather*}
 \zL^{p|\ul q}:=\rim{\zL}_0^p\times \prod_{i=1}^N\zL_{\zg_i}^{q_i}.\end{gather*}
\item[$(ii)$] The topology of $M(\zL)$, which is defined, as in the case of smooth manifolds, by the atlas providing the nuclear Fr\'echet $\zL_0$-structure, is a Hausdorff topology, so that~$M(\zL)$ is a~genuine Fr\'echet manifold.
\end{enumerate}
\end{Theorem}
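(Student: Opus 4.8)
The plan is to read the fibre-bundle structure~(i) off the atlas of $M(\zL)$ constructed in Theorem~\ref{thm:ManStruct}, and then to obtain the Hausdorff property~(ii) as a formal consequence of local triviality together with the separation of the base and of the typical fibre.

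I would begin by pinning down the projection. A $\zL$-point $m=(|m|,m^*)$ determines the topological point $x=|m|(\star)\in|M|$, and since an $\R$-point is merely an evaluation one has $M(\R)\simeq|M|$ as smooth manifolds; this yields $\pi\colon M(\zL)\to M(\R)$, $(x,m^*_\star)\mapsto x$. Through a chart $h_{\za,\zL}$ and the identification~\eqref{IsomPtProd}, $\pi$ becomes the first--factor projection $\cU^p_\za\times\zL^{p|\ul q}\to\cU^p_\za$ post-composed with $|h_\za|^{-1}$, so it is continuous; moreover, by the componentwise action~\eqref{ActionCompWise} the $\R$-part of $\mathrm{a}\cdot x_\zL$ is $\zve(\mathrm{a})\,x_{||}$, where $\zve\colon\zL_0=\R\oplus\rim\zL_0\to\R$ is the canonical projection, whence $\rmd_{\mathrm{x}^*}\pi(\mathrm{a}\cdot\mathrm{v})=\zve(\mathrm{a})\,\rmd_{\mathrm{x}^*}\pi(\mathrm{v})$ and $\pi$ is $\zve$-smooth, i.e.\ a morphism in $\tt ANFMan$.

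For the local trivialisations I would set, for each chart $(U_\za(\zL),h_{\za,\zL})$, the map $\Phi_\za:=(|h_\za|^{-1}\times\id)\circ h_{\za,\zL}\colon U_\za(\zL)\to|U_\za|\times\zL^{p|\ul q}$, using $\cU^{p|\ul q}_\za(\zL)\simeq\cU^p_\za\times\zL^{p|\ul q}$ and $U_\za(\zL)=\pi^{-1}(|U_\za|)$ (Proposition~\ref{prop:CoverM}). By Theorem~\ref{thm:ManStruct} each $\Phi_\za$ is an $\tt ANFMan$-isomorphism, and by construction $\mathrm{pr}_1\circ\Phi_\za=\pi$. On an overlap the transition $\Phi_\zb\circ\Phi_\za^{-1}=(|h_\zb|^{-1}\times\id)\circ(h_\zb h_\za^{-1})_\zL\circ(|h_\za|\times\id)$ is $\zL_0$-smooth, because $(h_\zb h_\za^{-1})_\zL$ is $\zL_0$-smooth by Theorem~\ref{prop:AMorphisms}. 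The decisive point is that it is fibre preserving over $\id_{|U_{\za\zb}|}$: in the explicit form~\eqref{eqn:smoothzero} the $\R$-part of the new base coordinate is the $(\za,\zb)=(0,0)$ term $\phi^b_0(x_{||})=\big(|h_\zb|\circ|h_\za|^{-1}\big)(x_{||})$, which is insensitive to the formal fibre directions $\rim x_\zL,\zx_\zL$; post-composing with $|h_\zb|^{-1}$ therefore returns the base point unchanged, so $\Phi_\zb\Phi_\za^{-1}(x,v)=(x,g_{\zb\za}(x)(v))$ with the fibre part $\zL_0$-smooth. Since the typical fibre $\zL^{p|\ul q}=\rim\zL_0^p\times\prod_i\zL_{\zg_i}^{q_i}$ is a finite product of nuclear Fréchet $\zL_0$-modules ($\rim\zL_0$ being an ideal of $\zL_0$ and each $\zL_{\zg_i}$ a $\zL_0$-module), this is precisely the data of a fibre bundle in $\tt ANFMan$ with base $M(\R)\simeq|M|$, fibre $\zL^{p|\ul q}$ and projection $\pi$, establishing~(i).

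Finally, (ii) follows by separating two distinct points $m_1,m_2\in M(\zL)$. If $\pi(m_1)\neq\pi(m_2)$, Hausdorffness of the smooth manifold $|M|$ provides disjoint opens whose $\pi$-preimages separate $m_1,m_2$; if $\pi(m_1)=\pi(m_2)=x$, then both points lie in the single chart $U_\za(\zL)=\pi^{-1}(|U_\za|)$ for any $\za$ with $x\in|U_\za|$, and since $h_{\za,\zL}$ is a homeomorphism onto the open subset $\cU^{p|\ul q}_\za(\zL)$ of the (metrisable, hence Hausdorff) Fréchet space $\R^{p|\ul q}(\zL)$, their distinct images are separated there and pulled back to separate $m_1,m_2$. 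Equivalently, (ii) is the general fact that the total space of a locally trivial bundle over a Hausdorff base with Hausdorff fibre is Hausdorff, applied to~(i). The only genuinely delicate step in the whole proof is the fibre-preservation of the transition maps, namely that the base component of $(h_\zb h_\za^{-1})_\zL$ does not depend on the formal fibre coordinates; this is exactly what promotes the submersion $\pi$ to an honest bundle, and it is read directly off~\eqref{eqn:smoothzero}.
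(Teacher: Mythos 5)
Your proof is correct and follows essentially the same route as the paper: the projection is the map induced by the canonical morphism $p^*\colon \zL\to\R$ (your $\zve$-smoothness is the paper's $p^*$-smoothness), the charts $h_{\za,\zL}$ of Theorem~\ref{thm:ManStruct} serve as local trivialisations over $|M|\simeq M(\R)$, and Hausdorffness is obtained by the same case distinction on base points. The only presentational differences are that you check fibre-preservation of the transition maps by reading the $(0,0)$-term off~\eqref{eqn:smoothzero}, where the paper instead verifies directly that $\cU^{p|\ul q}(p^*)$ becomes $\mathrm{prj}_1$ under the identification of a $\zL$-point with its coordinate pullbacks (its equations~\eqref{Nice} and~\eqref{Interesting}, which is also the fact underlying your claim $\mathrm{pr}_1\circ\Phi_\za=\pi$), and in the equal-base-point case you separate the two points inside a single chart using that the Fr\'echet model is Hausdorff, where the paper projects onto the typical fibre $\zL^{p|\ul q}$ and separates there -- both arguments amount to the same thing.
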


\begin{proof}(i) We think of fiber bundles in $\tt ANFMan$ exactly as of fiber bundles in the category of smooth manifolds. Of course, in such a fiber bundle, all objects and arrows are $\tt ANFMan$-objects and $\tt ANFMan$-morphisms.

Let $p^*\colon \zL\to \R$ be, as above, the canonical $\Z_2^n\tt GrAlg$-morphism. The induced map \begin{gather*}\zp:=M(p^*)\colon \ M(\zL)\ni(x,m^*_\star)\mapsto (x,p^*\circ m^*_\star)\simeq x\in M(\R)\simeq |M|\end{gather*} is $p^*$-smooth, i.e., is a morphism in the category $\tt ANFMan$.

We will show that $\zp$ is surjective and that the local triviality condition is satisfied.

Let $z\in|M|$. There is a $\Z_2^n$-chart $(U,h)$ of $M$, such that $|U|\subset|M|$ is a neighborhood of $z$. The $\Z_2^n$-isomorphism $h\colon U\to \cU^{p|\ul q}$ induces a natural isomorphism $h$, whose $\zL$-components are $\zL_0$-diffeomorphisms, i.e., $\zL_0$-smooth maps that have a $\zL_0$-smooth inverse. We have the following commutative diagram:
\begin{gather*}
\leavevmode
\begin{xy}
(0,20)*+{U(\Lambda)}="a"; (55,20)*+{\cU^{p|\ul q}(\zL)\simeq \mathcal{U}^p \times\zL^{p|\ul q}} ="b";
(0,0)*+{U(\R)\simeq|U|}="c"; (55,0)*+{\cU^{p|\ul q}(\R)\simeq\mathcal{U}^p ,}="d";%
{\ar@{<->} "a";"b"}?*!/_4mm/{h_\Lambda};%
{\ar "a";"c"}?*!/^13mm/{U(p^*)=\zp|_{U(\zL)}};%
{\ar "b";"d"}?*!/_16mm/{\cU^{p|\ul q}(p^*)\simeq\textnormal{prj}_1};%
{\ar@{<->} "c";"d"}?*!/^3mm/{h_\R=|h|};%
\end{xy}
\end{gather*}
where $\op{prj}_1$ is the canonical projection. Let us explain that $\cU^{p|\ul q}(p^*)\simeq\op{prj}_1$, when read through $\flat\colon \cU^p\times\zL^{p|\ul q}\leftrightarrow\cU^{p|\ul q}(\zL)$. We need a more explicit description of the equivalent views on $\zL$-points of a $\Z_2^n$-domain, see beginning of Section~\ref{MSLP}. As elsewhere in this text, we denote a~$\Z_2^n$-morphism $\R^{0|\ul m}\to \cU^{p|\ul q}$ by $\mr{x}=(|\mr{x}|,\mr{x}^*)$ and we denote the morphism it induces between the stalks $\cO_{\cU^{p|\ul q},|\mr{x}|(\star)}\to\zL$ by $\mr{x}^*_\star$. The morphism $\flat$ is the succession of identifications
\begin{gather}\label{Nice}
\cU^p\times\zL^{p|\ul q}\ni(x_{||},\rim{x}_\zL,\zx_\zL)\simeq \mr{x}=(|\mr{x}|,\mr{x}^*)\simeq (|\mr{x}|(\star),\mr{x}^*_\star)\in\cU^{p|\ul q}(\zL),
\end{gather}
where the components of the base morphism $|\mr{x}|$ are obtained (see~\cite{Covolo:2016}) by applying the base projection $\ze_{\star}\colon \zL\!\to\! \R$ of $\R^{0|\ul m}$, i.e., the canonical morphism $p^*$, to the components \smash{$x^a_\zL=(x^a_{||},\rim{x}^a_\zL)\!\in\!\zL_0$}. Hence, we get \begin{gather}\label{Interesting}|\mr{x}|(\star)=|\mr{x}|=\big(\dots, p^*(x^a_\zL),\dots\big)=x_{||}.\end{gather} Therefore, we actually obtain that \begin{gather*}\cU^{p|\ul q}(p^*)(\flat(x_{||},\rim{x}_\zL,\zx_\zL))=(|\mr{x}|(\star),p^*\circ\mr{x}^*_\star)\simeq |\mr{x}|(\star)=x_{||}=\op{prj}_1(x_{||},\rim{x}_\zL,\zx_\zL).\end{gather*}

Since $\zp|_{U(\zL)}=|h|^{-1}\circ\op{prj}_1\circ h_\zL$, the local projection $\zp|_{U(\zL)}$ is surjective, so that $z$ is in the image of $\zp$, which is thus surjective as well.

As just mentioned, we started from $z\in|M|$ and found a neighborhood $|U|$ of $z$ and a $\zL_0$-diffeomorphism $h_\zL$. When identifying $|U|$ with $\cU^p$ via $|h|$ (which then becomes $\id$), we get the $\zL_0$-diffeomorphism \begin{gather}\label{Trivialization}h_\zL\colon \ \zp^{-1}(|U|)\simeq U(\zL)\ni (y,m^*_\star)\mapsto (y,m^*_\star\circ h^*_y)\in|U|\times\zL^{p|\ul q}.\end{gather} Observe that in equation \eqref{Trivialization} we used $\flat^{-1}$ defined in equations~\eqref{Nice} and~\eqref{Interesting}, thus identifying \begin{gather*}(y,m^*_\star\circ h^*_y)\in\Hom_{\Z_2^n\tt Alg}(\cO_{\cU^{p|\ul q},y},\zL) \subset \cU^{p|\ul q}(\zL)\end{gather*} with $h\circ m\in\Hom_{\Z_2^n\tt Man}\big(\R^{0|\ul m},\cU^{p|\ul q}\big)$, and then with \begin{gather*}\big(y,\op{pr}_2(m^*(h^*(x))),m^*(h^*(\zx))\big)\in |U|\times\zL^{p|\ul q},\end{gather*} where we denoted the projection of $\zL_0$ onto $\rim{\zL}_0$ by $\op{pr}_2$. Notice also that the conclusion that $\zL^{p|\ul q}$ is a nuclear Fr\'echet $\zL_0$-manifold comes from the facts that any subspace (resp., any closed subspace) of a nuclear (resp., a Fr\'echet) space is a nuclear (resp., a Fr\'echet) space.

Hence, the trivialization condition is satisfied as well, and $M(\zL)$ is a fiber bundle in $\tt ANFMan$, as announced.

(ii) Now consider two different $\Lambda$-points $m^* = (x, m^*_\star)$ and $n^* = (y, n^*_\star)$ in $M(\zL)$. If $x\neq y$, then, as $|M|$ is Hausdorff, there exist open neighborhoods $|U|$ of $x$ and $|V|$ of $y$, such that $|U|\cap |V| = \varnothing$. When denoting the corresponding open $\Z_2^n$-submanifolds by $U$ and $V$, respectively, we get open neighborhoods $U(\zL)$ and $V(\zL)$ of $m^*$ and $n^*$, such that $U(\Lambda)\cap V(\Lambda) = \varnothing$. We have of course to check that, for any $\Z_2^n$-chart $(U_\za,h_\za)$, the image \begin{gather*}h_{\za,\zL}(U_\za(\zL)\cap U(\zL))\end{gather*} is open in $\R^{p|\ul q}(\zL)$, and similarly for $V(\zL)$. To see this, it suffices to proceed as in the proof of Theorem~\ref{thm:ManStruct}.

Next, consider the situation where $x=y=:z\in |M|$, use the trivialization constructed in~(i), and denote the canonical projection from $\cU^p\times\zL^{p|\ul q}$ onto $\zL^{p|\ul q}$ by $\op{prj}_2$. As $m^*\neq n^*$, we have $h_\zL(m^*)\neq h_\zL(n^*)$, i.e., \begin{gather*}(|h|(z),\op{prj}_2(h_\zL(m^*)))\neq(|h|(z),\op{prj}_2(h_\zL(n^*))).\end{gather*} Since $\op{prj}_2(h_\zL(m^*))\neq\op{prj}_2(h_\zL(n^*))$ are points in the Hausdorff space $\zL^{p|\ul q}$, there are open neighborhoods $V_{m^*}$ and $V_{n^*}$ of these projections that do not intersect. The preimages $U_{m^*}$ and $U_{n^*}$ of $V_{m^*}$ and $V_{n^*}$ by the continuous map \begin{gather*}\op{prj_2}\circ h_\zL\colon \ U(\zL)\to\zL^{p|\ul q}\end{gather*} are then open neighborhoods of $m^*$ and $n^*$ that do not intersect.

Finally, the space $M(\Lambda)$ is indeed a Hausdorff topological space.
\end{proof}

\subsection{The Schwarz--Voronov embedding}
In order to get a fully faithful functor, hence, to embed the category $\Z_2^n\tt Man$ as full sub\-ca\-tegory into a functor category, we need to replace the target category $[\Z_2^n{\tt Pts}^{\op{op}},{\tt A(N)FMan}]$ by a~subcategory that we denote by $\SN{\Z_2^n{\tt Pts}^{\op{op}},{\tt A(N)FMan}}$ and that we define as follows:

\begin{Definition}The category $\SN{\Z_2^n\catname{Pts}^{\textnormal{op}}, {\tt A(N)FMan}}$ is the subcategory of the category $[\Z_2^n\catname{Pts}^{\textnormal{op}},\!\allowbreak {\tt A(N)FMan}]$,
\begin{enumerate}\itemsep=0pt
\item[(i)] whose objects are the functors $\cF$, such that, for any $\zL\in\Z_2^n{\tt Pts}^{\op{op}}$, the value $\cF(\zL)$ is a~(nuclear) Fr\'echet $\zL_0$-manifold, and
\item[(ii)] whose morphisms are natural transformations $\zh\colon \cF\to\mathcal{G}$, such that, for any $\zL$, the component $\zh_\Lambda\colon \mathcal{F}(\Lambda) \to \mathcal{G}(\Lambda)$ is $\Lambda_{0}$-smooth.\end{enumerate}
\end{Definition}

\begin{Proposition}\label{RYF}The restricted Yoneda functor $\mathcal{Y}_{\Z_2^n{\tt Pts}}$ can be considered as a faithful functor \begin{gather*}\mathcal{S}\colon \ \Z_2^n{\tt Man}\to \SN{\Z_2^n\catname{Pts}^{\textnormal{op}}, {\tt A(N)FMan}}.\end{gather*}
\end{Proposition}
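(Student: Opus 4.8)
The plan is to assemble this statement directly from the results established in the preceding subsections; no new analytic work is required, as the proof amounts to checking that the faithful functor of Corollary~\ref{corl:FunctorLMan} corestricts to the subcategory $\SN{\Z_2^n{\tt Pts}^{\op{op}}, {\tt A(N)FMan}}$.

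First I would verify that $\mathcal{S}$ is well defined on objects. By Corollary~\ref{corl:FunctorLMan}, for every $\Z_2^n$-manifold $M$ the functor $M(-)$ already takes values in ${\tt A(N)FMan}$; that is, each $M(\zL)$ is a Fr\'echet $\zL_0$-manifold and each $\Z_2^n$-Grassmann algebra morphism $\psi^*\colon\zL\to\zL'$ induces a $\psi^*$-smooth map $M(\psi^*)$. To confirm that $M(-)$ satisfies condition~(i) in the definition of $\SN{\Z_2^n{\tt Pts}^{\op{op}}, {\tt A(N)FMan}}$, I invoke Theorem~\ref{thm:ManStruct}(i), which states precisely that $M(\zL)$ carries the structure of a nuclear Fr\'echet $\zL_0$-manifold for every $\zL$; its naturality in~$\zL$ is Theorem~\ref{thm:ManStruct}(ii). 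Hence $M(-)$ is an object of the subcategory.

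Next I would check that $\mathcal{S}$ sends morphisms to morphisms of the subcategory. Given a $\Z_2^n$-morphism $\phi\colon M\to N$, the restricted Yoneda functor produces the natural transformation $\phi\colon M(-)\to N(-)$ whose $\zL$-component is $\phi_\zL$. By Proposition~\ref{prop:NatMorp}, each $\phi_\zL\colon M(\zL)\to N(\zL)$ is $\zL_0$-smooth, which is exactly condition~(ii) in the definition of $\SN{\Z_2^n{\tt Pts}^{\op{op}}, {\tt A(N)FMan}}$. Thus $\phi$ is a morphism of the subcategory, and $\mathcal{S}$ is a genuine functor into $\SN{\Z_2^n{\tt Pts}^{\op{op}}, {\tt A(N)FMan}}$.

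Finally, faithfulness is inherited automatically. On each Hom-set the functor $\mathcal{S}$ is given by the very same assignment $\phi\mapsto(\phi_\zL)_{\zL}$ as the faithful restricted Yoneda functor of Corollary~\ref{cor:RestYon}; restricting the codomain to a subcategory leaves the underlying map on morphisms unchanged, so it remains injective. There is in fact no genuine obstacle at this stage: all of the difficulty has already been absorbed into Theorem~\ref{thm:ManStruct}, which supplies the nuclear Fr\'echet $\zL_0$-manifold structure on $M(\zL)$ together with its naturality in~$\zL$, and into Proposition~\ref{prop:NatMorp}, which guarantees $\zL_0$-smoothness of the components of any natural transformation implemented by a $\Z_2^n$-morphism. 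These two results certify, respectively, the object and morphism clauses of the subcategory definition, which is the only content that needs to be recorded here.
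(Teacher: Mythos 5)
Your object clause and your faithfulness clause are fine: Theorem~\ref{thm:ManStruct} does endow each $M(\zL)$ with a nuclear Fr\'echet $\zL_0$-manifold structure and gives $\psi^*$-smoothness of $M(\psi^*)$, and injectivity on Hom-sets is indeed unaffected by corestricting the codomain, so Corollary~\ref{cor:RestYon} settles faithfulness. The problem is your morphism clause. You dispose of it by citing Proposition~\ref{prop:NatMorp}, but in the paper's logical order that proposition is only \emph{announced} at the end of the previous subsection (``We will show in the next subsection that this condition is satisfied\dots''), and its proof is exactly the nontrivial content of the proof of Proposition~\ref{RYF} itself. The same caveat applies to the morphism-level half of Corollary~\ref{corl:FunctorLMan}, which the paper explicitly flags as resting on the yet-to-be-proved Proposition~\ref{prop:NatMorp}. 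So, as written, your argument is circular: the only thing that actually has to be proved here~-- that the natural transformation induced by a $\Z_2^n$-morphism $\phi\colon M\to N$ has $\zL_0$-smooth components $\phi_\zL$~-- is assumed rather than established.

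What is missing is the localization argument the paper supplies. Given $(x,m^*_\star)\in M(\zL)$, choose a $\Z_2^n$-chart $g\colon V\to\cV^{r|\ul s}$ of $N$ around $|\phi|(x)$ and a $\Z_2^n$-chart $h\colon U\to\cU^{p|\ul q}$ of $M$ with $x\in|U|$ and $|U|\subset|\phi|^{-1}(|V|)$; then $g\circ\phi|_{U}\circ h^{-1}$ is a $\Z_2^n$-morphism between $\Z_2^n$-domains, so Part~II of Theorem~\ref{prop:AMorphisms} shows that its induced natural transformation has $\zL_0$-smooth $\zL$-components. Since the charts $(U(\zL),h_\zL)$ and $(V(\zL),g_\zL)$ satisfy $\phi_\zL(U(\zL))\subset V(\zL)$ and the local form of $\phi_\zL$ in these charts is $g_\zL\circ\phi_\zL\circ(h_\zL)^{-1}=\bigl(g\circ\phi|_{U}\circ h^{-1}\bigr)_\zL$, and $\zL_0$-smoothness is a local condition checked in exactly such charts, one concludes that $\phi_\zL$ is $\zL_0$-smooth. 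This chart construction together with Theorem~\ref{prop:AMorphisms} is the step your proposal must contain (it is also, in the paper, the proof of Proposition~\ref{prop:NatMorp}); once it is inserted in place of the appeal to that proposition, your proof coincides with the paper's.
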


\begin{proof}The image $\mathcal{Y}_{\Z_2^n{\tt Pts}}(M)$ of an object $M\in\Z_2^n{\tt Man}$ is a functor $M(-)\!\in\![\Z_2^n{\tt Pts}^{\op{op}},{\tt A(N)FMan}]$, such that, for any $\zL,$ the value $M(\zL)$ is a (nuclear) Fr\'echet $\zL_0$-manifold. Further, the image $\mathcal{Y}_{\Z_2^n{\tt Pts}}(\phi)$ of a $\Z_2^n$-morphism $\phi\colon M\to N$ is a natural transformation $\phi\colon M(-)\to N(-)$, such that, for any $\zL$, the component $\phi_\zL\colon M(\zL)\to N(\zL)$ is $\zL_0$-smooth.

The proof of $\zL_0$-smoothness uses the following construction, which we will also need later on. Let $M,N\in\Z_2^n{\tt Man}$ be manifolds of dimension $p|\ul q$ and $r|\ul s $, respectively, let $|\phi|\in\Ci(|M|,|N|)$, and let $(|V_\zb|)_\zb$ be an open cover of $|N|$ by $\Z_2^n$-charts \begin{gather*}g_\zb\colon \ V_\zb\to\cV_\zb^{r|\ul s}, \qquad \text{where}\quad V_\zb=(|V_\zb|,\cO_N|_{|V_\zb|}).\end{gather*} The open subsets $|U_\zb|:=|\phi|^{-1}(|V_\zb|)\subset|M|$ cover $|M|$, and each $|U_\zb|$ can be covered by $\Z_2^n$-charts \begin{gather*}h_{\zb\za}\colon \ U_{\zb\za}\to \cU_{\zb\za}^{p|\ul q}, \qquad \text{where} \quad U_{\zb\za}=(|U_{\zb\za}|,\cO_M|_{|U_{\zb\za}|}).\end{gather*}

The $\Z_2^n$-morphism $\phi\colon M\to N$ restricts to a $\Z_2^n$-morphism $\phi|_{U_{\zb\za}}\colon U_{\zb\za}\to V_\zb$. In particular, the composite \begin{gather*}g_\zb\circ\phi|_{U_{\zb\za}}\circ (h_{\zb\za})^{-1}\colon \ \cU_{\zb\za}^{p|\ul q}\to \cV_{\zb}^{r|\ul s}\end{gather*} is a $\Z_2^n$-morphism.

We now show that $\phi_\zL$ is $\zL_0$-smooth. Therefore, let $(x,m^*_\star)\in M(\zL)$. There is a $\Z_2^n$-chart $(V_\zb,g_\zb)$ of $N$ such that $|\phi|(x)\in|V_\zb|$, and there is a $\Z_2^n$-chart $(U_{\zb\za},h_{\zb\za})$ of $M$ such that $x\in|U_{\zb\za}|$. These charts (we omit in the following the subscripts $\zb$ and $\za$) induce charts $(U(\zL),h_\zL)$ of $M(\zL)$ around $(x,m^*_\star)$, and $(V(\zL),g_{\zL})$ of $N(\zL)$ such that $\phi_\zL(U(\zL))\subset V(\zL)$. It suffices to show (see Appendix \ref{appx:AManifolds}) that the local form \begin{gather*}g_\zL\circ\phi_\zL\circ (h_\zL)^{-1}=\big(g\circ\phi|_{U}\circ h^{-1}\big)_\zL\end{gather*} is $\zL_0$-smooth. This is the case in view of Theorem~\ref{prop:AMorphisms}. Finally, the faithfulness is established in Corollary~\ref{cor:RestYon}. This completes the proof.
\end{proof}

We will prove that the functor $\mathcal S$ is fully faithful, hence, injective (up to isomorphism) on objects. Therefore, it embeds the category $\Z_2^n\tt Man$ of $\Z_2^n$-manifolds as full subcategory into the larger functor category $\SN{\Z_2^n\catname{Pts}^{\textnormal{op}}, {\tt A(N)FMan}}$.

\begin{Definition}\label{def:SVMembedding}
We refer to the faithful functor \begin{gather*}\mathcal{S}\colon \ \Z_2^n\catname{Man} \longrightarrow\SN{\Z_2^n\catname{Pts}^{\textnormal{op}}, {\tt A(N)FMan}}\end{gather*}
as the \emph{Schwarz--Voronov embedding}.
\end{Definition}

\begin{Theorem}\label{thm:SVMembedding}The Schwarz--Voronov embedding $\mathcal S$ is a fully faithful functor. That is, given two $\Z_2^n$-manifolds $M$ and $N$, the injective map
\begin{gather*}\mathcal{S}_{M,N}\colon \ \emph{\Hom}_{\Z_2^n\catname{Man}}\big( M, N \big) \to \emph{\Hom}_{\SN{\Z_2^n\catname{Pts}^{\textnormal{op}}, {\tt A(N)FMan}}} \big( M(-), N(-)\big)\end{gather*}
is bijective.
\end{Theorem}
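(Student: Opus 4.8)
The plan is to prove fullness, since faithfulness (hence injectivity of $\mathcal{S}_{M,N}$) is already established in Corollary~\ref{cor:RestYon}. Thus, given a morphism $\eta\in\Hom_{\SN{\Z_2^n{\tt Pts}^{\op{op}},{\tt A(N)FMan}}}(M(-),N(-))$, that is, a natural transformation $\eta\colon M(-)\to N(-)$ with $\zL_0$-smooth components $\eta_\zL\colon M(\zL)\to N(\zL)$ for every $\zL$, I must produce a $\Z_2^n$-morphism $\phi\colon M\to N$ whose induced natural transformation $\phi\colon M(-)\to N(-)$ equals $\eta$. The strategy is the usual \emph{localize--build--glue} pattern that matches the methodology announced in the introduction: first solve the problem on $\Z_2^n$-domains, then patch the locally constructed morphisms together over an atlas.

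First I would reduce to the local (domain) case. Choose a $\Z_2^n$-atlas $(U_\za,h_\za)$ of $M$ with $h_\za\colon U_\za\to\cU_\za^{p|\ul q}$, and an atlas $(V_\zb,g_\zb)$ of $N$ with $g_\zb\colon V_\zb\to\cV_\zb^{r|\ul s}$. Using Proposition~\ref{prop:CoverM}, the sets $U_\za(\zL)$ cover $M(\zL)$, and the charts $h_{\za,\zL}$, $g_{\zb,\zL}$ realize $U_\za(\zL)$, $V_\zb(\zL)$ as open subsets of the Fr\'echet spaces $\R^{p|\ul q}(\zL)$, $\R^{r|\ul s}(\zL)$. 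Refining the cover of $M$ so that each chart domain maps into a single chart of $N$ (as in the construction inside the proof of Proposition~\ref{RYF}), the transported components $g_{\zb,\zL}\circ\eta_\zL\circ (h_{\za,\zL})^{-1}$ form a natural transformation $\beta\colon\cU_{\za}^{p|\ul q}(-)\to\cV_\zb^{r|\ul s}(-)$ between $\Z_2^n$-domain functors, and it is $\zL_0$-smooth for every $\zL$ because $\eta_\zL$ is and the chart maps are $\zL_0$-diffeomorphisms. By Theorem~\ref{prop:AMorphisms}, such a $\zL_0$-smooth natural transformation between domain functors is implemented by a genuine $\Z_2^n$-morphism $\phi_{\za}\colon\cU_\za^{p|\ul q}\to\cV_\zb^{r|\ul s}$, whence a local $\Z_2^n$-morphism $\phi_\za=g_\zb^{-1}\circ\phi_{\za}\circ h_\za\colon U_\za\to N$.

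Next I would glue. On overlaps $U_{\za}\cap U_{\za'}$, the two local morphisms $\phi_\za$ and $\phi_{\za'}$ induce, via the Yoneda functor, natural transformations whose components agree with the restrictions of $\eta_\zL$; by the faithfulness already in hand (Corollary~\ref{cor:RestYon}), applied to the open $\Z_2^n$-submanifolds $U_\za\cap U_{\za'}$, we get $\phi_\za|_{U_\za\cap U_{\za'}}=\phi_{\za'}|_{U_\za\cap U_{\za'}}$. Since $\Z_2^n$-morphisms are morphisms of sheaves of algebras and hence glue over an open cover, the $\phi_\za$ assemble into a single $\Z_2^n$-morphism $\phi\colon M\to N$. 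Finally one checks $\mathcal{S}(\phi)=\eta$: the induced natural transformation $\phi\colon M(-)\to N(-)$ has components agreeing with $\eta_\zL$ on each $U_\za(\zL)$ by construction, and these cover $M(\zL)$, so the two transformations coincide; invoking faithfulness once more identifies the abstract source morphism, giving surjectivity of $\mathcal{S}_{M,N}$ onto the smooth natural transformations.

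\textbf{Main obstacle.} The substantive content is entirely concentrated in the local step, where Theorem~\ref{prop:AMorphisms} converts $\zL_0$-smoothness into an honest $\Z_2^n$-morphism; everything else is bookkeeping. The one point requiring genuine care is the \emph{independence of the glued morphism from the chosen atlases and refinements}, i.e.\ verifying that the locally extracted $\phi_\za$ transform correctly under change of charts so that the cocycle/compatibility conditions hold---this is exactly where the naturality of $\eta$ under the transition isomorphisms $h_\zb h_\za^{-1}$ (whose domain-level counterparts are controlled again by Theorem~\ref{prop:AMorphisms}) must be used, and it is the step I expect to demand the most attention.
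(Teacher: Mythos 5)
Your proposal is correct and follows essentially the same route as the paper's proof: identify the reduced map $\eta_\R$, refine atlases so each chart of $M$ lands in a chart of $N$, apply Theorem~\ref{prop:AMorphisms} to extract local $\Z_2^n$-morphisms, glue them using faithfulness (Corollary~\ref{cor:RestYon}) together with the sheaf property of $\cO_M$, and check the glued morphism induces $\eta$. The only point you gloss over, which the paper handles explicitly via the naturality square for the projection $p^*\colon\zL\to\R$, is that $\eta_\zL$ maps $U_\za(\zL)$ into $V_\zb(\zL)$ for \emph{every} $\zL$ (not merely for $\zL=\R$, which is all the refinement of the cover controls); this is the routine fact that makes your transported components $g_{\zb,\zL}\circ\eta_\zL\circ(h_{\za,\zL})^{-1}$ well defined.
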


\begin{proof}Notice first that it follows from the results of~\cite{Bruce:2018b} and Lemma~\ref{RestTarget} that there is a $1:1$ correspondence \begin{gather*}|M|\simeq \Hom_{\Z_2^n\tt Alg}(\cO_M(|M|),\R)\simeq \bigcup_{x\in|M|}\Hom_{\Z_2^n\tt Alg}(\cO_{M,x},\R)= M(\R),\end{gather*} which is given by \begin{gather*}x\mapsto \ze_x\mapsto (x,\ze_x),\end{gather*} where $\ze_x$ is the evaluation map $\ze_x(f)=(\ze f)(x)$ ($f\in\cO_M(|M|)$) and where $\ze$ is the base map $\ze\colon \cO_M\to\Ci_{|M|}$. Hence, any $(x,m^*_\star)\in M(\R)$ is equal to $(x,\ze_x)$ and can be identified with $x$. In view of \eqref{InducedComponentSpec}, this $1:1$ correspondence identifies the nuclear Fr\'echet $\R$-manifold structure on $M(\R)$ with the smooth manifold structure on $|M|$.

Let now
\begin{gather*}\zh\colon \ M(-) \to N(-)\end{gather*}
be a natural transformation in the target set of $\mathcal{S}_{M,N}$, i.e., a natural transformation such that, for any $\zL$, the $\zL$-component $\zh_\zL$ is $\zL_0$-smooth. In particular, the map
\begin{gather*}|\phi| := \zh_\R \colon \ |M| \to |N|,\end{gather*}
is a smooth map between the reduced manifolds. As in the proof of Proposition \ref{RYF}, let $(V_\zb,g_\zb)_\zb$ be an open cover of $|N|$ by $\Z_2^n$-charts, and, for any $\zb$, let $(U_{\zb\za},h_{\zb\za})_\za$ be an open cover of $|U_\zb|:=|\phi|^{-1}(|V_\zb|)$ by $\Z_2^n$-charts. When denoting the canonical $\Z_2^n$-Grassmann algebra morphism $\zL\to\R$ by $p^*$, we get the commutative diagram
\begin{gather*}
\leavevmode
\begin{xy}
(0,20)*+{\bigcup_{\zb\za}U_{\zb\za}(\zL)}="a"; (40,20)*+{\bigcup_{\zb}V_\zb(\zL)}="b";
(0,0)*+{\bigcup_{\zb\za}|U_{\zb\za}|}="c"; (40,0)*+{\bigcup_{\zb}|V_\zb|,}="d";%
{\ar "a";"b"}?*!/_3mm/{\zh_\Lambda};%
{\ar "a";"c"}?*!/^6mm/{M(p^*)};%
{\ar "b";"d"}?*!/_10mm/{N(p^*) \quad};%
{\ar "c";"d"}?*!/^3mm/{|\phi|};%
\end{xy}
\end{gather*}
which shows that, for any $\zb$, $\za$, we get the $\zL_0$-smooth map \begin{gather*}(\zh_\zL)|_{U_{\zb\za}(\zL)}\colon \ U_{\zb\za}(\zL)\to V_\zb(\zL).\end{gather*} Indeed, if, for $(x,m^*_\star)\in U_{\zb\za}(\zL)$, we set $\zh_\zL(x,m^*_\star)=(y,n^*_\star)$, the commutativity of the diagram implies that \begin{gather*}y\simeq (y,p^*\circ n^*_\star)=(N(p^*)\circ \zh_\zL)(x,m^*_\star)=(\zh_\R\circ M(p^*))(x,m^*_\star)=\zh_\R(x,p^*\circ m^*_\star)\\
\hphantom{y}{}\simeq|\phi|(x)\in|V_\zb|.\end{gather*} Therefore, the restriction \begin{gather*}\zh|_{U_{\zb\za}(-)}\colon \ U_{\zb\za}(-)\to V_\zb(-)\end{gather*} is a natural transformation with $\zL_0$-smooth components.

Note that \begin{gather*}h_{\zb\za}\colon \ U_{\zb\za}\to\cU_{\zb\za}^{p|\ul q}\qquad\text{and}\qquad g_\zb\colon \ V_\zb\to\cV^{r|\ul s}_\zb\end{gather*} are $\Z_2^n$-isomorphisms and induce natural isomorphisms, also denoted by $h_{\zb\za}$ and $g_\zb$, whose components are chart diffeomorphisms \begin{gather*}h_{{\zb\za},\zL}\colon \ U_{\zb\za}(\zL)\to\cU_{\zb\za}^{p|\ul q}(\zL)\qquad\text{and}\qquad g_{\zb,\zL}\colon \ V_\zb(\zL)\to\cV^{r|\ul s}_\zb(\zL)\end{gather*} of nuclear Fr\'echet $\zL_0$-manifolds. The local form
\begin{gather*}g_{\zb,\Lambda} \circ (\zh_\Lambda)|_{U_{\zb\za}(\Lambda)}\circ(h_{\zb\za,\zL})^{-1} \colon \ \mathcal{U}^{p|\ul{q}}_{\zb\za}(\Lambda) \to \mathcal{V}_\zb^{r|\ul s}(\Lambda) \end{gather*}
of $\zh_\zL$ is thus $\zL_0$-smooth. In other words, any $\zL$-component of the natural transformation \begin{gather}\label{LocRestNat} \zf_{\zb\za}:=g_\zb\circ\zh|_{U_{\zb\za}(-)}\circ h_{\zb\za}^{-1}\colon \ \cU^{p|\ul q}_{\zb\za}(-)\to\cV^{r|\ul s}_\zb(-) \end{gather} between functors associated to $\Z_2^n$-domains, is $\zL_0$-smooth. It therefore follows from Theorem~\ref{prop:AMorphisms} that $\zf_{\zb\za}$ is implemented by a $\Z_2^n$-morphism \begin{gather*}\zf_{\zb\za}\colon \ \cU_{\zb\za}^{p|\ul q}\to\cV_\zb^{r|\ul s},\end{gather*} so that the composite \begin{gather}\label{LocZMor}\phi_{\zb\za}:=g_\zb^{-1}\circ\zf_{\zb\za}\circ h_{\zb\za}\colon \ U_{\zb\za}\to N\end{gather} is a $\Z_2^n$-morphism that is defined on an open $\Z_2^n$-submanifold of $M$. The question is whether we can patch together these locally defined $\Z_2^n$-morphisms, which are inherited from $\zh$, and get a~globally defined $\Z_2^n$-morphism $\phi\colon M\to N$ that induces $\zh$.

Let $\phi_{\zb\za}|_{U_{\zb\za,\zn\zm}}$ and $\phi_{\zn\zm}|_{U_{\zb\za,\zn\zm}}$ be the $\Z_2^n$-morphisms obtained by restriction to the open $\Z_2^n$-submanifold $U_{\zb\za,\zn\zm}$ with base manifold $|U_{\zb\za,\zn\zm}|:=|U_{\zb\za}|\cap |U_{\zn\zm}|$. They coincide as $\Z_2^n$-morphisms, if they do as associated natural transformations, i.e., if all $\zL$-components of those transformations coincide. This is the case since both $\zL$-components are equal to $\zh_\zL|_{U_{\zb\za,\zn\zm}(\zL)}$. It follows that the $\Z_2^n$-algebra morphisms \begin{gather*}\phi_{\zb\za}|_{U_{\zb\za,\zn\zm}}^*,\phi_{\zn\zm}|_{U_{\zb\za,\zn\zm}}^*\colon \ \cO_N(|N|)\to \cO_M(|U_{\zb\za,\zn\zm}|) \end{gather*} coincide. This implies that we can glue the $\Z_2^n$-algebra morphisms $\zvf_{\zb\za}^*\colon \cO_N(|N|)\to \cO_M(|U_{\zb\za}|)$ and get a $\Z_2^n$-algebra morphism \begin{gather*}\zvf^*\colon \ \cO_N(|N|)\to \cO_M(|M|).\end{gather*} Indeed, for any $f\in\cO_N(|N|)$, the $\zvf_{\zb\za}^*(f)\in\cO_M(|U_{\zb\za}|)$ are a family of $\Z_2^n$-functions on an open cover of $|M|$, which do coincide on the intersections. To see this, note that \begin{gather*}(\zvf_{\zb\za}^*(f))|_{|U_{\zb\za,\zn\zm}|}=\zvf_{\zb\za}|_{U_{\zb\za,\zn\zm}}^*(f)=\zvf_{\zn\zm}|_{U_{\zb\za,\zn\zm}}^*(f) =(\zvf_{\zn\zm}^*(f))|_{|U_{\zb\za,\zn\zm}|}.\end{gather*} Hence, there is a unique global section $F\in\cO_M(|M|)$ of the sheaf $\cO_M$, such that $F|_{|U_{\zb\za}|}=\zvf_{\zb\za}^*(f)$. The $\tt Set$-morphism, which is defined by \begin{gather*}\zvf^*_{|N|}\colon \ \cO_N(|N|)\ni f\mapsto F\in\cO_M(|M|),\end{gather*} is actually a morphism of $\Z_2^n$-algebras. Indeed, note that \begin{gather*}\zr^{|M|}_{|U_{\zb\za}|}\circ\phi^*_{|N|}=\phi^*_{\zb\za}\end{gather*} ($\zr$ is the restriction) and observe that, for any element $|U_{\zb\za}|$ of the open cover of $|M|$ considered, we have \begin{gather*}(\zvf^*_{|N|}(f\cdot g))|_{|U_{\zb\za}|}=\zvf^*_{\zb\za}(f)\cdot \zvf^*_{\zb\za}(g)=(\zvf^*_{|N|}(f)\cdot\zvf^*_{|N|}(g))|_{|U_{\zb\za}|}.\end{gather*} The $\Z_2^n$-algebra morphism $\zvf^*_{|N|}$ fully characterizes a $\Z_2^n$-morphism $\zvf=(||\phi||,\phi^*)\colon M\to N $. We will show that $\zvf$ induces the natural transformation $\zh$, which then completes the proof.

Since $\zvf$ is glued from the $\Z_2^n$-morphisms $\zvf_{\zb\za}$, we get, in view of equations \eqref{LocRestNat} and \eqref{LocZMor}, in particular that \begin{gather}\label{Induction1}||\zvf|| |_{|U_{\zb\za}|}=|\zvf_{\zb\za}|=\zh_\R|_{U_{\zb\za}(\R)}=|\zvf| |_{|U_{\zb\za}|},\end{gather} so that $||\zvf||=|\zvf|$. Further, for any $|V_\zb|,$ \begin{gather}\label{Induction2}\zr^{|U_\zb|}_{|U_{\zb\za}|}\circ\phi^*_{|V_{\zb}|}=\phi^*_{\zb\za,|V_\zb|}\colon \ \cO_N(|V_\zb|)\to \cO_M(|U_{\zb\za}|).\end{gather} Let now $\zL$ be any $\Z_2^n$-Grassmann algebra and let $(x,m^*_\star)\in U_{\zb\za}(\zL)$. As $x\in|U_{\zb\za}|$ and $|\phi|(x)\in|V_\zb|$, it follows from equations \eqref{Induction1}, \eqref{Induction2}, \eqref{LocRestNat}, and~\eqref{LocZMor}, that the image of $(x,m^*_\star)$ by the $\zL$-component of the natural transformation induced by $\zvf$ is \begin{gather*}\phi_\zL(x,m^*_\star)=(|\phi|(x),m^*_\star\circ\phi^*_x)=(|\phi_{\zb\za}|(x),m^*_\star\circ\phi^*_{\zb\za,x})
=(\phi_{\zb\za})_\zL(x,m^*_\star)=\zh_\zL(x,m^*_\star).\tag*{\qed} \end{gather*}\renewcommand{\qed}{}
\end{proof}

The following theorem is of importance in the study of $\Z_2^n$-Lie groups.

\begin{Theorem}The Schwarz--Voronov embedding $ \mathcal{S}$ sends $\Z_2^n$-Lie groups $G$ to functors $\mathcal{S}(G) = G(-)$ from the category $\Z_2^n\tt Pts^{\op{op}}\!$ of $\Z_2^n$-Grassmann algebras to the category $\tt ANFLg$ of nuclear Fr\'echet Lie groups over nuclear Fr\'echet algebras.
\end{Theorem}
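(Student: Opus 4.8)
The plan is to show that the group--object structure carried by $G$ in $\Z_2^n\tt Man$ is transported, for each $\zL$, to a nuclear Fr\'echet Lie group structure on $G(\zL)$ over the nuclear Fr\'echet algebra $\zL_0$, and that this assignment is functorial in $\zL$. By Definition~\ref{def:Z2nLieGroup}, $G$ being a group object means it is equipped with $\Z_2^n$-morphisms $\mu\colon G\times G\to G$ (multiplication), $\iota\colon G\to G$ (inversion) and $e\colon \R^{0|\ul 0}\to G$ (unit, issued from the terminal object $\R^{0|\ul 0}$), subject to the associativity, unit and inverse axioms expressed as commutative diagrams in $\Z_2^n\tt Man$. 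First I would apply the functor of points $M\mapsto M(-)$ to these data and check that the axioms survive.

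The key structural input is that $M\mapsto M(-)$ preserves finite products: specialising $S\simeq\R^{0|\ul m}\simeq\zL$ in \eqref{eqn:CartProdSpoints} gives a natural bijection $(G\times G)(\zL)\simeq G(\zL)\times G(\zL)$. I would then verify that this bijection is an isomorphism of nuclear Fr\'echet $\zL_0$-manifolds, and not merely of sets: a product atlas on $G\times G$ (its existence resting on the product construction of \cite{Bruce:2018b}) induces, through the chart mechanism of Theorem~\ref{thm:ManStruct}, exactly the product Fr\'echet $\zL_0$-manifold structure on $G(\zL)\times G(\zL)$. Granting this identification, applying $\mathcal S$ to the defining diagrams --- and using that $\mathcal S$ sends the terminal $\Z_2^n$-manifold $\R^{0|\ul 0}$ to the terminal functor, since $\R^{0|\ul 0}(\zL)$ is a one-point set --- produces, for each $\zL$, a map $\mu_\zL\colon G(\zL)\times G(\zL)\to G(\zL)$, a map $\iota_\zL\colon G(\zL)\to G(\zL)$ and a distinguished point $e_\zL$ satisfying the group laws. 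Now $G(\zL)$ is a nuclear Fr\'echet $\zL_0$-manifold by Theorem~\ref{thm:ManStruct}, while $\mu_\zL$ and $\iota_\zL$ are $\zL_0$-smooth by Proposition~\ref{prop:NatMorp}, being induced by the $\Z_2^n$-morphisms $\mu$ and $\iota$. Hence $G(\zL)$ is a nuclear Fr\'echet $\zL_0$-manifold with $\zL_0$-smooth multiplication and inversion, i.e.\ a nuclear Fr\'echet Lie group over the nuclear Fr\'echet algebra $\zL_0$.

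It remains to record functoriality in $\zL$. Given a morphism $\psi^*\colon \zL\to\zL'$ of $\Z_2^n$-Grassmann algebras, the induced map $G(\psi^*)\colon G(\zL)\to G(\zL')$ is $\psi^*$-smooth by Theorem~\ref{thm:ManStruct}(ii), and it is a group homomorphism precisely because $\mu$ and $\iota$ are \emph{natural} transformations, so $G(\psi^*)$ intertwines the two multiplications and the two inversions; thus $G(-)$ is a genuine functor $\Z_2^n\tt Pts^{\op{op}}\to\tt ANFLg$. The only non-formal step is the manifold-level product identification $(G\times G)(\zL)\simeq G(\zL)\times G(\zL)$ of the second paragraph: once the set-theoretic bijection coming from \eqref{eqn:CartProdSpoints} is upgraded to an isomorphism of nuclear Fr\'echet $\zL_0$-manifolds, everything else is the routine transport of the group-object diagrams through the product-preserving, smoothness-preserving functor $\mathcal S$ (Proposition~\ref{RYF}). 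I expect this product identification --- in particular the compatibility of the trivializations of Theorem~\ref{thm:ManStruct} with the product charts --- to be the main obstacle.
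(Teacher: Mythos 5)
The first thing to say is that the paper contains \emph{no proof} of this theorem: immediately after the statement the authors write that ``the proof is not entirely straightforward and will be given in a paper on $\Z_2^n$-Lie groups, which is currently being written down.'' So there is no argument of theirs to compare yours against, and the only question is whether your proposal stands on its own. It does not, because its load-bearing step is assumed rather than proved. Everything in your argument funnels through the claim that the set-theoretic bijection $(G\times G)(\zL)\simeq G(\zL)\times G(\zL)$ coming from \eqref{eqn:CartProdSpoints} is an isomorphism of nuclear Fr\'echet $\zL_0$-manifolds; you introduce it with ``Granting this identification'' and end by conceding it is ``the main obstacle.'' Without it, Proposition~\ref{prop:NatMorp} only gives you $\zL_0$-smoothness of $\mu_\zL\colon (G\times G)(\zL)\to G(\zL)$, which is a map out of the wrong object: what a Fr\'echet Lie group structure requires is $\zL_0$-smoothness of the composite $G(\zL)\times G(\zL)\to (G\times G)(\zL)\to G(\zL)$, and that is exactly the unproven claim. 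A reduction of the theorem to an explicitly unproved lemma is a sketch, not a proof.

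It is also worth recording why this step is genuinely non-trivial, i.e., why the authors defer it. One direction is cheap: the bijection has components $\pi_{i,\zL}$ induced by the projections $\pi_i\colon G\times G\to G$, and these are $\zL_0$-smooth by Proposition~\ref{prop:NatMorp}. But the \emph{inverse} is a map $G(\zL)\times G(\zL)\to(G\times G)(\zL)$ whose source is not of the form $N(\zL)$ for any $\Z_2^n$-manifold $N$ equipped a priori, so none of the paper's smoothness criteria (Theorem~\ref{prop:AMorphisms}, Proposition~\ref{prop:NatMorp}, Proposition~\ref{prop:varphiSmooth}) applies to it directly. One must argue in charts: the atlas of $G\times G$ constructed in \cite{Bruce:2018b} has charts of the form $U_\za\times V_\zb$ identified with $\Z_2^n$-domains over $\cU^{p}_\za\times\cV^{p}_\zb$, an identification which rests on the completed-tensor-product description $C^\infty\big(\cU^{p}_\za\big)[[\xi]]\,\widehat{\otimes}\,C^\infty\big(\cV^{p}_\zb\big)[[\eta]]\simeq C^\infty\big(\cU^{p}_\za\times\cV^{p}_\zb\big)[[\xi,\eta]]$ (here nuclearity is essential); one must then check that $\R^{2p|2\ul{q}}(\zL)\simeq\R^{p|\ul{q}}(\zL)\times\R^{p|\ul{q}}(\zL)$ as Fr\'echet $\zL_0$-modules, that the charts which Theorem~\ref{thm:ManStruct} attaches to $(G\times G)(\zL)$ correspond under the bijection to the product charts of $G(\zL)\times G(\zL)$, and, as a preliminary, that products of Fr\'echet $\zL_0$-manifolds are again Fr\'echet $\zL_0$-manifolds (a point the paper never addresses). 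Your skeleton --- transport of the group-object diagrams, terminality of $\R^{0|\ul{0}}(\zL)$, functoriality in $\zL$ via Theorem~\ref{thm:ManStruct}(ii) and naturality of $\mu$ and $\iota$ --- is the right one, but the theorem's actual content lies in the lemma you left open.
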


The proof is not entirely straightforward and will be given in a paper on $\Z_2^n$-Lie groups, which is currently being written down.

\subsection{Representability and equivalence of categories}
As the Schwarz--Voronov embedding is fully faithful, the category $\Z_2^n\tt Man$ can be viewed as a full subcategory of the category $\SN{\Z_2^n\catname{Pts}^{\textnormal{op}}, {\tt A(N)FMan}}$. Functor categories are known to be well-suited for geometric constructions. Hence, when trying to build a $\Z_2^n$-manifold $M$ (possibly from other $\Z_2^n$-manifolds $M_\iota$), it is often easier to build a functor $\cF$ in $\SN{\Z_2^n\catname{Pts}^{\textnormal{op}}, {\tt A(N)FMan}}$ (from the given $\Z_2^n$-manifolds interpreted as functors $M_\iota(-)$). However, one has then to show that $\cF$ can be represented by a $\Z_2^n$-manifold $M$, i.e., that there is a $\Z_2^n$-manifold $M$, such that $M(-)\simeq\cF$.

\begin{Definition}A functor \begin{gather*}\mathcal{F}\in\SN{\Z_2^n\catname{Pts}^{\textnormal{op}}, {\tt A(N)FMan}}\end{gather*} is said to be \emph{representable}, if there exists a $\Z_2^n$-manifold $M\in\Z_2^n\tt Man$ (which is then unique up to unique isomorphism), such that \begin{gather*}M(-)\simeq \mathcal{F}\quad\text{in}\quad\SN{\Z_2^n\catname{Pts}^{\textnormal{op}}, {\tt A(N)FMan}}.\end{gather*}
\end{Definition}

We define the restriction $\cF|_{|U|}$ of a functor $\mathcal{F}\in\SN{\Z_2^n\catname{Pts}^{\textnormal{op}}, {\tt A(N)FMan}}$ to an open subset $|U|\subset\cF(\R)\in\tt(N)FMan$.

For any $\Lambda \in \Z_2^n\catname{GrAlg}$, let
\begin{gather*}p^*_\Lambda \colon \ \Lambda \longrightarrow \R\end{gather*}
be the canonical projection, let
\begin{gather*}
\mathcal{F}(p^*_\Lambda ) \colon \ \mathcal{F}(\Lambda) \longrightarrow \mathcal{F}(\R)\end{gather*}
be the corresponding smooth map. The preimage
\begin{gather}\label{RestObj}\cF|_{|U|}(\zL):=(\mathcal{F}(p^*_\Lambda ))^{-1}(|U|)\end{gather}
is an open (nuclear) Fr\'echet $\Lambda_{0}$-submanifold of $\mathcal{F}(\Lambda)$.

Consider now a morphism $\varphi^* \colon \Lambda \longrightarrow \Lambda'$ in $\Z_2^n\tt GrAlg$. As $p^*_{\Lambda'} \circ \varphi^*=p^*_\Lambda$, we get the restriction
\begin{gather}\label{RestMor}\cF|_{|U|}(\zf^*):= \mathcal{F}(\varphi^*)\big|_{\cF|_{|U|}(\zL)} \colon \ \cF|_{|U|}(\zL) \longrightarrow \cF|_{|U|}(\zL'),\end{gather}
which is a morphism in $\tt A(N)FMan$.

\begin{Definition}\label{def:Localfunc}For any functor \begin{gather*}\mathcal{F}\in\SN{\Z_2^n\catname{Pts}^{\textnormal{op}}, {\tt A(N)FMan}}\end{gather*} and any open subset $|U| \subset \mathcal{F}(\R)$, the \emph{restriction of $\mathcal{F}$ to $|U|$} is the functor \begin{gather*}\cF|_{|U|}\in\SN{\Z_2^n\catname{Pts}^{\textnormal{op}}, {\tt A(N)FMan}} \end{gather*} that is defined by equations~\eqref{RestObj} and~\eqref{RestMor}.\end{Definition}

\begin{Example}\label{ExampleRest}Let $M\in\Z_2^n\tt Man$, let $M(-)$ be the corresponding functor, and let $|U|\subset|M|\simeq M(\R)$ be an open subset. The restriction $M(-)|_{|U|}$ is given:
\begin{enumerate}\itemsep=0pt
\item[(i)] on objects $\zL$, by \begin{gather}\label{=1}M(-)|_{|U|}(\zL):=\{(x,m^*_\star)\in M(\zL)\colon (x,p^*_\zL\circ m^*_\star)\simeq x\in|U|\}=U(\zL),\end{gather} where $U=(|U|,\cO_M|_{|U|})$ is the open $\Z_2^n$-submanifold of $M$ over $|U|$, and
\item[(ii)] on morphisms $\varphi^*\colon \zL\to\zL'$, by \begin{gather}\label{=2}M(-)|_{|U|}(\varphi^*):=M(\varphi^*)|_{U(\zL)}=U(\varphi^*),\end{gather} since both maps are given by \begin{gather*}U(\zL)\ni(x,m^*_\star)\mapsto (x,\varphi^*\circ m^*_\star)\in U(\zL').\end{gather*}
\end{enumerate}
\end{Example}

Let $\cF$ be representable, let $M$ be `its' representing $\Z_2^n$-manifold, and let \begin{gather}\label{Sim1}\zh\colon \ \cF\to M(-)\end{gather} be the corresponding natural isomorphism in $\SN{\Z_2^n\catname{Pts}^{\textnormal{op}}, {\tt A(N)FMan}}$. The maps $\zh_\zL$ and $\zh^{-1}_\zL$ are then $\zL_0$-smooth, i.e., $\zh_\zL$ is a $\zL_0$-diffeomorphism, for any~$\zL$. In particular, the map $\zh_\R\colon \cF(\R)\to M(\R)$ is a diffeomorphism of (nuclear) Fr\'echet manifolds. This means that the (nuclear) Fr\'echet structures on $\cF(\R)\simeq M(\R)$ coincide. Further, if one identifies $\cF(\R)\simeq M(\R)$ with $|M|,$ the (nuclear) Fr\'echet structure on $\cF(\R)\simeq M(\R)$ coincides with the smooth structure on $|M|$. We can therefore view $\cF(\R)$ as being the smooth manifold $|M|$. Consider now a $\Z_2^n$-atlas $(U_\za,h_\za)_\za$ of $M$. If we denote the dimension of $M$ by $p|\ul q$, the $\Z_2^n$-chart map $h_\za$ is a $\Z_2^n$-isomorphism \begin{gather*}h_\za\colon \ U_\za\to \cU_\za^{p|\ul q}\end{gather*} valued in a $\Z_2^n$-domain of dimension $p|\ul q$, which implies that \begin{gather}\label{Sim2}h_\za\colon \ U_\za(-)\to \cU_\za^{p|\ul q}(-)\end{gather} is a natural isomorphism in $\SN{\Z_2^n\catname{Pts}^{\textnormal{op}}, {\tt A(N)FMan}}$. In view equations \eqref{Sim1}, \eqref{=1}, \eqref{=2}, and~\eqref{Sim2}, the family $(|U_\za|)_\za$ is an open cover of $|M|\simeq\cF(\R)$, such that, for any $\za$, we have \begin{gather*}\cF|_{|U_\za|}\simeq M(-)|_{|U_\za|}=U_\za(-)\simeq \cU_\za^{p|\ul q}(-)\end{gather*} in $\SN{\Z_2^n\catname{Pts}^{\textnormal{op}}, {\tt A(N)FMan}}$.

\begin{Theorem}\label{trm:ReprCon}A functor $\mathcal{F}\in\SN{\Z_2^n\catname{Pts}^{\textnormal{op}}, {\tt A(N)FMan}}$ is representable if and only if there exists an open cover $(|U_\za|)_\za$ of $\mathcal{F}(\R)$, such that, for each $\za$, we have \begin{gather}\label{ReprCond}\mathcal{F}|_{|U_\za|} \simeq \mathcal{U}^{p|\ul{q}}_\za(-)\end{gather} in $\SN{\Z_2^n\catname{Pts}^{\textnormal{op}}, {\tt A(N)FMan}}$, where $ \mathcal{U}^{p|\ul{q}}_\za$ is a $\Z_2^n$-domain in a fixed $\R^{p|\ul{q}}$.
\end{Theorem}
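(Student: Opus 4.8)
The discussion preceding the statement already settles the ``only if'' direction: a representing object $M$ yields, through any $\Z_2^n$-atlas, an open cover of $\cF(\R)\simeq|M|$ satisfying \eqref{ReprCond}. So the plan is to concentrate on the ``if'' direction and reconstruct a $\Z_2^n$-manifold from the local data. First I would set $|M|:=\cF(\R)$ and observe that, evaluating the isomorphism \eqref{ReprCond} at $\R$, each $\cF|_{|U_\za|}(\R)$ is diffeomorphic to $\cU_\za^{p}=\cU_\za^{p|\ul q}(\R)$, an open subset of $\R^p$; hence $|M|$ acquires the structure of a smooth $p$-dimensional manifold, and the $\R$-components of the isomorphisms in \eqref{ReprCond} furnish chart diffeomorphisms $|\psi_\za|\colon|U_\za|\to\cU_\za^p$. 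One must check here that $\cF(\R)$ is second countable and Hausdorff; this is where the point-set hypotheses on the Fr\'echet manifold $\cF(\R)$ enter.

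Next I would produce the coordinate transformations. Writing $g_\za\colon\cF|_{|U_\za|}\stackrel{\sim}{\longrightarrow}\cU_\za^{p|\ul q}(-)$ for the isomorphism \eqref{ReprCond}, and $|U_{\za\zb}|:=|U_\za|\cap|U_\zb|$, the composite $g_\zb\circ g_\za^{-1}$, restricted to the overlaps $V_{\za\zb}:=|\psi_\za|(|U_{\za\zb}|)$, is a natural isomorphism $\cU_\za^{p|\ul q}(-)|_{V_{\za\zb}}\to\cU_\zb^{p|\ul q}(-)|_{V_{\zb\za}}$ in $\SN{\Z_2^n\catname{Pts}^{\textnormal{op}},{\tt A(N)FMan}}$, so each of its $\zL$-components is $\zL_0$-smooth. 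Applying Theorem~\ref{prop:AMorphisms} to this transformation and to its inverse, I obtain a $\Z_2^n$-isomorphism of domains $\psi_{\zb\za}\colon\cU_\za^{p|\ul q}|_{V_{\za\zb}}\to\cU_\zb^{p|\ul q}|_{V_{\zb\za}}$ with base map $|\psi_\zb||\psi_\za|^{-1}$. The cocycle condition $\psi_{\zb\zg}\psi_{\zg\za}=\psi_{\zb\za}$ holds because the associated natural transformations $g_\zb g_\zg^{-1}$, $g_\zg g_\za^{-1}$, $g_\zb g_\za^{-1}$ manifestly compose correctly on triple overlaps and Theorem~\ref{prop:AMorphisms} is faithful. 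By the atlas Definition~\ref{AtlasDefi} these data assemble into a $\Z_2^n$-manifold $M$ in a fixed $\R^{p|\ul q}$.

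It then remains to exhibit a natural isomorphism $\zh\colon\cF\to M(-)$. Let $h_\za\colon U_\za\to\cU_\za^{p|\ul q}$ be the chart maps of the $M$ just built, inducing natural isomorphisms $h_{\za}\colon U_\za(-)\to\cU_\za^{p|\ul q}(-)$, and recall from Example~\ref{ExampleRest} that $M(-)|_{|U_\za|}=U_\za(-)$. I would set $\zh_\za:=h_\za^{-1}\circ g_\za\colon\cF|_{|U_\za|}\to U_\za(-)$. These local isomorphisms glue: on an overlap they agree precisely when $h_\zb h_\za^{-1}=g_\zb g_\za^{-1}$, and by the very construction of $\psi_{\zb\za}$ both sides equal the natural transformation associated with $\psi_{\zb\za}$. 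The resulting $\zh$ is an isomorphism in $\SN{\Z_2^n\catname{Pts}^{\textnormal{op}},{\tt A(N)FMan}}$, since being a $\zL_0$-diffeomorphism and being natural in $\zL$ are both local conditions; hence $M$ represents $\cF$.

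The main obstacle, and the step deserving the most care, is the overlap bookkeeping in the last two paragraphs: one must track the restrictions to $V_{\za\zb}\subset\cU_\za^p$ versus $V_{\zb\za}\subset\cU_\zb^p$, verify that the domain isomorphisms $\psi_{\zb\za}$ extracted via Theorem~\ref{prop:AMorphisms} are genuinely inverse to $\psi_{\za\zb}$ and satisfy the cocycle identity on the nose, and confirm that the glued $\zh$ is natural and invertible in the functor category. A secondary point worth recording is nuclearity: when the target is taken to be ${\tt AFMan}$ rather than ${\tt ANFMan}$, representability forces the values $M(\zL)$ to carry the nuclear Fr\'echet local models of Proposition~\ref{Prop:GAlgFrec}, so that the two versions of the theorem agree on representable functors.
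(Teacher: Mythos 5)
Your proposal is correct and follows essentially the same route as the paper's proof: the ``only if'' direction is delegated to the preceding discussion, charts are extracted from the $\R$-components of the isomorphisms \eqref{ReprCond}, the transition maps are obtained by applying Theorem~\ref{prop:AMorphisms} to $g_\zb g_\za^{-1}$ (with the cocycle condition checked via faithfulness/injectivity of the correspondence), and the representing isomorphism $\zh$ is glued from the local pieces $h_\za^{-1}\circ g_\za$. The only differences are presentational -- you assemble $M$ via the atlas Definition~\ref{AtlasDefi} where the paper glues the pushforward ringed spaces $\big(|U_\za|,\big(|h_\za|^{-1}\big)_*\cO_{\cU_\za^{p|\ul q}}\big)$ directly (so that its transition maps become identities), but these are equivalent by the stated equivalence of the two definitions of a $\Z_2^n$-manifold.
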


\begin{proof}We showed already that the condition is necessary. Assume now that condition~\eqref{ReprCond} is satisfied, i.e., that we have natural isomorphisms \begin{gather*}k_\za\colon \ \cF|_{|U_\za|}\to\cU^{p|\ul q}_\za(-)\end{gather*} in $\SN{\Z_2^n\catname{Pts}^{\textnormal{op}}, {\tt A(N)FMan}}$. This means that the $\zL$-components
\begin{gather*}k_{\za,\Lambda} \colon \ \mathcal{F}|_{|U_\za|}(\Lambda) \rightarrow \mathcal{U}^{p|\ul{q}}_\za(\Lambda)\end{gather*}
are $\Lambda_0$-diffeomorphisms.

In particular, we have a diffeomorphism
\begin{gather*}|h_\za|:=k_{\za,\R} \colon \ \cF|_{|U_\za|}(\R)=(\cF(p^*_\R))^{-1}(|U_\za|)=|U_\za| \rightarrow \mathcal{U}_\za^{p|\ul q}(\R)\simeq \cU_\za^{p}.\end{gather*}
Notice that $(|U_\za|,|h_\za|)_\za$ can be interpreted as a smooth atlas on $|M|:=\cF(\R)$. The direct image of the structure sheaf $\cO_{\cU_\za^{p|\ul q}}$ over~$\cU^p_\za$ by the continuous map $|h_\za|^{-1}\colon \cU_\za^p\to|U_\za|$ is a sheaf over~$|U_\za|$, which we denote by $\cO_{U_\za}$:
\begin{gather*}\cO_{U_\za} := \big({|h_\za|}^{-1}\big)_* \cO_{\mathcal{U}^{p|\ul{q}}_\za}.\end{gather*}
The $\Z_2^n$-ringed space \begin{gather*}U_\za:=(|U_\za|,\cO_{U_\za})\end{gather*} is isomorphic to the $\Z_2^n$-domain $\cU_\za^{p|\ul q}$. The isomorphism is $h_\za:=(|h_\za|,h_\za^*)$, where $h_\za^*$ is the identity map (a composite of direct images is the direct image by the composite). In other words, we have an isomorphism of $\Z^n_2$-manifolds
\begin{gather*}h_\za\colon \ U_\za \rightarrow \mathcal{U}^{p|\ul{q}}_\za.\end{gather*}

Consider now an overlap $|U_{\za\zb}| : = |U_\za| \cap |U_\zb| \neq \varnothing$. Omitting restrictions, we get that $k_\zb k_\za^{-1}$ is a natural isomorphism (in $\SN{\Z_2^n\catname{Pts}^{\textnormal{op}}, {\tt A(N)FMan}}$) \begin{gather*}k_{\zb\za}:=k_\zb k_\za^{-1}\colon \ \cU^{p|\ul q}_{\za\zb}(-)\to \cU^{p|\ul q}_{\zb\za}(-)\end{gather*} between functors corresponding to $\Z_2^n$-domains (defined as usual). In view of Theorem~\ref{prop:AMorphisms}, the natural isomorphism $k_{\zb\za}$ is implemented by a $\Z_2^n$-isomorphism \begin{gather*}k_{\zb\za}\colon \ \cU_{\za\zb}^{p|\ul q}\to \cU_{\zb\za}^{p|\ul q}.\end{gather*} It follows that \begin{gather*}\psi_{\zb\za}:=h_\zb^{-1} k_{\zb\za}h_\za\colon \ U_{\za\zb}\to U_{\zb\za}\end{gather*} is an isomorphism of $\Z_2^n$-manifolds, where $U_{\za\zb}:=(|U_{\za\zb}|,\cO_{U_\za}|_{|U_{\za\zb}|})$. The $\Z_2^n$-manifolds $U_\za$ can thus be glued and provide then a $\Z_2^n$-manifold $M$ over $|M|=\cF(\R)$, such that there are $\Z_2^n$-isomorphisms $(|U_{\za}|,\cO_M|_{|U_\za|})\to U_\za$, if the $\psi_{\zb\za}$ satisfy the cocycle condition.

Since the Schwarz--Voronov embedding is fully faithful, we have that $\psi_{\zg\zb}\psi_{\zb\za}=\psi_{\zg\za}$ as $\Z_2^n$-morphisms if and only if the induced natural transformations coincide. However, for any $\zL$, we get
\begin{gather*}
(\psi_{\zg\zb}\psi_{\zb\za})_\zL=(h_{\zg,\zL})^{-1}k_{\zg,\zL}(k_{\zb,\zL})^{-1}h_{\zb,\zL}(h_{\zb,\zL})^{-1}k_{\zb,\zL}(k_{\za,\zL})^{-1}h_{\za,\zL}=\psi_{\zg\za,\zL}.\end{gather*}

It remains to show that $M$ actually represents $\cF$, i.e., that we can find a natural isomorphism $\zh\colon M(-)\to\cF$ in the category $\SN{\Z_2^n\catname{Pts}^{\textnormal{op}}, {\tt A(N)FMan}}$, i.e., that, for any $\zL\in\Z_2^n\tt GrAlg$, there is a~$\zL_0$-diffeomorphism $\zh_\zL\colon M(\zL)\to\cF(\zL)$ that is natural in~$\zL$. As $(|U_\za|)_\za$ is an open cover of~$|M|$, the source decomposes as \begin{gather*}M(\zL)=\bigcup_\za U_\za(\zL),\end{gather*} the $U_\za(\zL)$ being open (nuclear) Fr\'echet $\zL_0$-submanifolds. On any $U_\za(\zL)$, we define $\zh_\zL$ by setting \begin{gather*}\zh_\zL|_{U_\za(\zL)}:=(k_{\za,\zL})^{-1}h_{\za,\zL}\colon \ U_\za(\zL)\to \cF|_{|U_\za|}(\zL)\subset \cF(\zL).\end{gather*} These restrictions provide a~well-defined map \begin{gather*}\zh_\zL\colon \ M(\zL)\to\cF(\zL).\end{gather*} Indeed, if $(x,m^*_\star)\in U_\za(\zL)\cap U_\zb(\zL)$, we have \begin{gather*}(k_{\za,\zL})^{-1}(h_{\za,\zL}(x,m^*_\star))=(k_{\zb,\zL})^{-1}(h_{\zb,\zL}(x,m^*_\star))\qquad\text{if and only if}\quad \psi_{\zb\za,\zL}(x,m^*_\star)=(x,m^*_\star).\end{gather*} However, since we glued $M$ from the $U_\za$, the gluing $\Z_2^n$-isomorphisms $\psi_{\zb\za}$ became identities and so did the induced natural isomorphisms. The definition of $\zh_\zL^{-1}$ is similar. The source $\cF(\zL)$ decomposes as \begin{gather*}\cF(\zL)=\bigcup_\za\cF|_{|U_\za|}(\zL),\end{gather*} the $\cF|_{|U_\za|}(\zL)$ being open (nuclear) Fr\'echet $\zL_0$-submanifolds. On any $\cF|_{|U_\za|}(\zL)$, we define~$\zh_\zL^{-1}$ by setting \begin{gather*}\zh^{-1}_\zL|_{\cF|_{|U_\za|}(\zL)}:=(h_{\za,\zL})^{-1}k_{\za,\zL}\colon \ \cF|_{|U_\za|}(\zL)\to U_\za(\zL)\subset M(\zL).\end{gather*} The condition for these restrictions to give a well-defined map \begin{gather*}\zh_\zL^{-1}\colon \ \cF(\zL)\to M(\zL)\end{gather*} is equivalent to the condition for $\zh_\zL$. Clearly, the maps $\zh_\zL$ and $\zh_\zL^{-1}$ are inverses. Naturality and $\zL_0$-smoothness are local questions and are therefore consequences of the naturality and the $\zL_0$-smoothness of $(k_{\za,\zL})^{-1}h_{\za,\zL}$ and of $(h_{\za,\zL})^{-1}k_{\za,\zL}$.
\end{proof}

We are now prepared to show that the category $\Z_2^n\tt Man$ is equivalent to a functor category.
\begin{Theorem}The category $\Z_2^n\tt Man$ of $\Z_2^n$-manifolds $($defined as $\Z_2^n$-ringed spaces that are locally isomorphic to $\Z_2^n$-domains$)$ and $\Z_2^n$-morphisms $($defined as morphisms of $\Z_2^n$-ringed spaces$)$ is equivalent to the full subcategory $\SN{\Z_2^n\catname{Pts}^{\textnormal{op}}, {\tt A(N)FMan}}_{\op{rep}}$ of representable functors in $\SN{\Z_2^n\catname{Pts}^{\textnormal{op}},\allowbreak {\tt A(N)FMan}}$.
\end{Theorem}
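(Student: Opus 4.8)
The plan is to deduce this equivalence formally from the full faithfulness of the Schwarz--Voronov embedding, established in Theorem~\ref{thm:SVMembedding}, together with the elementary categorical principle that a fully faithful functor restricts to an equivalence onto its essential image. First I would recall that, by construction, $\SN{\Z_2^n\catname{Pts}^{\textnormal{op}}, {\tt A(N)FMan}}_{\op{rep}}$ is the \emph{full} subcategory of $\SN{\Z_2^n\catname{Pts}^{\textnormal{op}}, {\tt A(N)FMan}}$ whose objects are the representable functors, that is, those $\cF$ for which there exists $M\in\Z_2^n\tt Man$ with $M(-)\simeq\cF$. Since the image $\mathcal{S}(M)=M(-)$ of every $\Z_2^n$-manifold is representable, the embedding corestricts to a functor
\begin{gather*}\mathcal{S}\colon \ \Z_2^n\tt Man\longrightarrow\SN{\Z_2^n\catname{Pts}^{\textnormal{op}}, {\tt A(N)FMan}}_{\op{rep}}.\end{gather*}

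Next I would verify the two defining properties of an equivalence for this corestriction. Full faithfulness is inherited: as the target is a \emph{full} subcategory, the relevant morphism sets are unchanged, so
\begin{gather*}\mathcal{S}_{M,N}\colon \ \Hom_{\Z_2^n\tt Man}(M,N)\longrightarrow \Hom_{\SN{\Z_2^n\catname{Pts}^{\textnormal{op}}, {\tt A(N)FMan}}_{\op{rep}}}(M(-),N(-))\end{gather*}
is still the bijection provided by Theorem~\ref{thm:SVMembedding}. Essential surjectivity is precisely the definition of the target: any object $\cF$ is representable, hence admits a $\Z_2^n$-manifold $M$ and a natural isomorphism $\cF\simeq M(-)=\mathcal{S}(M)$ in $\SN{\Z_2^n\catname{Pts}^{\textnormal{op}}, {\tt A(N)FMan}}$, and this isomorphism, being an isomorphism between objects of the full subcategory, automatically lies in the subcategory. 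Theorem~\ref{trm:ReprCon} moreover identifies this object class with the functors that are locally isomorphic to $\Z_2^n$-domain functors, but that refinement is not needed for the equivalence itself.

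To produce an explicit quasi-inverse I would, using the uniqueness up to unique isomorphism of representing objects, choose for each representable $\cF$ a representing manifold $\mathcal{G}(\cF)$ together with a natural isomorphism $\zh_\cF\colon \mathcal{S}(\mathcal{G}(\cF))\to\cF$; full faithfulness of $\mathcal{S}$ then forces a unique action of $\mathcal{G}$ on morphisms rendering $\zh$ natural, and the standard argument yields $\mathcal{S}\mathcal{G}\simeq\id$ and $\mathcal{G}\mathcal{S}\simeq\id$. I expect no genuine obstacle here, since all the analytic and geometric content has already been carried out: the $\zL_0$-smooth manifold structure on the $\zL$-points (Theorem~\ref{thm:ManStruct}), the gluing construction of a representing $\Z_2^n$-manifold (Theorem~\ref{trm:ReprCon}), and the full faithfulness of $\mathcal{S}$ (Theorem~\ref{thm:SVMembedding}). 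The only point demanding care is the bookkeeping observation that the isomorphisms witnessing representability and essential surjectivity are morphisms \emph{in} $\SN{\Z_2^n\catname{Pts}^{\textnormal{op}}, {\tt A(N)FMan}}$, i.e., natural transformations with $\zL_0$-smooth components; this holds automatically, because an isomorphism in this category is an invertible morphism whose components and their inverses are all $\zL_0$-smooth.
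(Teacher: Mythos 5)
Your proposal is correct and follows essentially the same route as the paper, whose entire proof is the observation that the Schwarz--Voronov embedding, viewed as a functor into $\SN{\Z_2^n\catname{Pts}^{\textnormal{op}}, {\tt A(N)FMan}}_{\op{rep}}$, is fully faithful and essentially surjective, hence an equivalence. Your write-up merely makes explicit the standard details the paper leaves implicit (the corestriction, inheritance of full faithfulness along a full subcategory, essential surjectivity by definition of representability, and the choice-based construction of a quasi-inverse), all of which are sound.
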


In other words, the category $\Z_2^n\tt Man$ is equivalent to the category of locally trivial functors in the subcategory of the functor category $[\Z_2^n\catname{Pts}^{\textnormal{op}}, {\tt A(N)FMan}]$, whose objects $\cF$ have values $\cF(\zL)$ in (nuclear) Fr\'echet $\zL_0$-manifolds and whose morphisms are the natural transformations with $\zL_0$-smooth components.

\begin{Remark} This result is reminiscent of the identification of schemes with those contrava\-riant functors from affine schemes to sets that are sheaves (for the Zariski topology on affine schemes) and have a cover by open immersions of affine schemes.\end{Remark}

\begin{proof} The Schwarz--Voronov embedding viewed as functor valued in $\SN{\Z_2^n\catname{Pts}^{\textnormal{op}}, {\tt A(N)FMan}}_{\op{rep}}$ is obviously fully faithful and essentially surjective. It thus induces an equivalence of categories.
\end{proof}

\appendix

\section{Generating sets of categories}\label{appx:GenSet}

We will freely use Mac~Lane's book \cite{MacLane1998} as our source of categorical notions and proofs of general statements. For completeness, we recall the concept of generating set of a category.
 \begin{Definition}[{\cite[p.~127]{MacLane1998}}]\label{def:GeneratingSet}
Let $\catname{C}$ be a category. A set $S = \{ S_i \in \textnormal{Ob}(\catname{C})\colon i \in I \}$, where $I$ is any index set, is said to be a \emph{generating set} of $\catname{C},$ if, for any pair of distinct $\tt C$-morphisms
\begin{gather*}\phi, \psi \colon \ A \longrightarrow B,\end{gather*}
i.e., $\phi \neq \psi$, there exists some $i \in I$ and a $\tt C$-morphism
\begin{gather*} s \colon \ S_i \longrightarrow A,\end{gather*}
such that the compositions
\begin{gather*} S_i \stackrel{s}{\longrightarrow} A \mathrel{\mathop{ \rightrightarrows}^{\phi}_{\psi}} B \end{gather*}
are distinct, i.e., $\phi \circ s \neq \psi \circ s$. In this case, we say that the object~$ S_i $ \emph{separates} the morphisms~$\phi$ and~$\psi$, and that the set~$S$ \emph{generates} the category~$\catname{C}$.
 \end{Definition}

\begin{Example}The set $\{ \R\}$ is a generating set of the category of finite-dimensional real vector spaces. This is easily seen, as, if we have two distinct linear maps $\phi, \psi \colon V \rightarrow W$, then there exists a vector $v \in V$ ($v \neq 0$), such that $\phi(v) \neq \psi(v)$. Thus, the two linear maps differ on the one-dimensional subspace generated by~$v$. Now let $z$ be a basis of $\R$. Then, the linear map $s\colon \R \rightarrow V$ given by $s(z) = v$, keeps $\phi$ and $\psi$ separate.
\end{Example}

\begin{Proposition}\label{RestYonProp} For any locally small category $\tt C$, a set $S\subset\op{Ob}(\tt C)$ generates $\tt C$ if and only if the restricted Yoneda embedding \begin{gather*}\cY_S\colon \ {\tt C}\to \big[S^{\op{op}},{\tt Set}\big],\end{gather*} where $S$ is viewed as full subcategory of $\tt C$, is faithful.\end{Proposition}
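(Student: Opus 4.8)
The plan is to show that the generating property of $S$ and the faithfulness of $\cY_S$ are, after unwinding the definitions, literally the same quantified statement, the bridge between them being the componentwise description of $\cY_S$ on morphisms. First I would record how $\cY_S$ acts: for a ${\tt C}$-morphism $\phi\colon A\to B$, the image $\cY_S(\phi)$ is the natural transformation $\Hom_{\tt C}(-,A)|_S\Rightarrow\Hom_{\tt C}(-,B)|_S$ whose component at an object $S_i\in S$ is the ${\tt Set}$-map $\Hom_{\tt C}(S_i,A)\ni s\mapsto \phi\circ s\in\Hom_{\tt C}(S_i,B)$; naturality in $S$ is merely associativity of composition, so this is a well-defined object of $[S^{\op{op}},{\tt Set}]$.

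Since two natural transformations coincide precisely when all their components do, this yields the pivotal reformulation: for $\phi,\psi\colon A\to B$ one has $\cY_S(\phi)=\cY_S(\psi)$ if and only if $\phi\circ s=\psi\circ s$ for every $i\in I$ and every $s\colon S_i\to A$. Equivalently, $\cY_S(\phi)\neq\cY_S(\psi)$ if and only if there exist $i$ and $s\colon S_i\to A$ with $\phi\circ s\neq\psi\circ s$.

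With this in hand both implications are immediate. For the forward direction I would assume $S$ generates ${\tt C}$; given distinct $\phi,\psi\colon A\to B$, the generating property supplies $i$ and $s\colon S_i\to A$ separating them, so by the reformulation $\cY_S(\phi)\neq\cY_S(\psi)$, proving that $\phi\mapsto\cY_S(\phi)$ is injective on $\Hom_{\tt C}(A,B)$, i.e., that $\cY_S$ is faithful. For the converse I would assume $\cY_S$ faithful; given distinct $\phi,\psi\colon A\to B$, faithfulness forces $\cY_S(\phi)\neq\cY_S(\psi)$, whence the reformulation produces $i$ and $s\colon S_i\to A$ with $\phi\circ s\neq\psi\circ s$, which is exactly the separation condition making $S$ a generating set.

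There is essentially no genuine obstacle here: the entire content is a matching of the quantifier pattern \emph{``for all distinct $\phi,\psi$ there exist $i$ and $s$ separating them''} appearing in the two definitions. The only points demanding care, which I would state explicitly rather than compute, are that the $S_i$-component of $\cY_S(\phi)$ is exactly post-composition with $\phi$, and that equality of natural transformations is tested componentwise; once these are in place the biconditional follows formally.
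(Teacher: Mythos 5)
Your proof is correct and follows essentially the same route as the paper's: unwind the action of $\cY_S$ on morphisms (component at $S_i$ is post-composition with $\phi$), observe that equality of natural transformations is tested componentwise, and note that the resulting condition is literally the separation condition defining a generating set. The only cosmetic slip is calling $\cY_S(\phi)$ an ``object'' of $[S^{\op{op}},{\tt Set}]$ when it is of course a morphism there; this does not affect the argument.
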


\begin{proof} The restricted embedding is defined on objects by \begin{gather*}\cY_S(A)=\Hom_{\tt C}(-,A)\in\big[S^{\op{op}},\tt Set\big]\end{gather*} and on morphisms by \begin{gather*}\cY_S(\phi)=\Hom_{\tt C}(-,\phi)\colon \ \cY_S(A)\to\cY_S(B),\end{gather*} where \begin{gather*}(\cY_S(\phi))_{S_i}\colon \ \Hom_{\tt C}(S_i,A)\ni s\mapsto \phi\circ s\in\Hom_{\tt C}(S_i,B).\end{gather*} The embedding $\cY_S$ is faithful if and only if, for any different $\phi,\psi\colon A\to B$, the corresponding natural transformations are distinct, i.e., there is at least one $i\in I$ and one $s\in\Hom_{\tt C}(S_i,A)$, such that $\phi\circ s\neq\psi\circ s.$\end{proof}

\section{Fr\'{e}chet spaces, modules and manifolds} \label{appx:AManifolds}

Manifolds over algebras $A$, also known as $A$-manifolds, are manifolds for which the tangent spaces are endowed with a module structure over a given finite-dimensional commutative algebra. For details, the reader may consult Shurygin \cite{Shurygin:1996,Shurygin:1999,Shurygin:2002}, and for a discussion of the specific case of (the even part of) Grassmann algebras one may consult Azarmi \cite{Azarmi:2012}. A comprehensive introduction to the subject can be found in the book (in Russian) by Vishnevski\u{\i}, Shirokov, and Shurygin~\cite{Vishnevskii:1985}. The concept needed in this paper is a infinite-dimensional generalisation of an $A$-manifold to the category of Fr\'{e}chet algebras and Fr\'echet manifolds. For an introduction to locally convex spaces, including Fr\'{e}chet vector spaces, we refer the reader to Conway \cite[Chapter~IV]{Conway:1990}, Tr\`{e}ves \cite[Part~I]{Treves:1967}, or Rudin \cite[Chapter 1]{Rudin:1991}. A brief introduction to Fr\'{e}chet algebras can be found in Waelbroeck \cite[Chapter~VII]{Waelbroeck:1971}. For Fr\'{e}chet manifolds, the reader can consult Saunders \cite[Chapter~7]{Saunders:1989} and Hamilton \cite[Part~I.4]{Hamilton:1982}.

\begin{Definition}\label{Appx:FSpace}A \emph{Fr\'{e}chet $($vector$)$ space} is a complete Hausdorff metrizable locally convex topological vector space.
\end{Definition}

There exist a few other, equivalent, definitions of Fr\'echet spaces. The topology on a locally convex space is metrizable if and only if it can be derived from a countable family of semi-norms~$||-||_k $, $k\in\N $. The topology is Hausdorff if and only if the family of semi-norms is separating, i.e., if $||\mathrm{x}||_k =0$, for all $k$, implies $\mathrm{x}=0$. Given such a family of semi-norms, one defines a~translationally invariant metric that induces the topology by setting
\begin{gather*}d(\mathrm{x} , \mathrm{y}) = \sum_{k=0}^\infty 2^{-k}\frac{||\mathrm{x} - \mathrm{y} ||_k}{1+ || \mathrm{x} - \mathrm{y} ||_k},\end{gather*}
for all $\mathrm{x}$ and $\mathrm{y}$.

\begin{Example}Let \looseness=-1 $M = (|M|,\cO)$ be a $\Z_2^n$-manifold. For any open subset $U\subset |M|$, the space~$\cO(U)$ of $\Z_2^n$-functions on $U$ is a Fr\'echet space. An inducing family of semi-norms is given by
\begin{gather*}||f||_{C,D} = \sup_{x\in C}|\zve(D(f))(x)|,\end{gather*}
where $\zve$ is the projection $\ze\colon \cO(U)\to\Ci(U)$ of $\Z_2^n$-functions to base functions, where $C$ is any compact subset of $U$, and where $D$ is any $\Z_2^n$-differential operator over~$U$. Details on the construction of a countable family of semi-norms that is equivalent to~$(||-||_{C,D})_{C,D} $, can be found in the proof of the last lemma in~\cite{Bruce:2018}.
\end{Example}

Given two Fr\'{e}chet spaces $\big(F ,\big(||-||^F_k\big)_{k\in \mathbb N} \big)$ and $\big(G ,\big(||-||^G_k\big)_{k\in \mathbb N} \big)$, a linear map
\begin{gather*}\phi\colon \ F \longrightarrow G\end{gather*}
is continuous if and only if, for every semi-norm $||-||^G_k$, there exists a semi-norm $||-||^F_l$ and a~positive real number $C>0$, such that
\begin{gather*}||\phi(\mathrm{x})||^G_k \leq C ||\mathrm{x}||^F_l,\end{gather*}
for every $\mathrm{x} \in F$. A similar result holds for continuous bilinear maps \begin{gather*}\phi\colon \ F \times G \rightarrow H.\end{gather*} The {\it morphisms of Fr\'{e}chet spaces} are the continuous linear maps, so that the category of Fr\'echet spaces is a full subcategory of the category of topological vector spaces.

What makes Fr\'{e}chet spaces interesting, is the fact that they have just enough structure to define a derivative of a mapping between such spaces. This leads to a meaningful notion of a~smooth map between Fr\'echet spaces, and so much of finite-dimensional differential geometry can be transferred to the infinite-dimensional setting, using Fr\'{e}chet spaces as local models. The well known \emph{G\^{a}teaux $($directional$)$ derivative} is defined as follows.

\begin{Definition}\label{Appx:FGderivative}Let $F$ and $G$ be Fr\'{e}chet spaces and $U \subset F$ be open, and let $\phi\colon U \rightarrow G$ be a (nonlinear) continuous map. Then the \emph{derivative of $\phi$ in the direction of $\mathrm{v} \in F$ at $\mathrm{x} \in U$} is defined as
\begin{gather*}\rmd_\mathrm{x} \phi( \mathrm{v}) := \lim_{t \rightarrow 0}\frac{\phi(\mathrm{x} + t \mathrm{v}) - \phi(\mathrm{x}) }{t} \end{gather*}
provided the limit exists. We say that $\phi$ is \emph{continuously differentiable}, if the limit exists for all $\mathrm{x} \in U$ and $\mathrm{v} \in F$, and if the mapping
\begin{gather*}\rmd \phi \colon \ U \times F \longrightarrow G\end{gather*}
is (jointly) continuous.
\end{Definition}

Higher order derivatives are defined inductively, i.e.,
\begin{gather*}\rmd^{k+1}_\mathrm{x}\phi(\mathrm{v}_1, \mathrm{v}_2, \dots, \mathrm{v}_{k+1}) := \lim_{t \rightarrow 0} \frac{\rmd^k_{\mathrm{x} + t \mathrm{v}_{k+1}} \phi(\mathrm{v}_1, \mathrm{v}_2, \dots, \mathrm{v}_{k}) ~ {-} ~ \rmd^k_{\mathrm{x}}\phi(\mathrm{v}_1, \mathrm{v}_2, \dots ,\mathrm{v}_{k}) }{t}.\end{gather*}
A continuous map $\phi\colon U \rightarrow G$ is then said to be $k$ times continuously differentiable or to be of class $C^k$, if
\begin{gather*}\rmd^k \phi \colon \ U \times F^{\times k}\longrightarrow G \end{gather*}
is continuous (or, more explicitly, if all its derivatives of order $\le k$ exist everywhere and are continuous). If $\zvf$ is of class $C^k$, its derivative $\rmd^k_\mr{x}\zvf(\mr{v}_1,\mr{v}_2,\dots,\mr{v}_k)$ is multilinear and symmetric in $F^{\times k}$~\cite{Sharko}. Furthermore, we say that $\phi$ is \emph{smooth}, if it is of class $C^k$, for all $k$.

\begin{Proposition}\label{partialFrechet}Let $F_1$, $F_2$ be Fr\'echet spaces and let $U\subset F_1\times F_2$ be an open subset. A~continuous map $\phi\colon U \to G$ valued in a Fr\'echet space $G$ is of class $C^1$ if and only if its $($total$)$ derivative \begin{gather*}\op{d}\phi\colon \ U\times (F_1\times F_2)\ni ((f_1,f_2),(\mr{v}_1,\mr{v}_2))\mapsto \op{d}_{(f_1,f_2)}\phi (\mr{v}_1,\mr{v}_2)\in G\end{gather*} is continuous, which is the case if and only if the naturally defined partial derivatives \begin{gather*}\op{d}_{f_1}\phi\colon \ U\times F_1\ni ((f_1,f_2),\mr{v}_1)\mapsto \op{d}_{f_1,(f_1,f_2)}\phi (\mr{v_1})\in G\quad\end{gather*} and \begin{gather*} \op{d}_{f_2}\phi\colon \ U\times F_2\ni((f_1,f_2),\mr{v}_2)\mapsto \op{d}_{f_2,(f_1,f_2)}\phi (\mr{v_2})\in G\end{gather*} are continuous. In this case, we have \begin{gather*}\op{d}_{(f_1,f_2)}\phi (\mr{v}_1,\mr{v}_2) = \op{d}_{f_1,(f_1,f_2)}\phi (\mr{v_1}) + \op{d}_{f_2,(f_1,f_2)}\phi (\mr{v_2}).\end{gather*}
\end{Proposition}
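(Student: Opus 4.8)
The plan is to handle the two biconditionals in turn, noting that almost all of the real work sits in a single application of the fundamental theorem of calculus. The first equivalence~-- that $\phi$ is of class $C^1$ exactly when its total G\^ateaux derivative $\op{d}\phi\colon U\times(F_1\times F_2)\to G$ exists and is jointly continuous~-- is merely the unravelling of the definition of $C^1$ recorded after Definition~\ref{Appx:FGderivative} at $k=1$, so there is nothing to prove there. The substance is the equivalence with joint continuity of the two partial derivatives, together with the decomposition formula. The starting observation is that each partial derivative is a directional derivative along a coordinate subspace: directly from the limit definition one has $\op{d}_{f_1,(f_1,f_2)}\phi(\mr{v}_1)=\op{d}_{(f_1,f_2)}\phi(\mr{v}_1,0)$ and $\op{d}_{f_2,(f_1,f_2)}\phi(\mr{v}_2)=\op{d}_{(f_1,f_2)}\phi(0,\mr{v}_2)$.

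The forward direction is then immediate. If $\op{d}\phi$ exists and is continuous, then each $\op{d}_{f_i}\phi$ is the composite of $\op{d}\phi$ with the continuous linear inclusion of directions $((f_1,f_2),\mr{v}_1)\mapsto((f_1,f_2),(\mr{v}_1,0))$ (respectively $\mapsto((f_1,f_2),(0,\mr{v}_2))$), hence exists and is jointly continuous.

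The converse carries the weight. Assuming both $\op{d}_{f_1}\phi$ and $\op{d}_{f_2}\phi$ exist and are jointly continuous, I would show that the total directional derivative exists and equals the sum of the partials. Fix $(f_1,f_2)\in U$ and $(\mr{v}_1,\mr{v}_2)$, and split the increment
\begin{gather*}
\phi(f_1+t\mr{v}_1,f_2+t\mr{v}_2)-\phi(f_1,f_2)=\big[\phi(f_1+t\mr{v}_1,f_2+t\mr{v}_2)-\phi(f_1,f_2+t\mr{v}_2)\big]+\big[\phi(f_1,f_2+t\mr{v}_2)-\phi(f_1,f_2)\big].
\end{gather*}
Dividing by $t$, the second bracket tends to $\op{d}_{f_2,(f_1,f_2)}\phi(\mr{v}_2)$ by the existence of the second partial derivative. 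For the first bracket I would apply the fundamental theorem of calculus to the curve $s\mapsto\phi(f_1+st\mr{v}_1,f_2+t\mr{v}_2)$, which is of class $C^1$ because $\op{d}_{f_1}\phi$ is continuous, rewriting $\frac{1}{t}$ times that bracket as $\int_0^1\op{d}_{f_1,(f_1+st\mr{v}_1,\,f_2+t\mr{v}_2)}\phi(\mr{v}_1)\,\op{d}s$. As $t\to0$, joint continuity of $\op{d}_{f_1}\phi$ forces the integrand to converge uniformly in $s\in[0,1]$ to $\op{d}_{f_1,(f_1,f_2)}\phi(\mr{v}_1)$, so the integral converges to that value. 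Hence $\op{d}_{(f_1,f_2)}\phi(\mr{v}_1,\mr{v}_2)$ exists and equals $\op{d}_{f_1,(f_1,f_2)}\phi(\mr{v}_1)+\op{d}_{f_2,(f_1,f_2)}\phi(\mr{v}_2)$; being the sum of the two partials precomposed with the continuous projections $U\times(F_1\times F_2)\to U\times F_i$, it is jointly continuous, so $\phi$ is $C^1$. This simultaneously proves the decomposition formula; by uniqueness of the limit, the same formula then also holds in the forward case, where the total derivative was assumed to exist.

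The main obstacle will be the single integral step, and it is really a matter of importing the right Fr\'echet-calculus facts from Hamilton~\cite{Hamilton:1982}: the fundamental theorem of calculus $c(1)-c(0)=\int_0^1 c'(s)\,\op{d}s$ for a $C^1$ curve $c$ valued in $G$, which is available because Fr\'echet spaces are complete so that Riemann integrals of continuous $G$-valued curves exist; and the interchange of the limit $t\to0$ with the integral, justified by turning the pointwise limit into a uniform limit over the compact interval $[0,1]$ via joint continuity of the partial derivative. The only point demanding care is checking that the auxiliary one-parameter curves are genuinely $C^1$ so that the theorem applies~-- which is exactly what continuity of the partial derivatives provides. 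Everything else reduces to bookkeeping with the limit definition of the G\^ateaux derivative.
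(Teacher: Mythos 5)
Your proof is correct; note, though, that the paper offers no proof to compare it with: Proposition~\ref{partialFrechet} is stated in Appendix~\ref{appx:AManifolds} as standard background on G\^ateaux calculus in Fr\'echet spaces, the reader being referred to the literature (in particular Hamilton~\cite{Hamilton:1982}), and your argument is in substance the classical one from that source. You rightly treat the first equivalence as definitional (the paper's Definition~\ref{Appx:FGderivative} and the subsequent definition of class $C^k$ say precisely that $C^1$ means existence plus joint continuity of $\op{d}\phi$), you dispatch the forward implication by restricting to the directions $(\mr{v}_1,0)$ and $(0,\mr{v}_2)$, and you concentrate the work in the converse, where the increment splitting together with the identity
\begin{gather*}
\frac{1}{t}\big(\phi(f_1+t\mr{v}_1,f_2+t\mr{v}_2)-\phi(f_1,f_2+t\mr{v}_2)\big)=\int_0^1 \op{d}_{f_1,(f_1+st\mr{v}_1,f_2+t\mr{v}_2)}\phi(\mr{v}_1)\,\op{d}s
\end{gather*}
(fundamental theorem of calculus for $C^1$ curves valued in a complete locally convex space) does exactly what is needed. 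Two steps you leave implicit are routine but should be said explicitly: (i)~for $|t|$ small enough the compact segment $\{(f_1+st\mr{v}_1,f_2+t\mr{v}_2)\colon s\in[0,1]\}$ lies in $U$~-- by the tube lemma applied to $[0,1]\times\{0\}$ inside the open preimage of $U$~-- so the auxiliary curve is defined and, by joint continuity of $\op{d}_{f_1}\phi$ and homogeneity of the G\^ateaux derivative in the direction, of class $C^1$; (ii)~the uniform convergence of the integrand as $t\to 0$ follows, semi-norm by semi-norm, from uniform continuity of the continuous map $(s,t)\mapsto\op{d}_{f_1,(f_1+st\mr{v}_1,f_2+t\mr{v}_2)}\phi(\mr{v}_1)$ on a compact rectangle, after which the estimate $\|\int_0^1 h(s)\,\op{d}s\|_k\le\int_0^1\|h(s)\|_k\,\op{d}s$ passes the limit through the integral. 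Your closing remark is also sound: uniqueness of limits transfers the decomposition formula to the forward case, where it can alternatively be deduced from the linearity of $\op{d}_{\mr{x}}\phi$ in the direction, which the paper records for maps of class $C^1$.
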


The G\^ateaux or Fr\'{e}chet--G\^{a}teaux derivative gives a rather weak notion of differentiation, however, most of the standard results from calculus in the finite-dimensional setting remain true. Specifically, the fundamental theorem of calculus and the chain rule still hold. However, the inverse function theorem is in general lost. For a special class of Fr\'{e}chet spaces, known as `tame' Fr\'{e}chet spaces, there is an analogue of the inverse function theorem known as the Nash--Moser inverse function theorem, see Hamilton~\cite{Hamilton:1982} for details.

A \emph{nuclear space} is a locally convex topological vector space $F$, such that, for any locally convex topological vector space $G$, the natural map $F\widehat{\0}_\zp G\longrightarrow F\widehat{\0}_\iota G$ from the projective to the injective tensor product of $F$ and $G$ is an isomorphism of locally convex topological vector spaces. In particular, a {\it nuclear Fr\'echet space} is a locally convex topological vector space that is a nuclear space and a Fr\'echet space. Loosely, if a space $F$ is nuclear, then, for any locally convex space~$G$, the complete topological vector space $F\widehat{\0} G$ is independent of the locally convex topology considered on $F\0 G$. Because of this, and their nice dual properties, nuclear spaces provide a~reasonable setting for infinite-dimensional analysis. All the Fr\'{e}chet spaces we encounter in this paper are in fact nuclear.

The following definition is standard.
\begin{Definition}\label{Appx:FAlgebra} A \emph{Fr\'echet algebra} is a Fr\'echet vector space $A$, which is equipped with an associative bilinear and (jointly) continuous multiplication $\cdot\colon A\times A\to A $. If $(p_i)_{i\in I}$ is a family of semi-norms that induces the topology on $A$, (joint) continuity is equivalent to the existence, for any $i\in I$, of $j\in I$, $k\in I$, and $C>0$, such that \begin{gather*}p_i(x\cdot y)\le C p_j(x) p_k(y), \qquad \forall\, x,y\in A.\end{gather*} We can always consider an equivalent increasing countable family of semi-norms $(||-||_n)_{n\in\N}$. The preceding condition then requires that, for any $n\in\N$, there is $r\in\N$ $(r\ge n)$ and $C>0$, such that \begin{gather*}||x\cdot y||_n\le C ||x||_r ||y||_r, \qquad \forall\, x,y\in A.\end{gather*} In particular, the topology can be induced by a countable family of submultiplicative semi-norms, i.e., by a family $(q_n)_{n\in\N}$ that satisfies \begin{gather*}q_n(x\cdot y)\le q_n(x) q_n(y), \qquad \forall\, n\in\N,\quad \forall\, x,y\in A.\end{gather*} \end{Definition}

Note that many authors define a Fr\'echet algebra simply as a Fr\'echet vector space, which carries an associative bilinear multiplication, and whose topology can be induced by a countable family of submultiplicative semi-norms. This latter definition is equivalent to the former.

In general, a Fr\'{e}chet algebra need not be unital, and, if it is, one does not require $p_i(1_A) =1$, in contrast to what is usually required for Banach algebras.

\begin{Example}[formal power series]Consider the space \begin{gather*}\R[[z_1, z_2, \dots , z_q]] \end{gather*} of formal power series in $q$ parameters and coefficients in reals. We set $j:= (j_1, j_2, \dots , j_q) \in \N^{\times q} $ and $|j|:= j_1 +j_2 + \dots + j_q$. A general series $\mathrm{x}$ now reads
\begin{gather*} \mathrm{x} = \sum_{j} z^j a_j = \sum_{j} z_1^{j_1} z_2^{j_2} \cdots z_q^{j_q} a_{j_q \dots j_2 j_1},\end{gather*}
with no question on the convergence. The algebra structure is the standard multiplication of formal power series. The topology of coordinate-wise convergence is metrizable and given by the family of semi-norms
\begin{gather*}||\mathrm{x}||_k := \sum_{ |j| \leq k} |a_j|, \qquad \forall\, k \in \N.\end{gather*}
This algebra is unital with the obvious unit, and it is submultiplicative.
\end{Example}

Let us denote the category of \emph{Fr\'{e}chet algebras} (resp., {\it commutative Fr\'echet algebras}) as $\tt FAlg$ (resp., $\catname{CFAlg}$). Morphisms in this category are defined to be continuous algebra morphisms. If we restrict attention to nuclear Fr\'{e}chet algebras (resp., commutative nuclear Fr\'echet algebras), then we work in the full subcategory $\tt NFAlg$ (resp.,~$\catname{CNFAlg}$).

\begin{Definition}\label{Appex:FModule}Fix $A \in \catname{FAlg}$. A \emph{Fr\'{e}chet $A$-module} is a Fr\'{e}chet vector space $F$, together with a continuous action
\begin{align*}
 A \times F & \stackrel{\mu}{\longrightarrow} F,\\
 (a, \mathrm{v}) & \mapsto \mu(a,\mathrm{v}) ,
\end{align*}
which we will write as $\mu(a,\mathrm{v}) := a \cdot \mathrm{v}$ (and which is of course compatible with the multiplication in $A$).
\end{Definition}

We give a short survey on Fr\'echet manifolds.
\begin{Definition}\label{Appx:FChart0}Let $\cM$ be a set. An $F$-chart of $\cM$ is a bijective map $\phi\colon U\to\phi(U)\subset F$, where $U\subset\cM$ and $\phi(U)$ is an open subset of a Fr\'echet space $F$.
\end{Definition}

A Fr\'echet atlas can be defined using charts valued in various Fr\'echet spaces. For our purposes, it is sufficient to consider a fixed Fr\'echet model.

\begin{Definition}\label{Appx:FChart}
 A \emph{smooth $F$-atlas} on a set $\mathcal{M}$ is a collection of $F$-charts $((U_\alpha, \phi_\alpha))_{\alpha \in \mathcal{A}}$, such that
\begin{enumerate}\itemsep=0pt
\item[(i)] the subsets $U_\alpha$ cover the set $\mathcal{M}$,
\item[(ii)] the subsets $\phi_{\alpha}(U_\alpha \cap U_\beta)$ are open in $F$,
\item[(iii)] the transition maps
\begin{gather*} \phi_{ \beta \alpha} := \phi_\beta \circ \phi_\alpha^{-1} \colon \ \phi_\alpha (U_\alpha \cap U_\beta )\subset F \longrightarrow \phi_\beta (U_\zb \cap U_\za )\subset F\end{gather*}
are smooth.
\end{enumerate}
\end{Definition}

A new $F$-chart $(U, \phi)$ on $\mathcal{M}$ is \emph{compatible} with a given smooth $F$-atlas, if and only if their union is again a smooth $F$-atlas, i.e., the subsets $\phi(U\cap U_\za)\subset F$ and $\phi_\za(U_\za\cap U)\subset F$ are open, and the transition maps
\begin{gather*} \phi_\alpha \circ \phi^{-1} \colon \ \phi(U \cap U_\alpha ) \longrightarrow \phi_\alpha(U_\za \cap U)\qquad\text{and}\qquad\phi \circ \phi_\za^{-1} \colon \ \phi_\za(U_\za \cap U) \longrightarrow \phi(U \cap U_\alpha ) \end{gather*}
are smooth (for every $\alpha \in \mathcal{A}$). Similarly, two smooth $F$-atlases are compatible provided their union is also a smooth $F$-atlas. Compatibility is an equivalence relation on all possible smooth $F$-atlases on~$\mathcal{M}$.

\begin{Definition}\label{Appx:Fmanifold}
A \emph{smooth $F$-structure} on a set $\mathcal{M}$ is a choice of an equivalence class of smooth $F$-atlases on $\mathcal{M}$. We say that $\mathcal{M}$ is a \emph{Fr\'{e}chet manifold} modelled on the Fr\'{e}chet space~$F$, if $\cM$ comes equipped with a smooth $F$-structure. If the model vector space $F$ is nuclear, we speak of a \emph{nuclear Fr\'{e}chet manifold}.
\end{Definition}

A smooth $F$-atlas on a Fr\'echet manifold $\cM$ allows us to define in the obvious way a topo\-logy on $\cM$, which is independent of the atlas considered in the chosen equivalence class. The domain~$U$ of an $F$-chart $(U,\phi)$ is open in this topology and the bijective map $\phi\colon U\subset\cM\to \phi(U)\subset F$ is a homeomorphism for the induced topologies. Most authors confine themselves to Fr\'echet manifolds, whose topology is Hausdorff.

{\it Morphisms between two Fr\'{e}chet manifolds} are the {\it smooth maps} between them, where smoothness is defined, just as in the finite-dimensional case of smooth manifolds, in terms of charts and smoothness of local representatives of the maps. We denote the category of Fr\'echet manifolds and the morphisms between them by~$\tt FMan$.

Further, the {\it tangent space} $\sT_f\cM$ to a Fr\'echet manifold $\cM$ at a point $f\in\cM$ can be defined as usual, using the tangency equivalence relation for the smooth curves of $\cM$ that pass through~$f$ at time~$0$. One can easily see that $\sT_f\cM$ is a Fr\'echet space.
The concept of {\it Fr\'echet vector bundle} is the natural generalization of the notion of smooth vector bundle to the category of Fr\'echet manifolds. The {\it tangent bundle} $\sT\cM$ of a Fr\'echet manifold $\cM$ is an example of a Fr\'echet vector bundle.

In general, we must make a distinction between the {\it $( $kinematic$ )$ tangent bundle} as defined here and the \emph{operational tangent bundle} defined in terms of derivations of the algebra of functions of a~Fr\'{e}chet manifold. Indeed, the two notions do not, in general, coincide, there are derivations that do not correspond to tangent vectors. However, it is known that for nuclear Fr\'{e}chet manifolds the two concepts do coincide.

Let $\mathfrak{F}\colon \cM\to\mathcal{N}$ be a smooth map between Fr\'echet manifolds modelled on Fr\'echet spaces $F$ and $G$, respectively. There is a {\it tangent map} $\sT \mathfrak{F}$ of $\mathfrak{F}$, which is a smooth map \begin{gather*} \sT \mathfrak{F}\colon \ \sT\cM\to\sT\mathcal{N},\end{gather*} and restricts, for any $f\in\cM$, to a linear map \begin{gather*} \sT_f\mathfrak{F}\colon \ \sT_f\cM\to\sT_{\mathfrak{F}(f)}\mathcal{N}.\end{gather*} As in the finite-dimensional case, the local representative of $\sT_f\mathfrak{F}$ is the derivative $\op{d}_{\phi(f)}\big(\psi\mathfrak{F}\phi^{-1}\big)$ of the corresponding local representative
\begin{gather*} \psi\mathfrak{F}\phi^{-1}\colon \ \phi(U)\subset F\to G\end{gather*} of $\mathfrak{F}$ at the point~$\phi(f)$.

Fundamental to the work in this paper are Fr\'{e}chet manifolds with a further module structure on their tangent bundle.
\begin{Definition}\label{Appx:Aman}
Let $\mathcal{M}$ be a Fr\'{e}chet manifold, whose model Fr\'echet space $F$ is a module over a Fr\'{e}chet algebra $A$. We say that $\cM$ is a \emph{Fr\'{e}chet A-manifold}, if and only if all transition maps are $A$-linear, i.e.,
\begin{gather*} \rmd_{\phi_{\za}(f)} \phi_{\beta \alpha}(a \cdot \mathrm{v}) = a \cdot \rmd_{\phi_{\za}(f)} \phi_{\zb\za}( \mathrm{v}) ,\end{gather*}
for all $f \in U_\alpha \cap U_\beta$, $a \in A$, and $ \mathrm{v} \in F$.
\end{Definition}

{\it Morphisms between Fr\'{e}chet $A$-manifolds} $\mathcal{M}$ and $\mathcal{N}$ are the \emph{$A$-smooth} maps between them, i.e., are the smooth maps $\mathfrak{F}\colon \mathcal{M} \rightarrow \mathcal{N}$ that are $A$-linear at every point. This means that, for any point $f\in\cM$, there is an $\cM$-chart $(U,\phi)$ around $f$ and an $\mathcal N$-chart $(V,\psi)$ around $\mathfrak{F}(f)$ that contains $\mathfrak{F}(U)$, such that the local representative
\begin{gather*} \op{d}_{\phi(f)}\big(\psi\mathfrak{F}\phi^{-1}\big)\end{gather*} of the derivative $\sT_f\mathfrak{F}$ is an $A$-linear endomorphism of the $A$-module $F$. The requirement actually means that the derivative $\sT_f\mathfrak{F}$ must be $A$-linear at any point $f\in\cM$. In this way, we obtain the category of Fr\'{e}chet $A$-manifolds, which we denote as $A\catname{FMan}$.

In this paper, we will use the category ${\tt A}\catname{FMan}$, whose objects are the Fr\'{e}chet $A$-manifolds, where $A$ is not a fixed Fr\'echet algebra, but any Fr\'echet algebra. The definition of ${\tt A}\catname{FMan}$-morphisms generalizes the definition of $A\catname{FMan}$-morphisms. Suppose that $\mathcal{M}$ is a Fr\'{e}chet $A$-manifold modelled on an $A$-module $F$ and $\mathcal{N}$ is a Fr\'{e}chet $B$-manifold modelled on a $B$-module~$G$. The ${\tt A}\catname{FMan}$-morphisms from $\cM$ to $\mathcal N$ are the ${\tt A}$-smooth maps between them, i.e., those smooth maps $\mathfrak{F}\colon \mathcal{M} \rightarrow \mathcal{N}$ that are at any point compatible with the module structures of $F$ and $G$. This means that there is a Fr\'echet algebra morphism $\zr\colon A\to B$, and, for any $f\in\cM$, there exist charts $(U,\phi)$ and $(V,\psi)$ as above, such that \begin{gather*} \op{d}_{\phi(f)}\big(\psi\mathfrak{F}\phi^{-1}\big)(a\cdot \mr{v})=\zr(a)\cdot\op{d}_{\phi(f)}\big(\psi\mathfrak{F}\phi^{-1}\big)(\mr{v}),\end{gather*} for any $a \in A$ and $ \mathrm{v} \in F$. This requirement actually means that, for any~$f$, the derivative $\sT_f\mathfrak{F}$ is compatible with the induced actions on the tangent spaces. We will refer to an $\tt A$-smooth map with associated Fr\'echet algebra morphism $\zr$, as a $\zr$-smooth map. If we restrict our attention to nuclear objects, i.e., the model Fr\'{e}chet vector space and the Fr\'{e}chet algebra are both nuclear, then we denote the corresponding category as ${\tt A}\catname{NFMan}$.

\subsection*{Acknowledgements}
The authors cordially thank the anonymous referees for their valuable remarks and comments, which have served to improve this article, as well as for their suggestions for future research.

\pdfbookmark[1]{References}{ref}
\LastPageEnding


\begin{thebibliography}{99}
\footnotesize\itemsep=0pt

\bibitem{Aizawa:2018}
Aizawa N., Isaac P.S., Segar J., {${\mathbb Z}_2\times{\mathbb Z}_2$}
 generalizations of {${\mathcal N}=1$} superconformal {G}alilei algebras and
 their representations, \href{https://doi.org/10.1063/1.5054699}{\textit{J.~Math. Phys.}} \textbf{60} (2019), 023507,
 11~pages, \href{https://arxiv.org/abs/1808.09112}{arXiv:1808.09112}.

\bibitem{Aizawa:2016}
Aizawa N., Kuznetsova Z., Tanaka H., Toppan F.,
 {$\mathbb{Z}_2\times\mathbb{Z}_2$}-graded {L}ie symmetries of the
 {L}\'{e}vy-{L}eblond equations, \href{https://doi.org/10.1093/ptep/ptw176}{\textit{Prog. Theor. Exp. Phys.}} (2016),
 123A01, 26~pages, \href{https://arxiv.org/abs/1609.08224}{arXiv:1609.08224}.

\bibitem{Aizawa:2017}
Aizawa N., Segar J., {${\mathbb Z}_2\times{\mathbb Z}_2$} generalizations of
 {${\mathcal N}=2$} super {S}chr\"{o}dinger algebras and their
 representations, \href{https://doi.org/10.1063/1.4986570}{\textit{J.~Math. Phys.}} \textbf{58} (2017), 113501,
 14~pages, \href{https://arxiv.org/abs/1705.10414}{arXiv:1705.10414}.

\bibitem{Albuquerque:1999}
Albuquerque H., Majid S., Quasialgebra structure of the octonions,
 \href{https://doi.org/10.1006/jabr.1998.7850}{\textit{J.~Algebra}} \textbf{220} (1999), 188--224, \href{https://arxiv.org/abs/math.QA/9802116}{arXiv:math.QA/9802116}.

\bibitem{Albuquerque:2002}
Albuquerque H., Majid S., Clifford algebras obtained by twisting of group
 algebras, \href{https://doi.org/10.1016/S0022-4049(01)00124-4}{\textit{J.~Pure Appl. Algebra}} \textbf{171} (2002), 133--148,
 \href{https://arxiv.org/abs/math.QA/0011040}{arXiv:math.QA/0011040}.

\bibitem{Azarmi:2012}
Azarmi S., Foliations associated with the structure of a manifold over a
 {G}rassmann algebra of even degree exterior forms, \href{https://doi.org/10.3103/S1066369X12010112}{\textit{Russian Math.}}
 \textbf{56} (2012), 76--78.

\bibitem{Balduzzi:2010}
Balduzzi L., Carmeli C., Fioresi R., The local functors of points of
 supermanifolds, \href{https://doi.org/10.1016/j.exmath.2009.09.005}{\textit{Expo. Math.}} \textbf{28} (2010), 201--217,
 \href{https://arxiv.org/abs/0908.1872}{arXiv:0908.1872}.

\bibitem{Balduzzi:2013}
Balduzzi L., Carmeli C., Fioresi R., A comparison of the functors of points of
 supermanifolds, \href{https://doi.org/10.1142/S0219498812501526}{\textit{J.~Algebra Appl.}} \textbf{12} (2013), 1250152,
 41~pages, \href{https://arxiv.org/abs/0902.1824}{arXiv:0902.1824}.

\bibitem{Berezin:1987}
Berezin F.A., Introduction to superanalysis, \textit{Mathematical Physics and
 Applied Mathematics}, Vol.~9, \href{https://doi.org/10.1007/978-94-017-1963-6}{D.~Reidel Publishing Co.}, Dordrecht, 1987.

\bibitem{Bruce:2019}
Bruce A.J., On a ${\mathbb Z}_2^n$-graded version of supersymmetry,
 \href{https://doi.org/10.3390/sym11010116}{\textit{Symmetry}} \textbf{11} (2019), 116, 20~pages, \href{https://arxiv.org/abs/1812.02943}{arXiv:1812.02943}.

\bibitem{Bruce:2018a}
Bruce A.J., Ibarguengoytia E., The graded differential geometry of mixed
 symmetry tensors, \href{https://doi.org/10.5817/AM2019-2-123}{\textit{Arch. Math. (Brno)}} \textbf{55} (2019), 123--137,
 \href{https://arxiv.org/abs/1806.04048}{arXiv:1806.04048}.

\bibitem{Bruce:2018}
Bruce A.J., Poncin N., Functional analytic issues in ${\mathbb
 Z}_2^n$-geometry, \textit{Rev. Un. Mat. Argentina}, {t}o appear,
 \href{https://arxiv.org/abs/1807.11739}{arXiv:1807.11739}.

\bibitem{Bruce:2018b}
Bruce A.J., Poncin N., Products in the category of {$\mathbb Z^n_2$}-manifolds,
 \href{https://doi.org/10.1080/14029251.2019.1613051}{\textit{J.~Nonlinear Math. Phys.}} \textbf{26} (2019), 420--453,
 \href{https://arxiv.org/abs/1807.11740}{arXiv:1807.11740}.

\bibitem{Conway:1990}
Conway J.B., A course in functional analysis, 2nd~ed., \textit{Graduate Texts in
 Mathematics}, Vol.~96, \href{https://doi.org/10.1007/978-1-4757-4383-8}{Springer-Verlag}, New York, 2007.

\bibitem{Covolo:2016}
Covolo T., Grabowski J., Poncin N., The category of
 {$\mathbb{Z}_2^n$}-supermanifolds, \href{https://doi.org/10.1063/1.4955416}{\textit{J.~Math. Phys.}} \textbf{57}
 (2016), 073503, 16~pages, \href{https://arxiv.org/abs/1602.03312}{arXiv:1602.03312}.

\bibitem{Covolo:2016a}
Covolo T., Grabowski J., Poncin N., Splitting theorem for
 {$\mathbb{Z}_2^n$}-supermanifolds, \href{https://doi.org/10.1016/j.geomphys.2016.09.006}{\textit{J.~Geom. Phys.}} \textbf{110}
 (2016), 393--401, \href{https://arxiv.org/abs/1602.03671}{arXiv:1602.03671}.

\bibitem{Covolo:2016b}
Covolo T., Kwok S., Poncin N., Differential calculus on ${\mathbb
 Z}_{2}^{n}$-supermanifolds, \href{https://arxiv.org/abs/1608.00949}{arXiv:1608.00949}.

\bibitem{Covolo:2016c}
Covolo T., Kwok S., Poncin N., The {F}robenius theorem for ${\mathbb
 Z}_{2}^{n}$-supermanifolds, \href{https://arxiv.org/abs/1608.00949}{arXiv:1608.00949}.

\bibitem{COP}
Covolo T., Ovsienko V., Poncin N., Higher trace and {B}erezinian of matrices
 over a {C}lifford algebra, \href{https://doi.org/10.1016/j.geomphys.2012.07.004}{\textit{J.~Geom. Phys.}} \textbf{62} (2012),
 2294--2319, \href{https://arxiv.org/abs/1109.5877}{arXiv:1109.5877}.

\bibitem{BPP:KTR}
Di~Brino G., Pi\v{s}talo D., Poncin N., Koszul--{T}ate resolutions as cofibrant
 replacements of algebras over differential operators, \href{https://doi.org/10.1007/s40062-018-0202-x}{\textit{J.~Homotopy
 Relat. Struct.}} \textbf{13} (2018), 793--846, \href{https://arxiv.org/abs/1801.03770}{arXiv:1801.03770}.

\bibitem{BPP:HAC}
Di~Brino G., Pi\v{s}talo D., Poncin N., Homotopical algebraic context over
 differential operators, \href{https://doi.org/10.1007/s40062-018-0213-7}{\textit{J.~Homotopy Relat. Struct.}} \textbf{14}
 (2019), 293--347, \href{https://arxiv.org/abs/1706.05922}{arXiv:1706.05922}.

\bibitem{Druhl:1970}
Dr\"{u}hl K., Haag R., Roberts J.E., On parastatistics, \href{https://doi.org/10.1007/BF01649433}{\textit{Comm. Math.
 Phys.}} \textbf{18} (1970), 204--226.

\bibitem{OInvariants1}
Goodman R., Wallach N.R., Symmetry, representations, and invariants,
 \textit{Graduate Texts in Mathematics}, Vol.~255, \href{https://doi.org/10.1007/978-0-387-79852-3}{Springer}, Dordrecht, 2009.

\bibitem{Green:1953}
Green H.S., A generalized method of field quantization, \href{https://doi.org/10.1103/PhysRev.90.270}{\textit{Phys. Rev.}}
 \textbf{90} (1953), 270--273.

\bibitem{Greenberg:1965}
Greenberg O.W., Messiah A.M.L., Selection rules for parafields and the absence
 of para particles in nature, \href{https://doi.org/10.1103/PhysRev.138.B1155}{\textit{Phys. Rev.}} \textbf{138} (1965),
 B1155--B1167.

\bibitem{Grothendieck:1973a}
Grothendieck A., Introduction to functorial algebraic geometry, Part~1: Affine
 algebraic geometry, {S}ummer School in Buffalo, 1973, Lecture notes by
 Federico Gaeta.

\bibitem{Hamilton:1982}
Hamilton R.S., The inverse function theorem of {N}ash and {M}oser,
 \href{https://doi.org/10.1090/S0273-0979-1982-15004-2}{\textit{Bull. Amer. Math. Soc. (N.S.)}} \textbf{7} (1982), 65--222.

\bibitem{Kolar:1993}
Kol\'{a}\v{r} I., Michor P.W., Slov\'{a}k J., Natural operations in
 differential geometry, \href{https://doi.org/10.1007/978-3-662-02950-3}{Springer-Verlag}, Berlin, 1993.

\bibitem{Konechny:1998}
Konechny A., Schwarz A., On {$(k\oplus l|q)$}-dimensional supermanifolds, in
 Supersymmetry and Quantum Field Theory ({K}harkov, 1997), \textit{Lecture
 Notes in Phys.}, Vol.~509, \href{https://doi.org/10.1007/BFb0105247}{Springer}, Berlin, 1998, 201--206,
 \href{https://arxiv.org/abs/hep-th/9706003}{arXiv:hep-th/9706003}.

\bibitem{Konechny:2000}
Konechny A., Schwarz A., Theory of {$(k\oplus l|q)$}-dimensional
 supermanifolds, \href{https://doi.org/10.1007/PL00001396}{\textit{Selecta Math. (N.S.)}} \textbf{6} (2000), 471--486.

\bibitem{Leites:2011}
Leites D. (Editor), Seminar on supersymmetry, Vol.~1, Algebra and calculus:
 main chapters, MCCME, Moscow, 2011.

\bibitem{MacLane1998}
Mac~Lane S., Categories for the working mathematician, 2nd ed., \textit{Graduate Texts
 in Mathematics}, Vol.~5, \href{https://doi.org/10.1007/978-1-4757-4721-8}{Springer-Verlag}, New York, 1998.

\bibitem{Mohammadi:2017}
Mohammadi M., Salmasian H., The {G}elfand--{N}aimark--{S}egal construction for
 unitary representations of {$\mathbb{Z}_2^n$}-graded {L}ie supergroups, in
 50th {S}eminar ``{S}ophus {L}ie'', \textit{Banach Center Publ.}, Vol.~113,
 \href{https://doi.org/10.4064/bc113-0-14}{Polish Acad. Sci. Inst. Math.}, Warsaw, 2017, 263--274, \href{https://arxiv.org/abs/1709.06546}{arXiv:1709.06546}.

\bibitem{Molotkov:1986}
Molotkov V., Banach supermanifolds, in Differential Geometric Methods in
 Theoretical Physics ({S}humen, 1984), World Sci. Publishing, Singapore, 1986,
 117--125.

\bibitem{Poncin:2016}
Poncin N., Towards integration on colored supermanifolds, in Geometry of Jets
 and Fields, \textit{Banach Center Publ.}, Vol.~110, \href{https://doi.org/10.4064/bc110-0-14}{Polish Acad. Sci. Inst.
 Math.}, Warsaw, 2016, 201--217.

\bibitem{Rittenburg:1978a}
Rittenberg V., Wyler D., Generalized superalgebras, \href{https://doi.org/10.1016/0550-3213(78)90186-4}{\textit{Nuclear Phys.~B}}
 \textbf{139} (1978), 189--202.

\bibitem{Rittenburg:1978b}
Rittenberg V., Wyler D., Sequences of {${\mathbb Z}_{2}\oplus {\mathbb Z}_{2}$}
 graded {L}ie algebras and superalgebras, \href{https://doi.org/10.1063/1.523552}{\textit{J.~Math. Phys.}} \textbf{19}
 (1978), 2193--2200.

\bibitem{Rogers:2007}
Rogers A., Supermanifolds. Theory and applications, \href{https://doi.org/10.1142/9789812708854}{World Sci. Publ. Co. Pte.
 Ltd.}, Hackensack, NJ, 2007.

\bibitem{Rudin:1991}
Rudin W., Functional analysis, 2nd~ed., \textit{International Series in Pure and
 Applied Mathematics}, McGraw-Hill, Inc., New York, 1991.

\bibitem{Sachse:2007}
Sachse C., Global analytic approach to super Teichm\"{u}ller spaces, Ph.D.~Thesis, {U}niversit\"{a}t Leipzig, 2007, \href{https://arxiv.org/abs/0902.3289}{arXiv:0902.3289}.

\bibitem{Saunders:1989}
Saunders D.J., The geometry of jet bundles, \textit{London Mathematical Society
 Lecture Note Series}, Vol.~142, \href{https://doi.org/10.1017/CBO9780511526411}{Cambridge University Press}, Cambridge, 1989.

\bibitem{Schmitt:1997}
Schmitt T., Supergeometry and quantum field theory, or: what is a classical
 configuration?, \href{https://doi.org/10.1142/S0129055X97000348}{\textit{Rev. Math. Phys.}} \textbf{9} (1997), 993--1052,
 \href{https://arxiv.org/abs/hep-th/9607132}{arXiv:hep-th/9607132}.

\bibitem{Schwarz:1982}
Schwarz A.S., Supergravity, complex geometry and {$G$}-structures,
 \href{https://doi.org/10.1007/BF01211055}{\textit{Comm. Math. Phys.}} \textbf{87} (1982), 37--63.

\bibitem{Schwarz:1984}
Schwarz A.S., On the definition of superspace, \href{https://doi.org/10.1007/BF01018248}{\textit{Theoret. and Math.
 Phys.}} \textbf{60} (1984), 657--660.

\bibitem{Sharko}
Sharko V.V., Functions on manifolds, \textit{Translations of Mathematical
 Monographs}, Vol.~131, Amer. Math. Soc., Providence, RI, 1993.

\bibitem{Shurygin:1996}
Shurygin V.V., On the cohomology of manifolds over local algebras,
 \textit{Russian Math.} \textbf{40} (1996), no.~9, 67--81.

\bibitem{Shurygin:1999}
Shurygin V.V., The structure of smooth mappings over {W}eil algebras and the
 category of manifolds over algebras, \textit{Lobachevskii~J. Math.}
 \textbf{5} (1999), 29--55.

\bibitem{Shurygin:2002}
Shurygin V.V., Smooth manifolds over local algebras and {W}eil bundles,
 \href{https://doi.org/10.1023/A:1012848404391}{\textit{J.~Math. Sci.}} \textbf{108} (2002), 249--294.

\bibitem{TVI}
To\"{e}n B., Vezzosi G., Homotopical algebraic geometry. {I}.~{T}opos theory,
 \href{https://doi.org/10.1016/j.aim.2004.05.004}{\textit{Adv. Math.}} \textbf{193} (2005), 257--372, \href{https://arxiv.org/abs/math.AG/0207028}{arXiv:math.AG/0207028}.

\bibitem{TVII}
To\"{e}n B., Vezzosi G., Homotopical algebraic geometry. {II}.~{G}eometric
 stacks and applications, \href{https://doi.org/10.1090/memo/0902}{\textit{Mem. Amer. Math. Soc.}} \textbf{193} (2008),
 x+224~pages, \href{https://arxiv.org/abs/math.AG/0404373}{arXiv:math.AG/0404373}.

\bibitem{Tolstoy:2013}
Tolstoy V.N., Super-de {S}itter and alternative super-{P}oincar\'{e}
 symmetries, in Lie Theory and its Applications in Physics, \textit{Springer
 Proc. Math. Stat.}, Vol.~111, \href{https://doi.org/10.1007/978-4-431-55285-7_26}{Springer}, Tokyo, 2014, 357--367,
 \href{https://arxiv.org/abs/1610.01566}{arXiv:1610.01566}.

\bibitem{Treves:1967}
Tr\`eves F., Topological vector spaces, distributions and kernels, Academic
 Press, New York~-- London, 1967.

\bibitem{Vishnevskii:1985}
Vishnevski\u{\i} V.V., Shirokov A.P., Shurygin V.V., Spaces over algebras,
 Kazanski\u{\i} Gosudarstvenny\u{\i} Universitet, Kazan, 1985.

\bibitem{Voronov:1984}
Voronov A.A., Maps of supermanifolds, \href{https://doi.org/10.1007/BF01018249}{\textit{Theoret. and Math. Phys.}}
 \textbf{60} (1984), 660--664.

\bibitem{Waelbroeck:1971}
Waelbroeck L., Topological vector spaces and algebras, \textit{Lecture Notes in
 Math.}, Vol.~230, \href{https://doi.org/10.1007/BFb0061234}{Springer-Verlag}, Berlin~-- New York, 1971.

\bibitem{Weil:1953}
Weil A., Th\'{e}orie des points proches sur les vari\'{e}t\'{e}s
 diff\'{e}rentiables, in G\'{e}om\'{e}trie diff\'{e}rentielle. {C}olloques
 {I}nternationaux du {C}entre {N}ational de la {R}echerche {S}cientifique,
 {S}trasbourg, 1953, Centre National de la Recherche Scientifique, Paris,
 1953, 111--117.

\bibitem{OInvariants2}
Weyl H., The classical groups. Their invariants and representations, \textit{Princeton
 Landmarks in Mathematics}, Princeton University Press, Princeton, NJ, 1997.

\bibitem{Yang:2001}
Yang W., Jing S., A new kind of graded {L}ie algebra and parastatistical
 supersymmetry, \href{https://doi.org/10.1007/BF02877435}{\textit{Sci. China Ser.~A}} \textbf{44} (2001), 1167--1173,
 \href{https://arxiv.org/abs/math-ph/0212004}{arXiv:math-ph/0212004}.

\end{thebibliography}
\end{document}